\definecolor{airforceblue}{rgb}{0.46, 0.54, 0.76}
\definecolor{battleshipgrey}{rgb}{0.52, 0.52, 0.51}
\definecolor{charcoal}{rgb}{0.21, 0.27, 0.31}
\newtheorem{lemma}{{\bf \sc Lemma}}
\newtheorem{example}{{\bf \sc Example}}
\newtheorem{proposition}{{\bf \sc Proposition}}
\newtheorem{fact}{{\bf \sc Fact}}
\def\E{{\mathrm{E}}}
\DeclareRobustCommand{\ubar}[1]{\underaccent{\bar}{#1}}
\newcommand\supp{{\mathrm{support}}}
\newcommand\pool{{\mathrm{pool}}}
\begin{document}


\title{The Role of Referrals in Immobility, Inequality, and Inefficiency in Labor Markets}
\author{Lukas Bolte, Nicole Immorlica and Matthew O. Jackson}
\date{
}
\maketitle

\begin{abstract}
We study the consequences of job markets' heavy reliance on referrals. Referrals lead to more opportunities for workers to be hired, which lead to better matches and increased productivity, but also disadvantage job-seekers with few or no connections to employed workers, increasing inequality.  Coupled with homophily, referrals also lead to immobility. We identify conditions under which distributing referrals more evenly reduces inequality and improves future productivity and mobility. We use the model to examine the short and long-run welfare impacts of policies such as affirmative action and algorithmic fairness.\footnote{We gratefully acknowledge support under NSF grants SES-1629446 and SES-2018554, as well as the Simons Institute for the Theory of Computing. We thank seminar participants from the AEA meetings, Cornell University, CREST, Econometric Society World Congress, European Meeting on Networks, Gamesnet, HSE Moscow, INET Networks Group, Lancaster University, NSF-Network Science in Economics Conference, University of Padova, U Penn, Penn State, Stockholm School of Economics, Toulouse School of Economics, Uppsala University, as well as Nageeb Ali, Francis Bloch, Ozan Candogan, Matt Elliott, and Moritz Meyer-ter-Vehn for helpful comments.}
\end{abstract}

\section{Introduction}

In 2012, Alan Krueger famously demonstrated the ``Great Gatsby Curve,'' which showed that countries with higher income inequality also tend to have higher immobility (i.e., a higher correlation between child's and parents' income). Thus, inequality within a generation is accompanied by immobility across generations. Understanding the common sources of inequality and immobility is essential to designing policies that deal not only with their symptoms but also their causes.\footnote{See \cite{corak2016,jackson2019,jackson2021} for additional discussion and references on inequality and its relationship to immobility.} Although much attention has been given to the role of wealth and family in the accumulation of human capital, less attention has been devoted to modeling the role of social capital---the connections that people have---in driving inequality and immobility. This is despite the fact that it has been shown that the role of people's social capital can be enormous.  For instance, \cite{chettyetal2022I} have recently shown that a specific form of social capital---economic connectedness---completely mediates the Gatsby Curve in US data. That is, once one accounts for the presence of friendships across income levels, the relationship between inequality and immobility becomes insignificant: the correlation operates via the presence/absence of cross-class friendships.\footnote{See Columns (5) and (6) in the second tier of Table~2 (the row labeled ``Income Inequality (Gini Coefficient)'').  Note also that the R$^2$ more than doubles.
See also \cite{laschever2013,sacerdote2001,chettyetal2016,beaman2012} for causal analyses.}

Here, we show how and why the absence of cross-group connections---homophily in social networks---coupled with the reliance of labor markets on referrals in hiring leads to a combination of inequality, economic immobility, and lost productivity. Our model simultaneously explains a  range of patterns seen in the data and also provides new insights into policies that overcome the negative impacts of homophily in referrals.

Our model is built on two key facts.  One is that many workers---of a variety of skill levels---are hired via referrals, and people applying without some connection are at a substantial disadvantage. 
The other is that these connections typically exhibit substantial homophily---the tendency of people to be friends with others with similar characteristics: ethnicity, gender, age, income level, religion, etc. 
Perhaps surprisingly, despite extensive studies of homophily,\footnote{See, e.g., \cite{mcphersonsc2001,currarinijp2009,jackson2019,jackson2021}} as well as the role of referrals in labor markets,\footnote{See, e.g., \cite{myerss1951,reess1970,granovetter1974,granovetter1995,montgomery1991,ioannidesd2004,topa2011,rubineauf2013,lesterrt2021,hederosskp2022}} 
there is little work combining homophily with referrals.  

Motivated by these facts, we introduce a dynamic model with overlapping generations of workers where current workers refer workers from the next generation to their employer.  Importantly, referrals exhibit homophily so that workers are more likely to refer workers of their own group (race, gender, etc.). Firms can either hire a worker referred to them or from the pool of unreferred and previously rejected workers. 

Referrals lower search frictions.   In particular, referrals provide an extra opportunity for workers to get hired and thus convey an advantage to referred workers both in terms of employment and wages.  Homophily coupled with the advantage bestowed by referrals helps us understand why inequality is so strongly tied with immobility: one group's current employment advantage then translates to the next generation via additional opportunities to be hired.\footnote{This is consistent with recent findings of the relationship between cross-class connections and economic mobility (e.g., \cite{chettyetal2022I,chettyetal2022II}).}  This also has
productivity implications, as people in one group may get multiple referrals while others in another group get none, which leads to fewer overall hiring opportunities and, thus, less efficient matching.  These are the essential components of our model.

We also model an extra feature of referrals, based on extensive empirical evidence.  In particular, in our model, referrals provide information about a potential worker's productivity, which is not available from a match in the pool of open applicants. Thus, referred workers are only hired if their value is better than the expected value from the open application process, and hence those hired via referrals are more productive, on average, than those hired through the open application process.\footnote{For evidence of such an effect see, e.g., \cite{rees1966,fernandezcm2000,brownst2016,pallaiss2016,lesterrt2021}).} Moreover, since homophily leads referrals to be concentrated among a subset of the population when one group starts out with an employment advantage, it follows that fewer workers with high productivities are being found and employed via the vetting that referrals provide.  Thus, this information asymmetry further amplifies the inequality, and also can further impact overall productivity.  

\paragraph{Policy Implications}
\

Modern platforms that connect employers with workers enable new policies that can counteract the inequities and inefficiencies that we document in the model. In particular, we show that a form of algorithmic fairness on the part of a platform (the algorithm) can improve overall efficiency and equality. Workers who do not have any referral connections not only suffer from missing out on the early hiring option by firms but also end up in a pool together with workers who had referrals and were rejected.  Thus, workers who do not have referrals are also hurt by a lemons effect of being lumped together with lower-quality, previously rejected workers.  If a platform or application process can identify which workers from the secondary pool have already had opportunities via referrals and which have not had any previous chances, then that eliminates the lemons effect and improves the attractiveness of hiring previously unreferred workers from the pool.  Overall, such a policy is very simple and has a number of positive effects: it reduces uncertainty and inequality and improves match quality and productivity.  This policy also reduces immobility since more of the disadvantaged group are hired, which then has a positive multiplier across generations.

Next, we consider more standard affirmative action policies from a new angle.  Such policies boil down to influencing the relative fractions of workers hired from groups with low historic employment (the ``greens'' in our model, as a stand-in for such groups, which could be defined by gender, caste, ethnicity, etc.) and groups with high historic employment (the ``blues'').  
Importantly, our results on productivity imply that one should care about imbalances in access to referrals not just because of inequality and immobility (i.e., fairness), but also because imbalances reduce efficiency.  By bringing the balance of employment rates across groups into better alignment, referrals are more widely distributed across the population, giving more people chances to be hired, lowering overall search frictions, and improving efficiency. 
Also, affirmative action is usually analyzed in terms of short-run implications.  Instead, we show that it has long-lasting implications for inequality, immobility, and productivity.  The basic intuition is that it helps equalize the distribution of referrals across groups, and that this network effect then persists across generations.  This offers an explanation as to why data show that affirmative action has a lasting effect, even after it is removed (e.g., see \cite{miller2017}, and for more background on affirmative action, see \cite{holzern2000,fang2011}).  This provides a very different view on affirmative action than one from focusing on the perspectives of discrimination or of asymmetries in opportunities for education (e.g., see \cite{holzern2000,fang2011}).

In the appendix, we derive further implications of the model, including:  (i) how different approaches to implementing affirmative action have different costs and benefits, (ii) how laws regarding how easily and quickly firms can fire workers change the search frictions and affect all three of inequality, immobility, and inefficiency,  and (iii) how macroeconomic conditions on the number of firms hiring affect inequality, wages, and efficiency of the equilibrium.

\paragraph{Challenges in the Analysis}
\

Our analysis is more complicated than initial intuitions may suggest.  The main challenges are twofold.
First, there are two different ways in which referrals can be concentrated among part of a population, and each has different effects. To understand this distinction it is important to note that more than one current worker might refer the same worker from the next generation. For instance, if blues tend to refer other blues, then as the current employment becomes more tilted towards blues, more blues will get referrals and fewer greens will---with more blues getting multiple referrals. This suggests two forms of referral concentration: the fraction of a population receiving any referral at all may be higher in one group than the other, and the fraction of a population receiving more than one referral may be higher in one group than the other. Given that referrals provide information, the first type of referral concentration leads firms to have information about fewer workers, which hurts overall productivity.\footnote{In terms of the distribution of productivity, it helps the productivity of employed blue workers since a greater fraction of them are vetted via referrals, and hurts that of greens since fewer of them are vetted.} Thus, it is this sort of concentration that is the key to understanding the impact of changing the distribution of referrals on productivity. In contrast, to understand inequality in wages, one needs to understand the effects of multiple referrals. Having more than one referral gives a worker more potential offers and, hence, more bargaining power. Thus, determining the fraction of the population that gets higher wages as referrals are concentrated depends on the fraction of the population that gets multiple referrals. Note that increasing one type of concentration is neither necessary nor sufficient for increasing the other.\footnote{For instance, suppose that the population of applicants consists of people who have zero, one, two, or three referrals.  By redistributing the extra referrals from those who have three to the rest of the population, it is possible to both increase the number of people who have multiple referrals at the same time as the number of people who get any referrals. Thus, it is possible to have a larger fraction of the population getting multiple referrals, while still having more of the population get referrals, and keeping the overall total constant.  We provide conditions under which both concentrations move together.} Since how many workers get any referral governs productivity, and how many workers get multiple referrals drives wage inequality, there is a close but imperfect relationship between how productivity and inequality are influenced by referrals.

The second challenge in our analysis is that beyond the direct effect that referrals have on who gets employment offers, they also have an indirect effect---a lemons effect---that makes comparative statics regarding the impacts of referrals subtle. The lemons effect refers to the phenomenon that some workers in the pool had received referrals but were not hired via those referrals---so they were already rejected by at least one firm. The fact that the pool includes previously rejected workers lowers the expected productivity of workers hired from it.   The main complication in our analysis is that the lemons effect decreases as one concentrates referrals, since there are then fewer workers who are screened by hiring firms and rejected. This effect attenuates the relationship between productivity and the concentration of referrals. It also complicates the relationship between inequality and the concentration of referrals and causes comparative statics on inequality to differ fundamentally from the comparative statics on productivity.  All of our results, and their proofs, deal with this issue.  

We remark that our main results still hold if firms can perfectly observe the type of the worker that they are facing in the pool.  In fact, even the lemons effect remains with perfect observability of a match from the pool.  Thus, it is not an information asymmetry that drives the results. The key is that there is an advantage to workers who are referred, as they have extra chances of being hired relative to those in the pool. The main force in the model is a search friction captured through the limited number of draws that a firm gets to hire from.  As long as there is some friction in search, workers who can be seen by firms more times have an advantage. This brings implications of search frictions (e.g., see \cite{RogersonShimerWrigt2005} for a literature review) into a network setting with homophily, showing that they have strong implications for persistent inequality.     

\paragraph{Relationship to the Literature}

\

There is consistent evidence that referred workers tend to be more productive per unit of time, more inventive, and stay longer in their positions than non-referred workers.  For a variety of evidence regarding these facts---as well as the fact that referrals provide information about the potential productivity of the worker and how ties between workers (and managers) affect productivity---see  \cite{fernandezcm2000,bandiera2008social,bandiera2009social,bandiera2010social,brownst2016,fernandez2014causal,burks2015,dustmann2016,pallaiss2016,heath2018,jackson2019,bondf2019,benson2019discrimination,dhillon2021employee,friebelhhz2023}. This is a result that falls out of our model, given that referred workers are only hired when they are of higher expected value than the available pool.  

Previous papers have shown that the reliance on referrals can result in inequality in employment and wages across groups, both empirically and in theory.  For some of the extensive empirical evidence on this and further references, see \cite{munshi2003,arrowb2004,calvoj2004,calvoj2007,beaman2012,patacchiniz2012,laschever2013,beamankm2016,lalannes2016,jackson2019,arbex2019network,zeltzer2020,okafor2020social,jackson2021}.    The relationship between referrals and inequality also falls directly out of our model, since referred workers have extra chances of being hired, and higher average wages conditional upon being hired, given that their value is known and they are being selected for that higher value. In addition, the fact that search processes can lead to lemons effects has also been studied before \citep{gibbonsk1991,farber1999,kruegercc2014,bondf2019}. 

The most relevant theoretical work to ours is \cite{montgomery1991}.  \cite{montgomery1991} examines a model in which workers' productivities correlate with their connected workers' productivities.\footnote{See \cite{galenianos2021referral} for a justification of such a productivity homophily.}  This gives firms a reason to hire via referrals, leading to higher wages for referred workers and a lemons effect among unreferred workers.\footnote{Other research in which informational asymmetries lead to better outcomes for some workers include \cite{waldman1984,milgromo1987,conde-ruizgp2021}.} 
Our model also results in these two effects but for somewhat different reasons.  Notably, in \cite{montgomery1991}, the lemons effect is exogenous: unreferred workers have lower productivity, while in ours, the lemons effect is endogenous because of firm decisions and occurs even though unreferred workers have the same ex-ante average productivity as referred workers. 
Furthermore, in \cite{montgomery1991}, the lemons effect and wage inequality get worse with increased correlation between referrer and referred values, a worker-quality type of homophily, whereas in our model wage inequality gets worse with increased group homophily.  Most importantly, in \cite{montgomery1991}, there is always full employment, and so there are no questions about productivity, unemployment inequality, mobility issues, nor any of the policy questions that we tackle. These are the main focus of our analysis.

The main other model that examines homophily and referrals is \cite{buhai2023social}.\footnote{See \cite{okafor2020social} for a variation of the \cite{montgomery1991} model with homophily that is used to calibrate wage differences as from \cite{arrowb2004}.} They examine endogenous education and occupation choice and segregation and implications for inequality, which we do not, while we examine referrals as a complement to an open application market and resulting inequality and its dynamics within a single occupation. 

There are other models of job networks that explore other questions tangentially related to the questions we examine here.  For instance, \cite{calvoj2004,calvoj2007} show how referrals correlate employment between connected workers and examine dynamic incentives to invest and a resulting poverty trap.  Thus, immobility in their model comes from an investment decision rather than the dynamics of the referral distribution.  Also, they do not model firm behavior and thus do not explore productivity or wage distributions.  Nonetheless, we discuss how such a poverty trap can be important in ensuring that the market does not eventually converge to full equality. \cite{calvo2004,galeottim2014,galenianos2021referral} illuminate the endogeneity of job networks, which we come back to in the concluding remarks. 

\cite{dustmann2016} show that referrals lead to differential employment length and wage progression of minority and majority workers at a given firm. \cite{gibbonsk1991} explore the impact of layoffs on the lemons effect in the unemployed pool, demonstrating both theoretically and empirically that laid-off workers have longer periods of unemployment and lower post-displacement wages.
Although these and other models point out how referrals can generate inequality,  as well as correlations between workers' outcomes, {\sl none} of our main questions---jointly examining referrals' impact on inequality, immobility, and inefficiency, as well as studying affirmative action and other policies---are addressed by these models. 

Finally, we relate to a literature studying the use of algorithms to assist human decision-making, particularly in hiring---e.g., to screen open applications for suitable candidates  \citep{lirb2020hiring_as_exploration}. 
Since humans can make biased hiring decisions \citep{quillian2017meta}, the hope is that algorithms may not only help sort through massive amounts of data but also provide relatively less biased assessments \citep{houser2019ai, kleinberg2017human_machine}, although the use of algorithm also requires careful supervision and evaluation \citep{raghavanbkl2020evaluating_algorithms}. 
In our setting, even algorithms screening for suitable candidates from open applications may be biased \emph{if they do not take a candidate's network into account}. 
Our model shows how an algorithm (or a job board) screening and recommending candidates from open applications improves efficiency and reduces inequality by conditioning on whether an applicant had previous opportunities to be hired. 

\section{A Model and Preliminaries}\label{sec:model}

We consider a labor market with a unit mass of risk-neutral firms, each having one current but retiring worker. Each firm wishes to hire at most one next-generation worker to replace its retiring worker.%
\footnote{Although we model agents as just working for one period, this is for convenience, and one can think of a period as a hiring season.} There is a mass $n\geq 1$ of risk-neutral workers seeking work at these firms; so there can be unemployment. We refer to these agents as ``workers'' regardless of their employment status.

A generic worker $i$ provides a productive value $v_i$ to any firm that employs that worker. This value includes the worker's skill and talent, and whatever else makes the worker productive, and is the maximum amount that the firm would be willing to pay to hire the worker if the firm had no other possibilities of filling the vacancy. The productive value $v_i$ is distributed according to some distribution $F$ and independent across workers.\footnote{See \cite{currarinijp2009} for background references and a discussion about the ways in which one can justify the independence here and the random matching with a continuum of people.} $F$ has a finite mean, and we consider the nondegenerate case in which $F$ has weight on more than one value, and so $v_i$ has nonzero variance.

Firms hire workers either through referrals or via a pool of open applications---via ``referrals'' or  ``the pool.'' Referrals are generated through a network in which each firm's current (retiring) worker refers a next-generation worker. For simplicity, we assume each (employed) worker refers exactly one worker. 

Nonetheless, some next-generation workers may get multiple referrals as several current workers may each refer the same next-generation worker to their respective firms. We denote the distribution of referrals a worker gets by $P$, with $P(k)$ being the probability that a generic worker gets exactly $k$ referrals.\footnote{This allows for a variety of different referral processes, each captured via a different degree distribution. For example, if referrals are made uniformly at random across all workers, then $P$ is a Poisson distribution; i.e., $P(k) = \frac{n^{-k} e^{-1/n}}{k!}$, with $1/n$ being the mean number of referrals.}

We assume throughout that $1>P(0)>0$ where the first inequality rules out the trivial case in which no referral market exists, and the latter has to be the case if, e.g., $n>1$.

Workers have a minimum wage that they must be paid, $\ubar{w}$, that is presumed to be equal to their outside-option value. Our results do not change substantially if $\ubar{w}$ is greater than workers' outside-option value, and we discuss differences as they arise. We presume throughout that $\ubar{w}$ lies below the max of the support of $F$ so that there is a positive mass of workers that firms find strictly worth hiring.

Each firm observes the value of its referred worker and then chooses what, if any, wage offer to make. When multiple firms are competing for the same referred worker, they begin bidding for the worker until no firm wants to increase its bid.\footnote{We thus implicitly assume that firms know whether a worker has another referral.  Note that it is in the worker's interest to let the firms know that they are competing, and would result in exactly the wage we use if the worker can go back and forth between firms showing them each other's offers. We do not require this assumption; any process in which the wage of workers is increasing in the number of referrals suffices.

We also note that other wage-determination processes can be incorporated into the model. What is key for our results is not the particular wage structure but instead that more firms competing for the same worker leads to a higher wage.} If a firm chooses not to hire its referred worker, it can go to an anonymous pool, consisting of all workers who either have no referrals or are not hired via any of their referrals, and may hire one worker picked at random from that pool. The fact that a firm just gets at most one draw from the pool is a simple way of modeling a search friction capturing that it is time-consuming to learn the workers' values, or because workers are difficult to even find.\footnote{One could instead use a Diamond-Mortensen-Pissarides-style model for the dynamic frictions of finding a match in the pool.  What is essential is that referred workers have some extra opportunity to be matched and that there is some information in hiring.  If there was no information about workers through referrals nor through the open application, then it would not matter at all how many draws firms had at hiring a worker and all hires would have equal expected values.}

We note that although we assume that workers always refer a next-generation worker to their employer, regardless of that worker's value, our analyses of the base model (Sections~\ref{sec:model} and~\ref{sec:groups}) follow with no changes if current workers only refer friends whom they know will be hired in equilibrium. 

Firms get a payoff equal to the expected value of the worker they hire (if they hire one) minus the wage.  A worker's payoff is the wage if they are hired and their outside option otherwise.

In summary, the timing of the game is as follows.
\begin{enumerate}
\item Hiring from Referrals:

\begin{enumerate}
\item Each firm has one worker referred to it, observes the value of this referred worker, and chooses whether to make an offer to its referred worker and, if so, at what wage.
\item Referred workers who receive at least one wage offer choose to accept one of them or to reject all of them.
Any accepted offer is consummated, and the firm and worker are matched. 
\end{enumerate}
\item Hiring from the Pool:
\begin{enumerate}
\item Workers and firms who are unmatched after the referral stage go to the pool.  Each such firm gets matched with one worker chosen uniformly at random from the pool (without replacement, so that no two firms are matched to the same worker from the pool).
Firms choose whether to make an offer to its worker from the pool based on the expected value of the worker and, if so, at what wage.
\item Workers from the pool who receive a wage offer accept or reject it.
Any accepted offer is consummated, and the firm and worker are matched.  
The remaining firms and workers are unmatched.
\end{enumerate}
\end{enumerate}

\subsection{Equilibrium Characterization}\label{sec:one-period}

We examine (weak) perfect Bayesian equilibria of the game.

The basic equilibrium structure is easy to discern and can be seen from backward induction. A worker accepts any wage that is at least the minimum wage, $\ubar{w}$, since that is also their outside option. The only possibility to have workers mixing at $\ubar{w}$ is if both firms and workers are indifferent.\footnote{Otherwise, there cannot exist an equilibrium in which workers mix since if workers mixed and firms were not indifferent, then firms would deviate and offer a higher wage.} In that (non-generic) case, the mixing becomes irrelevant since there is 0 net expected value in the relationship to either side.\footnote{Mixing can matter in the relative employment rates of different groups of workers, and so we track mixing when we get to that analysis.} Thus, equilibrium behavior in the pool stage is such that firms hire workers from the pool at a wage of $\ubar{w}$ if the expected value of workers in the pool exceeds $\ubar{w}$, do not hire from the pool if the expected value is below $\ubar{w}$, and both sides can mix arbitrarily if the expected value of workers in the pool is exactly $\ubar{w}$ (which is also the workers' outside option value).

Then, given the continua of firms and workers, taking the strategies of others as given, firms have a well-defined value from waiting and hiring (or not) from the pool---either something positive or 0. Thus, in the first period, a firm prefers to hire a referred worker if and only if that worker's value minus the wage exceeds the expected value from the second-period potential pool hiring. If there is no competition for the referred worker, the worker's wage is $\ubar{w}$; otherwise, the competing firms bid until they are indifferent between hiring and not hiring the worker.

The key to characterizing the equilibrium is thus a threshold such that firms attempt to hire a referred worker who has a value above that level and do not hire workers below that level. That threshold corresponds to the value of waiting and possibly hiring from the pool. To characterize the threshold, the relevant function is the expected value of workers in the pool conditional on firms hiring referred workers who have values strictly above ${\tilde{v}}$, not hiring workers with values strictly below ${\tilde{v}}$, and hiring workers with values exactly equal to ${\tilde{v}}$ with probability $r$:\footnote{Neither firms nor workers have to follow similar strategies at the threshold, $r$ simply represents the overall probability that a worker with exactly the threshold value is hired overall and includes the mixing of both firms and workers. For simplicity, we refer to $r$ as the mixing parameter.} 
\begin{equation}
\label{tau}
\E_{\tilde{v},r} [v_i|i\in \pool]  \coloneqq   \frac{ P(0) \E[ v_i  ]  +  (1-P(0)) (\Pr(v_i<{\tilde{v}})\E [v_i  |  v_i< {\tilde{v}} ]+\Pr(v_i=\tilde{v} )(1-r)\tilde{v}) }{P(0) +(1-P(0)) (\Pr(v_i<\tilde{v})+\Pr(v_i=\tilde{v})(1-r))}.
\end{equation}
So, all equilibria are equivalent to using a threshold $\tilde{v}$ (and a mixing parameter $r$) that is based on a fixed point of \eqref{tau}; i.e., a threshold giving rise to an expected value in the pool making the threshold optimal for firms.\footnote{For discrete distributions $F$, there could be a multiplicity of thresholds that all correspond to the same decisions.} Thus, we use the term ``equilibrium threshold'' to refer to a fixed point of \eqref{tau}.

\begin{lemma}
\label{eqw}
There is a unique threshold solving
\begin{equation}\label{eqw_con}
\tilde{v} = \max \{ \ubar{w},  \E_{\tilde{v},r} [ v_i  |  i \in \pool] \},
\end{equation}
and that threshold value characterizes the following equilibrium behavior:
\begin{enumerate}
	\item a referred worker $i$ is hired if the worker's value $v_i>{\tilde{v}}$, not hired if $v_i<{\tilde{v}}$, and hired with any arbitrary probability if $v_i={\tilde{v}}$;\footnote{Note that if $(\tilde{v},r)$ satisfies \eqref{eqw_con}, then so does $(\tilde{v},r')$ for any $r' \in [0,1]$}
	\item firms that are unsuccessful in hiring a referred worker hire from the pool if $\E_{\tilde{v},r} [ v_i  |  i \in \pool]> \ubar{w}$, do not hire if $\E_{\tilde{v},r} [ v_i  |  i \in \pool] < \ubar{w}$, and hire from the pool with any arbitrary probability if $\E_{\tilde{v},r} [ v_i  |  i \in \pool] = \ubar{w}$.
\end{enumerate}
The equilibrium wage of a hired worker $i$ is $v_i-{\tilde{v}}+\ubar{w}$ if $i$ has more than one referral, and $\ubar{w}$ otherwise.
\end{lemma}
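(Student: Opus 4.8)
The plan is to read off the right-hand side of \eqref{eqw_con} as a self-map and reduce the fixed-point problem to locating the unique zero of a single strictly monotone function. Write $g(\tilde v,r)\coloneqq \E_{\tilde v,r}[v_i\mid i\in\pool]$ for the expression in \eqref{tau}. First I would note that, because $P(0)>0$, the denominator of $g$ is bounded below by $P(0)>0$, so $g$ is well defined for every $(\tilde v,r)$. Next I would clear this positive denominator in $\tilde v=g(\tilde v,r)$ and check that the terms carrying the factor $\Pr(v_i=\tilde v)(1-r)$ contribute $\Pr(v_i=\tilde v)(1-r)\tilde v$ to the numerator and $\Pr(v_i=\tilde v)(1-r)$ to the denominator, which cancel against $\tilde v$ times the denominator. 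Hence $g(\tilde v)-\tilde v$ has the same sign as an $r$-free quantity (and a fixed point is $r$-independent), which justifies the footnote and lets me treat the threshold as a scalar problem.

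The core step is a single-crossing property. Setting $a\coloneqq\E[v_i]$ and using the option-value identity $\tilde v\Pr(v_i<\tilde v)-\E[v_i\mathbbm 1(v_i<\tilde v)]=\E[(\tilde v-v_i)^+]$, I would show that $g(\tilde v)-\tilde v$ has the same sign as
\[
N(\tilde v)\coloneqq P(0)\,(a-\tilde v)-(1-P(0))\,\E\!\left[(\tilde v-v_i)^+\right].
\]
The first term is strictly decreasing with slope $-P(0)<0$, and $\E[(\tilde v-v_i)^+]$ is nondecreasing and continuous (its left/right jumps at atoms of $F$ cancel), so $N$ is continuous and strictly decreasing. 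Since $N(\tilde v)=P(0)(a-\tilde v)>0$ below the support of $F$ and $N(\tilde v)=-(\tilde v-a)<0$ above it, the intermediate value theorem gives a unique $\tilde v^\ast$ with $g(\tilde v^\ast)=\tilde v^\ast$, with $g(\tilde v)-\tilde v>0$ for $\tilde v<\tilde v^\ast$ and $<0$ for $\tilde v>\tilde v^\ast$. Folding in the outer maximum: a fixed point of $\tilde v=\max[w_{min},g(\tilde v)]$ that strictly exceeds $w_{min}$ must satisfy $g(\tilde v)=\tilde v$ and hence equal $\tilde v^\ast$, while $\tilde v=w_{min}$ is a fixed point exactly when $N(w_{min})\le 0$, i.e.\ $w_{min}\ge\tilde v^\ast$. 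A short case split on the sign of $\tilde v^\ast-w_{min}$ then gives uniqueness, with solution $\tilde v^\ast$ when $\tilde v^\ast\ge w_{min}$ and $w_{min}$ otherwise.

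For the behavioral characterization I would argue by backward induction, as previewed in the text. In the pool stage a firm knows only the expected pool value $g(\tilde v)$ and faces workers with outside option $w_{min}$; offering $w_{min}$ is optimal and is accepted iff $g(\tilde v)\ge w_{min}$, with indifference (hence arbitrary mixing) exactly at equality, giving item (2); since $n\ge 1$ guarantees the pool is at least as large as the mass of unmatched firms, the resulting firm continuation surplus is $(g(\tilde v)-w_{min})^+=\tilde v-w_{min}$. In the referral stage a firm with an uncontested referral offers $w_{min}$ and prefers hiring to waiting iff $v_i-w_{min}\ge \tilde v-w_{min}$, i.e.\ $v_i\ge\tilde v$, with indifference at $v_i=\tilde v$; this is item (1), and the induced set of rejected referred workers is precisely the one generating $g(\tilde v)$ in \eqref{tau}, so the threshold is self-consistent exactly at the fixed point above. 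For wages, a worker with a single referral (or a pool hire) is demanded by at most one firm, which extracts all surplus and pays $w_{min}$; a worker with two or more referrals and $v_i>\tilde v$ is demanded by at least two firms each with continuation surplus $\tilde v-w_{min}$, so Bertrand competition raises $w$ until the marginal firm is indifferent, $v_i-w=\tilde v-w_{min}$, i.e.\ $w=v_i-\tilde v+w_{min}$, matching the stated wage and reducing to $w_{min}$ at $v_i=\tilde v$.

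I expect the main obstacle to be the single-crossing step, namely proving that $g(\tilde v)-\tilde v$ changes sign exactly once for an arbitrary, possibly atomic, distribution $F$. The clean route is the option-value representation $\E[(\tilde v-v_i)^+]$, which delivers monotonicity and continuity across atoms simultaneously, together with the hypothesis $P(0)>0$, which both keeps $g$ well defined and supplies the strictly negative slope $-P(0)$ that makes the crossing unique; without $P(0)>0$ the map could be flat and uniqueness could fail.
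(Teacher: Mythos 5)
Your proof is correct, but it takes a genuinely different route from the paper's. The paper establishes existence by applying Kakutani's theorem to the correspondence $G(\tilde v)=\{\max[w_{min},\E_{\tilde v,r}[v_i\mid i\in\pool]]\colon r\in[0,1]\}$ (varying $r$ is what restores convex values and upper hemicontinuity at atoms of $F$), and proves uniqueness by a separate contradiction: for two putative thresholds $\tilde v<\tilde v'$, the pool under $\tilde v'$ is an average of a group whose expected value is at most $\tilde v$ (the pool under $\tilde v$) and of referred workers with values between $\tilde v$ and $\tilde v'$, which cannot average to $\tilde v'$. You instead clear the strictly positive denominator of \eqref{tau} (using $P(0)>0$) and observe that the atom terms carrying $(1-r)$ cancel identically at \emph{every} $\tilde v$, so that $\E_{\tilde v,r}[v_i\mid i\in\pool]-\tilde v$ has the sign of the $r$-free, continuous, strictly decreasing function $N(\tilde v)=P(0)\left(\E[v_i]-\tilde v\right)-(1-P(0))\,\E\!\left[(\tilde v-v_i)^+\right]$; existence and uniqueness then follow from a single intermediate-value/single-crossing argument, with the outer $\max$ resolved by the case split on the sign of $N(w_{min})$, and the behavioral and wage claims follow by the same backward-induction and Bertrand logic the paper sketches before the lemma. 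Your route buys three things: it dispenses with fixed-point machinery; it turns the footnote's $r$-irrelevance into a one-line algebraic identity valid globally, where the paper argues it verbally and only at the fixed point; and the sign structure you obtain---the map exceeds $\tilde v$ below the fixed point and falls short above it---is precisely the monotonicity the paper must establish separately later (its Lemma~\ref{lemma:monotonicity}), so you get that comparative-statics input for free. What the paper's correspondence approach buys is portability: it extends to settings not reducible to a single scalar equation, such as the two-threshold firing model of Section~\ref{section:firing_workers}, where Lemma~\ref{lemma:lambda_LE} reuses the same template. One small repair to your argument: since only a finite mean of $F$ is assumed, the support need not be bounded, so replace the evaluation of $N$ ``below/above the support'' by limits, namely $\E[(\tilde v-v_i)^+]\to 0$ as $\tilde v\to-\infty$ (dominated convergence, using $\E\lvert v_i\rvert<\infty$) gives $N(\tilde v)\to+\infty$, while $\E[(\tilde v-v_i)^+]\geq \tilde v-\E[v_i]$ gives $N(\tilde v)\leq \E[v_i]-\tilde v\to-\infty$; the single crossing is otherwise unaffected.
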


Although we account for all cases in what follows, the reader may find it easier to concentrate on situations in which firms find it worthwhile to hire workers from the pool, since then $\tilde{v}= \E_{\tilde{v},r}[v_i|i \in \pool] $ and the threshold is simply the expected productivity in the pool.

The remaining details behind the proof of Lemma~\ref{eqw} (beyond the discussion that precedes the lemma), including why the threshold is unique, and all subsequent proofs of lemmas and propositions appear in Appendix~\ref{appendix:proofs}.

Given that hiring on the referral market follows a simple threshold rule, that is, workers with referrals are hired if their productivity is above some threshold, it follows that rejected workers go to the pool, which lowers the average value of workers in the pool. The impact of this selection on the pool productivity is a sort of lemons effect common to search markets.

\begin{lemma}
\label{lemma:LE}
In equilibrium there is a (strict) lemons effect:
$\E_{\tilde{v},r} [ v_i  |  i \in \pool]  <   \E [ v_i ]$.
\end{lemma}

We track the lemons effect in our analysis, as it can either amplify or mitigate the various effects that we study. For instance, it provides additional incentives for firms to hire referred workers. This feedback means that the hiring threshold for referrals, ${\tilde{v}}$, unless it is determined by the minimum wage, is less than the unconditional expected productivity value in the population,  giving a further advantage to referred workers. Thus, workers who have referrals not only have an additional chance to be hired compared to those who are only in the pool but also benefit from the lemons effect, which makes firms even more willing to hire via referrals.

\section{Worker Groups and the Impact of Referrals on Outcomes}\label{sec:groups}

With the basics of the model in place, in this section, we introduce different groups of workers; e.g., by tracking ethnicity, gender, age, education, geography, etc. Then, homophily (tendencies to refer own group) or some other asymmetry (people being relatively biased towards referring some particular group)  lead to differences in referrals across groups. Those differences have direct inequality and inefficiency implications; but also affect the dynamics.  Tracking groups across time enables us to see how low current employment among a group translates into low employment and low wages in that group's next generation---immobility.

For simplicity, we consider two groups, but the results extend easily to more. We refer to one group as {\sl blues} and the other as {\sl greens}, with respective masses $n_b>0$ and $n_g>0$ of workers per generation, such that $n_b+n_g=n$. Let $e_b$ and $e_g$ be the masses of employed blue and green workers at the beginning of the period, respectively; i.e., the masses of the retiring workers or the ``current employment'' for short. Without loss of generality, we maintain that $\frac{e_b}{e_g}\geq \frac{n_b}{n_g}$ and say that there is a current employment bias towards blues if the inequality is strict.

\subsection{Homophily and Referral Distributions}\label{sec:homophily}

To model homophily in referrals, we track group-dependent referral-bias parameters $h_b\in [0,1]$ and $h_g\in [0,1]$. A fraction $h_b$ of employed blue workers of the current generation refer blue workers from the next generation, and the remaining $(1-h_b)$ fraction of employed blue workers refer green workers, with $h_g$ defined analogously. Levels of $h_b>\frac{n_b}{n}$ and $h_g>\frac{n_g}{n}$ indicate homophily; i.e., a bias towards referring one's own group. (An alternative form of homophily is on values, which is of less direct interest here, but we explore in Section~\ref{corrvalues} of the appendix. Allowing for such homophily does not alter our qualitative results.) Employment levels $e_b,e_g$ and homophily levels $h_b,h_g$ determine the average number of referrals that are made to blues and greens.  For instance, the average number of referrals blue workers get is 
$$
m_b=\frac{h_be_b+(1-h_g)e_g}{n_b},
$$
and similarly for greens (swapping labels). 

We assume that $h_b \geq 1- h_g$.  This is implied by homophily but is, in fact, a weaker condition. It admits, for instance, a situation in which both greens and blues bias referrals towards, say, blues as happens in some cases with gender bias. Instead, it rules out extreme cases with reversals: greens referring blues more often than blues referring blues, and vice versa.  This assumption implies the natural case:  if a group's employment goes up, then it gets more referrals.\footnote{An absence of this assumption can lead to peculiar dynamics: start with high employment among blues who mostly refer greens, which then leads to high green employment who then mostly refer blues, continuing in a cycle.  As we show, even ruling this case out, it is still possible to cycle, but those cycles are more plausible.} Indeed, our results (suitably adjusted) extend under the weaker assumption that $m_b$ and $m_g$ are increasing in $e_b$ and $e_g$, respectively, but not necessarily linearly.

Recall that the overall distribution of referrals is denoted by $P$; here, we decompose this distribution and suppose that referrals for a group happen according to some distribution conditioned on $m$, the average number of referrals for that group: $\widehat{P}(\cdot|m)$. Thus, the overall distribution of referrals is given by $P(\cdot|m_b,m_g)=\frac{n_b}{n}\hat{P}(\cdot|m_b)+\frac{n_g}{n}\hat{P}(\cdot|m_g)$.  

We intentionally consider the case in which the distribution of referrals is the same for both groups---mean adjusted---which thus isolates the effects of homophily. Let $\hat{P}(2+|m)\coloneqq \sum_{k\geq 2} \hat{P}(k|m)$; we also focus on the case such that 
\begin{enumerate}
    \item \label{a1} for $m'>m$, 
$$
    \hat{P}(0|m')<\hat{P}(0|m) \quad \text{and}\quad
    \hat{P}(2+|m')> \hat{P}(2+|m)\text{; and}
$$
    \item \label{a2} $\widehat{P}(0|\cdot)$ is strictly convex.
\end{enumerate}
Condition~\ref{a1} is natural and states that more referrals for a group decreases the fraction of workers in that group with no referrals and increases the fraction of workers with multiple referrals. Condition \ref{a2} states that more referrals in a group decrease the fraction of workers with no referrals at a decreasing rate. All these conditions are satisfied if each worker makes their referral according to the same stateless process, e.g., uniform at random, in which case $\widehat{P}(\cdot|m)$ is a Poisson distribution.

Note that some initial employment across groups in period $t$ determines the referral distribution for workers in period $t+1$, which in turn determines the employment across groups in period $t+1$, etc. 

To keep the analysis uncluttered, firms maximize their profits in each time period separately. The equilibrium (in each period) can then be analyzed as discussed in Section~\ref{sec:model}, since whether a worker is green or blue does not impact a firm's immediate payoffs. We comment in Section~\ref{sec:forward} on how the results adjust if firms account for the value of a worker's future referrals, which differ by group.

\subsection{Inequality, Immobility and Inefficiency}\label{sec:ineq_and_immobility}

In this section, we show that homophily, together with inequality in current employment, leads to inequality (in both wages and employment) and inefficiency (in terms of total productivity) in the current and future periods.  Here, we examine inequality between groups, and in Section~\ref{sec:Pchanges} of the appendix, we discuss overall inequality, immobility, and productivity across all members of the society as a function of the overall referral distribution, of which this is a special case.

To understand these impacts, it is useful to consider how many referrals workers would get if employment numbers were equal to the relative population proportions. If there was no current employment bias, then the blue and green workers looking for jobs would get a total of
\begin{equation}\label{rbrg}
R_b=h_b n_b+(1-h_g)n_g \quad \text{and} \quad R_g=h_g n_g+(1-h_b)n_b
\end{equation} 
referrals respectively. We say referrals are {\sl balanced} if each population receives a number of referrals proportional to their representation ($\frac{R_b}{R_g}= \frac{n_b}{n_g}$); e.g., when referrals are purely homophilous ($h_b=h_g=1$) but also if referrals are independent of groups ($h_b=\frac{n_b}{n}$,$h_g=\frac{n_g}n$). 

If current employment rates are balanced ($\frac{e_b}{e_g}= \frac{n_b}{n_g}$) and referrals are balanced ($\frac{R_b}{R_g}= \frac{n_b}{n_g}$), then outcomes are equal for the two groups, regardless of the degree of homophily. On the other hand, should either of these conditions strictly favor one group, the blues ($\frac{e_b}{e_g}\geq \frac{n_b}{n_g}$ and $\frac{R_b}{R_g}\geq \frac{n_b}{n_g}$ with at least one strict inequality), then the referral distribution of blue workers dominates that of green workers in that larger fractions of blues get at least one referral and multiple referrals, resulting in higher employment rates and better wages, and this effect persists over time resulting in immobility (although the inequality can subside over time as discussed in Section~\ref{sec:dyn}).

This stochastic dominance of the blue worker referral distribution also drives the inefficiency of equilibria.  We define the efficiency or productivity of the equilibrium to be the total value of employed workers, plus the outside option $\ubar{w}$ for all the unemployed workers.  Any equilibrium is clearly inefficient as firms hire referred workers with productivity below other workers in the pool (due to the search frictions when hiring from the pool).  Nonetheless, one can show that an equilibrium is still {\em constrained efficient}---the threshold for hiring refereed workers that maximizes total production is the unique equilibrium threshold (see Section~\ref{sec:constrained_efficiency} of the appendix).  However, although each equilibrium uses the best threshold it can, the equilibrium efficiency degrades as referrals become more imbalanced.  To understand why, it is useful to note that a single referral is productivity enhancing: if a referred worker has high productivity (above the equilibrium threshold), then that worker is hired.  If the worker has low productivity, then the firm has a chance of hiring a high-productivity worker from the pool.  When some worker gets two or more referrals instead of one, then that does not improve the matching of that worker beyond the first referral--if they are low value, then the referrals are all wasted, and if they are high value, then any referral past the first one is wasted.  As referrals become more balanced, they are spread more evenly across the population, reducing the chance of such collisions while also improving the mass of workers with at least one referral.  

This discussion is summarized in the following proposition.

\begin{proposition}
\label{proposition:group_outcomes}
If there is a (weak) employment bias ($\frac{e_b}{e_g}\geq  \frac{n_b}{n_g}$) and a (weak) referral imbalance ($\frac{R_b}{R_g}\geq \frac{n_b}{n_g}$) in favor of blues, then
\begin{itemize}
\setlength\itemsep{-0.5em}    
	\item the wage distribution of blue workers first-order stochastically dominates the wage distribution of green workers (wage inequality),
	\item the employment rate of blue workers is (weakly) higher than the employment rate of green workers (employment inequality),
	\item and this is true of all future periods (immobility).
\end{itemize}
Furthermore, if referrals are purely homophilous---i.e., $h_b=h_g=1$---then
\begin{itemize}
    \item employment ratios that are closer to being population-balanced strictly increase productivity\footnote{That is, if $\frac{e_b'}{e_g'} > \frac{e_b}{e_g} \geq \frac{n_b}{n_g}$, then the next-period productivity in the equilibrium with current employment $(e_b',e_g')$ is strictly less than the next-period equilibrium productivity with current employment $(e_b,e_g)$.} (efficiency).
\end{itemize}
\end{proposition}

We note that the average productivity of employed blue workers is at least as high as that of employed green workers, since more of them are selected via referrals. Correspondingly, the average productivity of unemployed green workers is at least as high as that of blues. All comparisons above are strict if at least one of the bias and imbalance conditions holds with strict inequality.

There are some subtleties in the proof, particularly of the efficiency result: unequal referrals across groups, and hence a higher aggregate probability of not getting a referral, implies that there are fewer lemons in the pool, making the pool more productive and raising the equilibrium threshold. The proof shows that this countervailing force is always overcome. The basic intuition is that having a higher probability of not getting a referral means that fewer workers get vetted overall, and more ultimately end up being hired without any vetting (lemons or otherwise). Ultimately, jobs that are not filled by a high-value worker end up being filled by someone of lower value or not filled at all (depending on the equilibrium, presuming that we are not changing from one type of equilibrium which hires from the pool to the other which does not), and replacing those with high value is good. The full proof takes care of all the possible cases, including those comparisons where workers are hired from the pool for one but not the other employment ratios.

One interesting implication of the conditions in Proposition~\ref{proposition:group_outcomes} is that blues suffer a worse lemons effect than greens since relatively more of them are screened and rejected. Hence, the average value of blues in the pool is worse than that of greens. As a higher fraction of employed blue workers is hired through the referral market compared to greens, employed blue workers have higher productivity than employed green workers.\footnote{This is consistent with the observation that referral-hired workers tend to outperform other workers; and employed blues are more often referrals.} While beyond the scope of the model, the difference in average productivity of employed workers across types could perpetuate a biased perception of their respective abilities if observers (human or algorithmic) do not understand the selection process and estimate the population mean directly using the sample mean (i.e., observers assume that ``what you see is all there is,'' a term coined by Daniel Kahneman in, e.g., \cite{Kahneman2011}). In particular, estimating productivity by looking at employed populations (the way that productivity is usually measured) systematically overestimates blues' productivity and underestimates greens' productivity and may lead to further (inaccurate) statistical discrimination.\footnote{\cite{bohrenhip2019} highlights that such inaccurate statistical discrimination may be indistinguishable from taste-based discrimination.}

\subsection{Convergence to Steady State}\label{sec:dyn}

Proposition~\ref{proposition:group_outcomes} shows that inequality persists over time, but what exactly is the trajectory?  In this section, we show that, so long as the minimum wage is low enough so that firms find it worthwhile to hire from the pool in steady state, the dynamics will converge to a unique steady state.\footnote{Without this condition, it is possible to get cycles (see Example~\ref{ex:cycle_pool} in Section~\ref{online:examples} of the appendix for an example).} This helps us analyze policy interventions (see Section~\ref{sec:aa}). If one group is more self-biased in giving referrals than the other group, then that group gets relatively more referrals, and they will dominate employment in the long run.  Convergence to balanced employment thus requires a balance in referral rates. 

\begin{lemma}
\label{lem:longrun}
There exists a unique steady-state employment rate for each group. The steady-state employment rates are balanced ($\frac{e_b}{e_g}= \frac{n_b}{n_g}$) if and only if there is referral balance ($\frac{R_b}{R_g}= \frac{n_b}{n_g}$). If there is referral balance, then convergence to the steady state occurs from any starting employment levels.\footnote{If referral balance fails, then one can construct examples with non-convergence; see Example~\ref{ex:cycle_bias} in Section~\ref{online:examples} of the appendix.
\label{footex2}}
\end{lemma}

Existence of a steady state follows from a standard fixed-point argument.  Uniqueness is more subtle and depends on bounding the slope of how next-period employment of a group can grow as a function of increasing current employment of that group.  If that ``slope'' is everywhere less than one, then there can be only one fixed point. The idea is as follows.  Adding one extra green today leads to at most one more green referral.  So, the direct effect is at most one.  When greens receive relatively fewer referrals, raising green employment makes the distribution of referrals more spread out which worsens the lemons effect. This lowers the threshold and leads to relatively more blue hires on the referral market, working against green hires.  When greens receive relatively more referrals, then increasing green employment now decreases the lemons effect and raises the threshold---now disadvantaging the greens again since now they are getting more referrals. Thus, the ``slope'' of next-period green employment as a function of today's employment is at most one establishing the uniqueness. Balanced employment rates are a fixed point if referrals are balanced and not otherwise, and so uniqueness implies that this is the steady state if (and only if) balance holds.

Convergence is again more subtle.  Balanced referrals can be shown to imply a monotonicity, so that a group that is underemployed gains employment but never more than to a balanced level. However, without referral balance, the indirect effect can dominate and lead to a situation where the change in employment overshoots the steady state.  

Lemma~\ref{lem:longrun} shows that if referrals are balanced, then initial employment rates become irrelevant in the very long run.  Of course, that is over generations, and so with high rates of homophily, inequality could persist for many generations.

\subsection{Costly Investment and Immobility}
\label{sec:investment}

Lemma~\ref{lem:longrun} implies that if referrals are balanced, inequality disappears asymptotically. Here, we show that if we add costly investment to the model, then such convergence can fail even with balance, i.e., there is permanent (nontrivial) inequality.  

Our model of referrals can be seen as a signal extraction problem. The values of the advantaged group are more likely to be seen because of the better referral network.\footnote{Even in the absence of costly investment, such environments can lead to poor outcomes for the group with noisier signals \citep{phelps1972,fang2011}.} Individuals whose values are less likely to be seen (so their signals are effectively noisier) have lower incentives to make costly investments \citep{lundbergs1983}. This holds in our model. In addition, our environment is dynamic and shows how less incentives for the disadvantaged group to invest can lead to a less informative process in the future, which perpetuates the inequality, immobility, and inefficiency. This is a form of a poverty trap in which a lack of referrals precludes investments, which leads to low employment and hence referrals.\footnote{See \cite{calvoj2004} for a discussion of a poverty trap in a different job-referral setting.}

In particular, if workers must make a costly investment---e.g., in education---to realize a productive value, then perpetual immobility and inequality can ensue. Specifically, Proposition~\ref{proposition:group_outcomes} implies that the expected wage of blue workers from participating in the labor market is greater than that of green workers. Thus, if there is some cost to educating one's self before knowing whether one will have a referral, then for some investment costs green workers will not invest while blue workers will. We show this formally and further analyze their impact on productivity.

Consider a setting in which workers are either of high or low value, $v_H$ or $v_L$. By default, workers are of low value, but by investing at cost $c>0$, workers have a $\rho<1$ chance of becoming high value. The investment decision occurs before workers know the number of referrals they receive; nevertheless, when making this decision, workers do consider their expected networks of referrals. To isolate the role of investment in sustaining inequality, we focus on the case in which referrals are purely homophilous: $h_b=h_g=1$. We examine subgame perfect equilibria of this game, in which the rest of the game is as before, and workers maximize their expected payoff in the ensuing hiring stage (as described in Section~\ref{sec:one-period}) net of any cost of investment.

Proposition~\ref{proposition:group_outcomes} implies that the strictly-advantaged group of blues ($\frac{e_b}{e_g}>\frac{n_b}{n_g}$) have a higher incentive to invest. Consequently, the equilibria must feature all blues investing if some greens do.  Thus, in non-corner equilibria, blues are endogenously the group with a better distribution of values.   This gives them an advantage on the referral market, which in turn can lead to perpetual inequality (immobility).

Figure~\ref{fig:invest} provides an example.  In this illustration, referrals are made again uniformly at random among a group so that $\hat{P}$ is a Poisson distribution, $n_b=n_g=1$, $v_H=1$ and $v_L=0$, and $\ubar{w}=0$ (so that firms hire from the pool). The probability of an investment being successful is $\rho=0.9$, and the cost of investment $c$ is low ($c=0.01$) in the top row and high ($c=0.075$) in the bottom row. 

Workers invest because it can lead to a high wage (if the investment is successful and they receive multiple referrals) and not to improve employment prospects since the outside option equals the minimum wage. This implies that other workers investing decreases the incentive to invest as it lowers the high wage by increasing the value in the pool.  As a result, a fraction of workers may invest in equilibrium.

Panels~\ref{fig:invest_wholow_cost} and~\ref{fig:invest_whohigh_cost} show the mass of workers, by group, investing for a given difference in current employment rates. 

\begin{figure}[!ht]
\centering
\begin{subfigure}{.33\textwidth}
\includegraphics[]{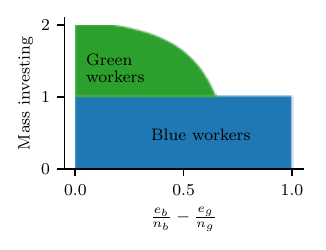}
\caption{}\label{fig:invest_wholow_cost}
\end{subfigure}~
\hfill 
\begin{subfigure}{.33\textwidth}
\includegraphics[]{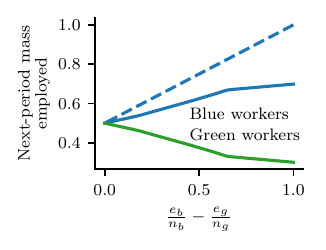}
\caption{}\label{fig:invest_empllow_cost}
\end{subfigure}~
\hfill 
\begin{subfigure}{.33\textwidth}
  \includegraphics[]{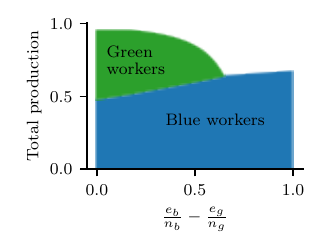}
\caption{}\label{fig:invest_prodlow_cost}
\end{subfigure}
\begin{subfigure}{.33\textwidth}
\includegraphics[]{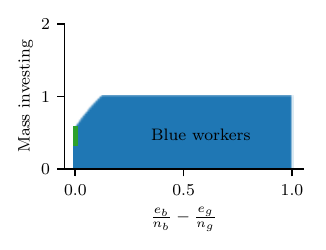}
\caption{}\label{fig:invest_whohigh_cost}
\end{subfigure}~
\hfill 
\begin{subfigure}{.33\textwidth}
\includegraphics[]{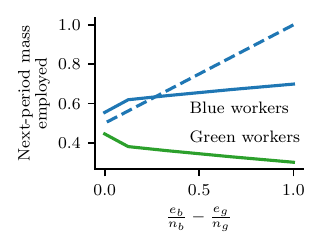}
\caption{}\label{fig:invest_emplhigh_cost}
\end{subfigure}~
\hfill 
\begin{subfigure}{.33\textwidth}
  \includegraphics[]{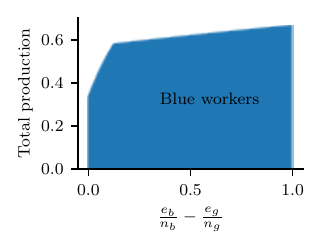}
\caption{}\label{fig:invest_prodhigh_cost}
\end{subfigure}~
\caption{ We plot the investment decision, next-period employment, and total production by group as a function of the initial employment advantage of blues.   The top row has low cost $c=0.01$, and the bottom row has high cost $c=0.075$. Referrals are according to a Poisson distribution: $\hat{P}(k|m) = \frac{m^{k}e^{-m}}{k!}$; and parameters values are: $n_b=n_g=1,v_L=0,v_H=1,\rho=0.9$ and $c\in \{0.01,0.075\}$.}
\label{fig:invest}
\end{figure}

When the cost is low and the employment bias is small, both groups invest. As the blue advantage in current employment grows, green investment drops, and eventually, only the blues invest (Panel~\ref{fig:invest_wholow_cost}).  When the cost is high, then some blues invest when the employment bias is small, and all blues invest when it is large. Panels~\ref{fig:invest_empllow_cost} and~\ref{fig:invest_emplhigh_cost} depict the resulting next-period employment levels. The blue dashed lines consist of points $(\frac{e_b}{n_b}-\frac{e_g}{n_g},e_b)$. Thus, at the intersection of this line and the blue line, the actual resulting next-period mass of employed blues, is a steady state: perpetual inequality.   When the cost is low, there is no employment bias in the unique steady state; however, when the cost is high, the steady state favors blues. Panels~\ref{fig:invest_prodlow_cost} and~\ref{fig:invest_prodhigh_cost} show the total production (i.e., the sum of values produced by employed workers or their outside option if unemployed) as functions of the difference in employment rates. We show the total production by group. When the cost is low, total production is maximized when the employment bias is low, in which case both groups contribute equally. As the employment bias increases, even though all greens still invest, the blues contribute more as they are more likely to be hired due to having more referrals. Green investments decline, and eventually, no green workers invest, and the contribution of greens is zero. When the cost is high, as they never invest, the contribution of greens is zero throughout. Interestingly, in that case, the total production is maximized when the employment bias is strongest since this maximizes the chance of blue workers, who are the ones who invest, to be hired.

We provide a general statement of this result (Proposition~\ref{prop:inv}) in Section~\ref{sec:online_skill} of the appendix.

\section{Market Design and Policies}\label{sec:aa}

In this section, we consider market design aspects and policies that can alleviate the inequality and immobility that arise in referral networks.  Moreover, as we have seen above, doing so not only improves `fairness' but can also increase overall productivity.

We first explore algorithmic fairness, i.e., how a platform (an algorithm) with firms and workers can be designed to alleviate the negative consequences of referrals (Section~\ref{sec:forward}). In particular, firms may be able to see which workers have already had a chance to be hired via referrals and which have not, depending on the design of the platform.  This provides opportunities to un-referred workers and completely eliminates the lemons effect, which leads to short- and long-term improvements in all aspects of the market:  more equality, mobility, and productivity. We also consider a variation whereby firms cannot see whether someone was previously referred but can distinguish greens from blues; i.e., firms can selectively hire disadvantaged workers from the pool. They have incentives to do so since advantaged workers suffer a more severe lemons effect. 

These policies can have a slow impact when vacancies are primarily filled on the referral market or when search costs in the pool are high, or such policies may be impossible to implement for technological or legal reasons. Thus, we also consider a variety of affirmative action policies---which are changes in the given period's employment mix.  Such policies increase minority employment closer to balanced, but also have dynamic implications that we explore. As we show, the impact of such policies is subtle and may be counter-productive in the short term (Section~\ref{sec:ineq_period_to_period}).  However, a well-targeted policy can have positive long-term impacts on both inequality and productivity (Section~\ref{sec:aadynamic}). Furthermore, the potential short-term negative impacts can be mitigated with careful attention as to how the policy is implemented (Appendix~\ref{sec:optAA}). 

Finally, we also explore the impact of firing workers on inequality, as well as the impact of macroeconomic conditions on the functioning of the market (Appendix~\ref{section:firing_workers}).

Throughout this section, we specialize to the situation in which blues and greens are both completely homophilous; i.e., $h_b=h_g=1$. Furthermore, we focus on equilibria in which firms hire when indifferent ($r=1$).\footnote{This assumption is only consequential when firms do not hire from the pool, as then the equilibrium threshold ($\ubar{w}$) does not adjust to small changes in employment. As a result, firms could make more or fewer hires on the referral market. As long as there are no atoms in the distribution of productivities at exactly the min wage, $\Pr(v_i=\ubar{w})=0$, we do not have to worry about this issue.}  These restrictions simplify the exposition but are not necessary for the results that follow.

\subsection{Algorithmic Fairness and Targeted Hiring from the Pool}
\label{sec:forward}

Let us examine how the market changes when firms can target particular workers when hiring from the pool. We consider two cases: firms observe whether workers were previously referred or firms observe workers' groups. Such targeted hiring is becoming increasingly realistic with the advent of large platforms that connect workers and firms. These platforms facilitate and observe referrals and applications and can flag to a firm which applicants have previously been referred; i.e., they can enable targeted hiring from the pool. Such platforms suggest workers to firms, and vice versa, according to algorithms, and such algorithms can be designed to equalize opportunities by prioritizing workers who have not had previous referrals.

Let us first suppose firms are able to target workers in the pool depending on whether they had previous referrals. In this case, un-referred workers are hired before previously rejected workers (the lemons). And when there are more workers without referrals than firms hiring from the pool, then only those workers are hired and the lemons effect is completely eradicated.  This improves productivity, and also reduces the use of referrals as the pool becomes more attractive, which improves equality and mobility.  

Suppose now firms can target workers by their group identity. Then, firms prefer to hire greens from the pool over blues, and they will only hire blues from the pool if no greens remain.  There is still a lemons effect in this case, but it is smaller than when there is no ability to target workers based on identity (e.g., discrimination regulations). 

\begin{proposition}\label{prop:fair}
Suppose there is an employment bias ($\frac{e_b}{e_g}> \frac{n_b}{n_g}$) in favor of blues, and in equilibrium, there is hiring from the pool (in the absence of algorithmic fairness; i.e., without any targeting). Both if firms can observe whether workers in the pool have been previously referred or if firms can prioritize green hiring from the pool, then, relative to the base model:
\begin{itemize}
    \item green employment increases, and
    \item total production increases.
\end{itemize}
Furthermore, the increase in total production is greater when firms can observe whether workers in the pool have been previously referred rather than their group identity.
\end{proposition}

One may expect that targeting based on group identity leads to a greater increase in green employment relative to targeting based on referral status. This is indeed true for any given hiring threshold on the referral market. However, for any fixed threshold, the value of hiring from the pool is greater when hiring is based on referral rather than group status. Thus, the equilibrium hiring threshold is \emph{higher} in the former setting, which relatively benefits green workers, who receive relatively few referrals; see Example~\ref{ex:algfairness} in Section~\ref{online:examples} of the appendix for details.

Proposition~\ref{prop:fair} states that giving firms the information about which workers have been previously rejected not only increases efficiency, but also reduces the employment bias. 

The above proposition deals with the short-run impact of the policies. Comparing the policies dynamically is ambiguous. There are two reasons for this.  One is that group-targeting can lead to a higher green-to-blue employment ratio, which can spread referrals out more evenly in the next period, which then improves the future functioning of the market.  So, although targeting unreferred workers can be more productive in the short run, evening out the referral distribution in the current period can be more effective at improving productivity in the longer run. The second is that there is also an endogenous change in the future lemons composition in the pool which can lead to some non-monotonicities in the dynamics. 

\subsection{Dynamic Complications}\label{sec:ineq_period_to_period}

To see what the difficulties are in identifying long-term effects, let us show how complicated a simple comparative static can be.

Consider an increase in the current employment of green workers, $e_g$, holding everything else constant and without any policies in place. The probability of not getting a referral for green workers decreases while it increases for blues. If we held the equilibrium hiring threshold constant, this would indeed increase the mass of green workers hired via referrals and decrease it for blue workers, and hence increase green employment overall. Let us call this the \textit{direct effect} of increasing current green employment. However, the threshold for hiring referrals is not fixed. As we increase current green employment to make it more balanced, the population-wide probability of not getting a referral decreases. This exacerbates the lemons effect and thus reduces the referral hiring threshold. Thus, workers with referrals are now more likely to be hired. As there are relatively more blue than green workers with referrals, this \textit{indirect effect} disproportionately benefits blue workers. Depending on the setting, this can overturn the direct effect so that increasing current green employment leads to fewer greens being hired in the next period than would have been hired otherwise. Which of the effects dominates depends on the mass of referred workers near the hiring threshold. If this mass is large, then many referred workers are affected by a change in the threshold and the indirect effect can dominate. The exacerbated lemons effect can also lead firms to stop hiring from the pool which also disproportionately hurts the disadvantaged group.

\begin{figure}[!ht]
\centering
\begin{subfigure}{.4\textwidth}
\includegraphics[]{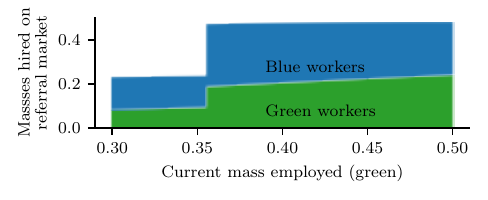}
\caption{Referral market hires}\label{fig:1a}
\end{subfigure}
\hfill 
\begin{subfigure}{.5\textwidth}
  \includegraphics[]{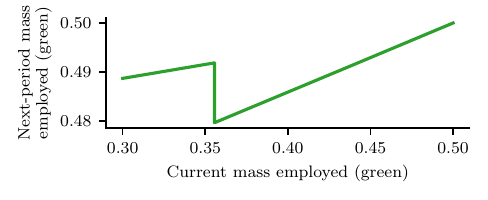}
\caption{Employment of Greens}\label{fig:1b}
\end{subfigure}
\caption{Referrals for a group happen according to a Poisson distribution: $\hat{P}(k|m) = \frac{m^{k}e^{-m}}{k!}$; and parameters values are $n_b=n_g=.7$ so that $n=1.4$; and $ v_i$ takes on three values $0,\frac{1}{3}$, and $1$, with equal probability. We plot the mass of blue and green workers hired on the referral market and the next-period employment mass of green workers as a function of the current employment rate of green workers. }
\label{fig:nonmon}
\end{figure}

Figure \ref{fig:nonmon} shows an example of this phenomenon in which referrals are made uniformly at random among a group so that $\hat{P}$ is a Poisson distribution. Increasing the current mass of employed green workers towards the unbiased employment rate of $1/2$ increases the number of referrals green workers get. The direct effect is then seen via the increase in green workers hired on the referral market. However, additionally the overall probability of not getting a referral decreases, which leads to more lemons in the pool, and so the hiring threshold on the referral market drops. In particular, at a current employment of greens below  $\approx 0.355$, the threshold is such that referred workers with the middle value of $\frac{1}{3}$ are not hired, whereas they are once green employment rises above $\approx 0.355$  (see Panel~\ref{fig:1a}). This change in the equilibrium threshold is the indirect effect: referred workers with the middle productivity value are hired for the higher current employment of green workers; but this disproportionately helps blue workers as they have more referrals. As a result, the next-period employment rate of green workers is lower when their current employment rate is just above $\approx 0.355$ rather than just below $\approx 0.355$ (see Panel~\ref{fig:1b}).

The observed nonmonotonicity may appear due to the discrete nature of our model, but it extends beyond that. There are two sources of discreteness.  One is the discrete distribution. That is not essential to this example---a continuous approximation to this distribution leads to a similar result.  The second is that the market operates in discrete time.  With continuous time, the dynamics depend on the relative slopes of the direct and indirect effects, which still go in opposite directions. Regardless, many labor markets are, in fact, best approximated by a discrete-time model due to their seasonal nature.

\subsection{Affirmative Action's Dynamic Impact}
\label{sec:aadynamic}

The potential complication with affirmative action policies, as we have seen above, is that a change in employment rates can also impact the equilibrium referral hiring threshold, which can result in an indirect effect that can be countervailing.  In spite of that effect, there is much that we can deduce about the dynamic impact of such policies. Indeed, affirmative action not only has immediate effects but can also have longer-term effects since referrals are also moved closer to being balanced. This points out an interesting aspect of affirmative action: It not only increases short-term equality of employment based on characteristics, but it also has long-term network effects that further reduce inequality and improve mobility and future productivity.

First, we show that (generically) a small one-period increase in the employment of the under-employed group increases their employment as well as overall production in all subsequent periods. Thus, affirmative action has long-run implications and even a one-time policy can lead to continued and amplified improvements in both equality and total production.
 
\begin{proposition}\label{aamultiplier}
Let $F$ be any distribution with discrete support. For almost every $e_g$, there exists $\varepsilon>0$ so that an increase in $e_g$ by up to $\varepsilon$ leads to a strict increase in total production as well as green employment in all future periods (relative to what they would have been without the increase).
\end{proposition}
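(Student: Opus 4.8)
The plan is to reduce both claims to a single monotonicity statement about green employment, and then to establish that statement by isolating the ``direct effect'' at generic employment levels. Throughout I work in the regime maintained in this section, in which firms hire from the pool, so every firm hires and next-period employment totals one (for $n>1$ a constant mass $n-1$ of workers is unemployed, all sitting in the pool). Since $h_b=h_g=1$ forces $R_b=n_b/n$ and $R_g=n_g/n$, referrals are balanced, so Proposition~\ref{longrun} implies that both the baseline and the perturbed employment paths converge monotonically from below to the balanced level $n_g/n$, and Proposition~\ref{proposition:differential_wages} keeps both paths blue-advantaged ($e_g^t<n_g/n$) at every date. Crucially, the \emph{production} claim then follows from the \emph{green-employment} claim: at each date $t\ge 1$ both paths carry total employment one, so ``more green'' coincides with ``more balanced,'' and Proposition~\ref{effequ} converts more balanced current employment into strictly higher next-period production (at $t=1$ I apply Proposition~\ref{effequ} directly to the two current profiles $(e_b,e_g+\varepsilon)$ and $(e_b,e_g)$). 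Hence it suffices to prove $\hat e_g^t>e_g^t$ for every $t$, where hats denote the perturbed path.

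Next I would use genericity to remove the indirect (threshold) effect behind the non-monotonicity of Section~\ref{sec:ineq_period_to_period}. Because $F$ is discrete, the equilibrium threshold equals the pool value $\E_{\tilde v,1}[v_i\mid i\in\pool]$, which moves continuously with $P(0)$, but the set of values actually hired on the referral market—and hence the referral hiring probability $q:=\Pr(v_i\ge\tilde v)$—is locally constant in $e_g$ except at the finitely many ``switch'' levels where the pool value crosses an atom of $F$. On the blue-advantaged interval $e_g\mapsto P(0)$ is strictly monotone, so these switch levels are finite; a monotone trajectory converging to $n_g/n$ meets them only if one of its countably many iterates lands on one, so for almost every starting $e_g$ the whole baseline path avoids the switch set with a uniform positive gap, and I take $\varepsilon$ below this gap.

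The heart of the argument is to sign the direct effect at a non-switch point, where $q$ is constant. Writing $a:=\widehat P(0\mid m_g)$ for a green's no-referral probability, $\rho$ for the probability that a pool worker is hired, and using $\Pr(v_i<\tilde v)=1-q$, green employment next period admits the transparent accounting
\begin{equation*}
e_g'=n_g\big[(1-a)\big(q+(1-q)\rho\big)+a\,\rho\big],
\end{equation*}
since a referred green is hired with probability $q+(1-q)\rho$ and a non-referred green with probability $\rho$. Here $\rho=\mu_P/u$ with pool size $u=nP(0)\,q+n(1-q)$, pool hires $\mu_P=1-n(1-P(0))q$, and the identity $u-\mu_P=n-1$. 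Strict convexity of $\widehat P(0\mid\cdot)$, with $n_gm_g+n_bm_b=1$ and $m_g<m_b$, makes $N_0:=nP(0)$ strictly decreasing in $e_g$ on the blue-advantaged region. Differentiating $e_g'$ and substituting $1-\rho=(n-1)/u$ and $\mu_P'=u'=qN_0'$ collapses the sign requirement to
\begin{equation*}
\alpha\,n_b\big[1-q(1-b)\big]+n_b\,\beta\,\big[(1-q)+qa\big]>0,
\end{equation*}
where $\alpha=-\,da/de_g>0$, $\beta=db/de_g\ge 0$, and $b=\widehat P(0\mid m_b)$; every term is nonnegative and the first is strictly positive. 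This is the step where the referral channel, which raises green hires, is shown to dominate the countervailing pool channel, which shrinks as more workers get vetted—this cancellation, and the two identities ($\Pr(v_i<\tilde v)=1-q$ and a constant pool-unemployment mass $n-1$) that produce it, is the main obstacle. The same computation with $\beta=0$ handles the period-$0$ step, where only $e_g$ (not $e_b$) is perturbed.

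Finally I would propagate the inequality. The computation just sketched gives slope $\phi'>0$ for the green-employment map on each piece, while the slope bound behind Proposition~\ref{longrun} gives $\phi'<1$; hence $\phi$ is a strict contraction on each piece. Starting from $0<\hat e_g^0-e_g^0\le\varepsilon$, and with $\varepsilon$ below the switch-gap, an induction shows that at every date both paths sit on the same piece and $0<\hat e_g^{t}-e_g^{t}<\hat e_g^{t-1}-e_g^{t-1}$, so $\hat e_g^t>e_g^t$ for all $t$. Combined with the reduction in the first paragraph, this delivers strictly higher green employment and, via Proposition~\ref{effequ}, strictly higher production in every future period, as claimed.
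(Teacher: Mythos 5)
Your overall architecture is the same as the paper's: reduce the production claim to the green-employment claim via Proposition~\ref{effequ}, use discreteness of $F$ to show the one-step employment map is continuous and strictly increasing away from a finite set of ``switch'' levels (the paper's set $D=\{e_g:\tilde{v}(e_g)\in\supp(F)\}$, whose finiteness follows from the strict monotonicity of $\tilde{v}(\cdot)$, the paper's Fact~\ref{fact1}), propagate over a finite horizon by genericity, and handle the infinite tail via the convergence guaranteed by Proposition~\ref{longrun}. Within that shared skeleton, your direct-effect computation is a genuine improvement in explicitness: the paper's Step~1 merely asserts that $e_g^t=n_g(M_g^t+(1-M_g^t)Q^t)$ is strictly increasing because $P_g(0)$ falls and $P_b(0)$ rises, without exhibiting the cancellation between the referral channel and the shrinking pool-hire probability. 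Your identities $u-\mu_P=n-1$ and $\mu_P'=u'=qN_0'$ collapse the derivative to $\frac{(n-1)q}{u^2}\bigl[\alpha n_b(1-q(1-b))+n_b\beta((1-q)+qa)\bigr]$, and I verified this algebra is correct (using $u=n_g(1-q(1-a))+n_b(1-q(1-b))$ and $1-(1-a)q=(1-q)+qa$); note also that $q<1$ in any equilibrium with pool hiring, so the first bracketed term is strictly positive. This makes the dominance of the direct effect an explicit inequality rather than an assertion.

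There is, however, one genuine gap: your claim that ``for almost every starting $e_g$ the whole baseline path avoids the switch set with a uniform positive gap'' is false whenever the steady-state threshold lies on an atom, i.e., $\tilde{v}(n_g/n)\in\supp(F)$. In that case $e_g=n_g/n$ is itself a switch level, the baseline path converges to it, and $\inf_t \mathrm{dist}(e_g^t,\,\text{switch set})=0$ for \emph{every} admissible start, so no choice of $\varepsilon$ below ``the gap'' exists and your induction (``both paths sit on the same piece at every date'') loses its stated justification. The repair is exactly the paper's Step~3: since $e_g^t\uparrow n_g/n$ and $\tilde{v}(\cdot)$ is strictly decreasing, the thresholds $\tilde{v}(e_g^t)$ decrease to $\tilde{v}(n_g/n)$, so there is a finite $T$ with $[\tilde{v}(e_g^T),\tilde{v}(n_g/n))\cap\supp(F)=\emptyset$; both monotone paths then live forever in the single terminal piece below the limit (the limit is approached but never attained in finite time), and genericity plus a small $\varepsilon$ need only control the finitely many dates before $T$. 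Your contraction estimate $\phi'<1$ is compatible with this fix but cannot substitute for it, because contraction of $\hat{e}_g^t-e_g^t$ does not by itself prevent the perturbed path from crossing a switch that the baseline is approaching. Separately, and more minor: you assume pool hiring at every date, but the maintained assumption in this part of the paper only guarantees pool hiring at the steady state, and the paper's proof explicitly treats the branch $B=\{e_g:\tilde{v}(e_g)=w_{min}\}$ (where $e_g^t=(1-P_g(0|1-e_g^{t-1},e_g^{t-1}))n_g(\Pr(v_i>w_{min})+r\Pr(v_i=w_{min}))$ is trivially strictly increasing) as well as the single indifference point; your argument should either cover this branch, which is easy, or justify excluding it.
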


We illustrate these dynamic effects in Figure~\ref{fig:AA}, which shows the impact of a one-time increase in the employment of greens. Referrals are again made uniformly at random among a group so that $\hat{P}$ is a Poisson distribution. We see three long-term effects. First, increasing current employment of greens increases their employment in all periods relative to what it would have been without. Second, the difference in average wages between blues and greens is decreased in all periods. Third, this also improves production in all subsequent periods but has a cost in the current period from replacing referred blues with draws from the pool to get sufficient greens.

\begin{figure}[!ht]
\subfloat[Employment of Greens]{
\label{fig:2a}
\includegraphics[]{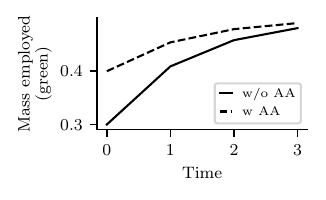}
}\hspace{-1.75em}
\subfloat[Blue Minus Green Wages]{
\label{fig:2b}
\includegraphics[]{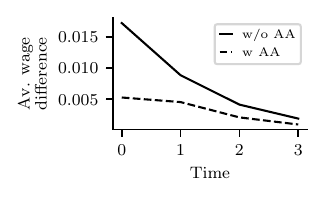}
}\hspace{-1.75em}
\subfloat[Productivity]{
\label{fig:2c}
\includegraphics[]{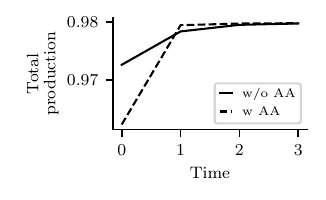}
}
\caption[Caption]{Referrals for a group happen according to a Poisson distribution: $\hat{P}(k|m)=\frac{m^{k}e^{-m}}{k!}$; and parameter values are $n_b=n_g=1$ so that $n=2$; and $v_i$ takes on two values $0$ and $1$, with .95 and .05 probability, respectively. The blue line is without affirmative action and orange is with affirmative action in the first period. The graphs above show the dynamic effects of changing green employment in from .3 to .4 by randomly choosing some firms with high value referred blues and having those firms to go to the pool.\footnotemark{}  
}
\label{fig:AA}
\end{figure}
\footnotetext{As we discuss in Section~\ref{sec:optAA}, having firms forego high-value blue referred workers instead of hiring low-value referred green workers is the more efficient form of affirmative action in this case. Note that we consider a policy that induces enough firms who have blue referrals to go to the pool to effect this change, whether it be via regulation or a payment that compensates the firms forced to go to the pool when having a high-value blue referral.  We remark that blue workers with multiple referrals are less likely than ones with just one referral to be affected by this policy since firms are independently chosen.}

The above example exhibits unambiguously monotone results since it only involves two productivity levels.  With more productivity levels, large changes in employment can lead to threshold reversals, and so large affirmative action policies can involve some non-monotonicites in some future periods. In particular, as shown in Section~\ref{sec:ineq_period_to_period}, an increase in the current employment of a disadvantaged group does not necessarily increase their next-period employment rate compared to what it would have been due to the indirect effect operating through the hiring threshold. As a result, employment rates in some future period can, under some particular situations, be further apart than they would have been without an affirmative action policy (furthermore implying a decrease in total production in that period relative to that without the policy). As shown in Lemma~\ref{lem:longrun}, there would still be eventual convergence, given referral balance, but it could be sped up in some periods and slowed down in others. Thus, care is warranted in designing such policies, given the potential indirect effects due to 
changes in equilibrium lemons effects.  

The long-term impact of affirmative action policies from a one-time intervention is important in light of the issues raised by \cite{coate1993will}. They show that a long-term affirmative action policy can make a group dependent upon the policy and reduce their skill acquisition. Instead, in our model, skill acquisition and affirmative action policies are complements: Increasing green employment through affirmative action policies creates more hiring opportunities for future green workers and incentivizes investment in human capital (see Section~\ref{sec:investment}).

We note that in the context of costly investment, as in Section~\ref{sec:investment}, affirmative action might have to be of a sufficient size to have an impact.  For instance, if only a few greens gain employment, that may be insufficient to incentivize the next generation to acquire skills and high values.  This means that affirmative action policies could have a discontinuous impact based on their size, with no lasting impact until hitting a large enough threshold to induce future investment, and then having a sizeable impact.  

\section{Concluding Remarks}
\label{sec:conclusion}

As discussed by \cite{alfani2021}, inequality has tended to very high levels around the world and throughout history unless actively attacked by policies. We have shown one source of this inequality, and the resulting inefficiency and immobility, is the reliance of the labor market on referrals.  Referral networks, together with naturally-occurring homophily, result in a concentration of job referrals among a privileged group, which, in turn, decreases productivity and mobility and increases inequality.  

As we have seen, this combination of immobility, inequality, and inefficiency can be counteracted by policies that take advantage of network effects---for instance, algorithmic fairness or affirmative action policies---that encourage hiring from the pool.   These policies have long-run impacts even if enacted for short periods. There are also other policies that can have similar effects, for instance, simply subsidizing internships of workers from the pool or providing more information about non-referred workers via various sorts of certification.\footnote{Another policy is to directly counteract the homophily in networks by influencing network formation.  Some of this occurs naturally (\cite{galeottim2014}), but can also be policy influenced, although that is much more challenging (e.g., see the discussion in \cite{jackson2021}).} In addition,  even though full efficiency is not reached in the setting with firing, the ability to fire does improve overall productivity and reduce inequality; which provides insight into one advantage of flexible labor markets.  All of these policies, by increasing current green employment and spreading referrals through the population, have similar long-term network effects. 

Although algorithmic fairness and affirmative action policies have been implemented (in some parts of the world), they are often done on a tiny scale.  For instance, Ivy League universities still have more students from the top 1 percent of the income distribution than the bottom 50 percent \citep{chetty2020income}. Understanding the long-term impact and network leverage of such policies provides stronger arguments for their implementation. Importantly, overcoming persistent inequality and immobility and increasing productivity can ultimately lead to greater growth of the whole economy and Pareto improvements, and the analysis here should make clear that such policies are far from being zero-sum.

While we made a number of simplifying assumptions in our model, our conclusions continue to hold in some natural extensions of our model, and we examine several in the appendix. 

The model that we have presented here should be useful as a base for additional explorations.  For instance, we have not modeled how firms choose how many people to hire.  Firms' decisions on hiring could react to the attractiveness of the pool and how much access they have to referrals,\footnote{For an analysis of the size of firms in reaction to information about labor markets, see \cite{chandrasekhar2020network}.} and so wider-spread referrals could also lead to the creation of new openings, amplifying some of the effects we have shown.  Another effect of referrals is on how long workers last at their current jobs, and allowing for reduced turnover due to referrals would also be interesting to explore but would require changes to the model. One could also explore career choices using this model as a base.  People would (inefficiently) be driven to choose professions of their family and friends, even if they might be more talented in another profession, because of the improved wage and employment prospects due to referral connections. Additionally, firms sometimes hire referred workers as favors to their current workers (or because the current worker lied), even when the referred workers are of low value.  Such effects could exacerbate the advantage to referrals and lead to further increases in inequality and immobility across groups. Finally, and perhaps most importantly, we have taken the network as exogenous.  Given its important implications, it could affect network formation.  Combining our dynamic analysis with a job-network formation model such as that of \cite{galenianos2021referral} would be an important next step.  

{
\small
\bibliographystyle{jpe}
\bibliography{jobnetworks.bib}
}

\appendix 

\newpage 
\setcounter{page}{1}
\begin{center}
{\Large {Online Appendix for}}\\[1em]
{\Large {``The Role of Referrals in Immobility, Inequality, and Inefficiency in Labor Markets''}}\\[1.2em]

{\large
{Lukas Bolte, Nicole Immorlica, and Matthew O.\ Jackson}
}
\end{center}

\section{Proofs of Results in Sections \ref{sec:model}, \ref{sec:groups} and \ref{sec:aa} Omitted from the Text}\label{appendix:proofs}

\begin{proof}[Proof of Lemma~\ref{eqw}]
We first argue the existence of the fixed point. Define a correspondence $G$ by
\[
G(\tilde{v}) = \{\max \{\ubar{w},  \E_{\tilde{v},r} [ v_i  |  i \in \pool]\} | {r\in[0,1]}\}.
\]
Varying over $r$ has an impact on the expected value in the pool at atoms in the distribution (that differ from a fixed point). The existence of an equilibrium threshold follows from the fact that $G$ is convex- and compact-valued and upper-hemicontinuous, and is bounded above (by $\E[v_i]$ from Lemma~\ref{lemma:LE}) and below (by $\ubar{w}$). Thus, by Kakutani's Theorem, there is a fixed point. The fact that firms can mix arbitrarily at a fixed point, comes from the fact that if the equilibrium threshold equals the expected value in the pool, $\tilde{v}=\E_{\tilde{v},r}[v_i|i \in \pool]$, the value of the referred worker and the expected value in the pool are then exactly equal, and then keeping or putting such a worker into the pool does not change the average value in the pool. If the equilibrium threshold equals the minimum wage, $\tilde{v}=\ubar{w}$, and the expected value in the pool is strictly less, then adding or removing the referred worker does not change the fact that the expected value in the pool is strictly less than the equilibrium threshold.  

The uniqueness of the fixed point is shown as follows. Suppose to the contrary that there are two distinct equilibrium thresholds, ${\tilde{v}}<{\tilde{v}}'$. We provide the argument for the case in which ${\tilde{v}}'> \ubar{w}$ as otherwise it must be that ${\tilde{v}}={\tilde{v}}'=\ubar{w}$, since the correspondence is lower-bounded by $\ubar{w}$. The pool for threshold ${\tilde{v}}'$ consists of workers without any referral, referred workers with values not exceeding ${\tilde{v}}$, and referred workers with value between ${\tilde{v}}$ and ${\tilde{v}}'$ (and possibly some at exactly ${\tilde{v}}'$ depending on the mixing). The expected productivity of workers without referrals and referred workers with values not exceeding ${\tilde{v}}$ is at most ${\tilde{v}}$, since ${\tilde{v}}$ is an equilibrium threshold. Averaging the value of this group with the expected value of workers with referral and values between ${\tilde{v}}$ and ${\tilde{v}}'$ gives the expected productivity in the pool if the threshold is ${\tilde{v}}'$; but this cannot result in an expected value of ${\tilde{v}}'$ as ${\tilde{v}}<{\tilde{v}}'$.

The equilibrium behavior of firms and the wages of workers are immediate.
\end{proof} 

\begin{proof}[Proof of Lemma~\ref{lemma:LE}]
Clearly $\E_{\tilde{v},r}[v_i|i \in \pool] \leq \E[v_i]$, since the pool is a convex combination of workers who have no referrals, who thus have the unconditional expected value, and workers who had referrals but had a value no higher than a threshold, and thus have expected values no higher than the unconditional expected value. If the threshold is equal to the minimum value of $F$, then the expected value in the pool is as well and we are done. Thus, since $P(0)<1$, there are some referrals, and since $F$ is nondegenerate it follows that there is a positive mass of workers with values above and a positive mass below the threshold (which is at most the expected value and strictly greater that the minimum value), and thus the expected value of rejected referred workers is strictly below the unconditional expected value, which then means that the expected value of the pool is also below.
\end{proof}

\begin{proof}[Proof of Proposition~\ref{proposition:group_outcomes}]
Note that referral imbalance and employment bias imply that
\begin{multline*}
\frac{n_b}{n_g}\leq \frac{R_b}{R_g} 
= \frac{h_b n_b+(1-h_g) n_g}{h_gn_g + (1-h_b)n_b}
= \frac{h_b \frac{n_b}{n}+(1-h_g) \frac{n_g}{n}}{h_g \frac{n_g}{n} + (1-h_b)\frac{n_b}{n}}
\leq  \frac{h_b {e_b}+(1-h_g) {e_g}}{h_g e_g + (1-h_b)e_b}.
\end{multline*}
where the last inequality also uses the assumption that $h_b\geq 1-h_g$. The last expression gives the ratio of the masses of referrals each group gets. The inequality implies that blue workers get more referrals per capita than green workers.

The first claimed comparative static now follows from the fact that the wage distribution of a worker with at least $2$ referrals first-order stochastically dominates the wage distribution of a worker with exactly $1$ referral which first-order stochastically dominates the wage distribution of a worker with no referrals; and, as the distribution of referrals for blue workers first-order stochastically dominates that of green workers as blue workers get more referrals per capita. The second comparative static also follows as having more referrals per capita implies a higher the employment rate. As the inequality above becomes strict if either of the conditions is strict, so do the first-order dominance comparison of the wage distributions and the difference in employment rates. (Recall that we assumed $\hat{P}(0|m)$ to be strictly decreasing in $m$.)

The third claim---the wage and employment distribution of blue workers dominating those of green workers in all future periods---follows immediately, by simply repeatedly applying the first two claims (since by the second claim, an employment bias favoring blues will always ensue).

For the final claim, we make use of a couple of results stated in Section~\ref{sec:Pchanges} of the supplemental appendix. By Lemma~\ref{lem:concentration}, $P(0)$ is strictly higher in the equilibrium starting with $(e_b,e_g)$ than in the equilibrium starting with $(e_b',e_g')$, and so the conclusion follows from Proposition~\ref{lemma:inequality_and_productivity}.
\end{proof}

We also prove the claims following the proposition about average productivities. First, we show that the average productivity of employed green workers is lower than that of employed blue workers.  Workers with value less than $\tilde{v}$ have the same probability of being hired regardless of their group as they are hired from the pool.  Workers with value more than $\tilde{v}$ are hired on the referral market if they have at least one referral.  As blues have (weakly) more referrals than greens and thus their distribution of referrals first-order stochastically dominates that of green workers,  blue workers with value more than $\tilde{v}$ are (weakly) more likely to be hired overall than green workers with value more than $\tilde{v}$ (and strictly so if either of the conditions is strict).  Together, these observations show the desired comparison of the productivity of employed green versus blue workers. 

Finally, we show that the average productivity of unemployed green workers is at least as high as that of blues.  Note that the productivity of an unemployed green (resp.\ blue) worker is equal to the productivity of a green (resp.\ blue) worker in the pool.  Let
$$
\E^g_{\tilde{v},r}[v_i|i\in \pool\cap\text{green}] \coloneqq   \frac{ P_g(0) \E[ v_i  ]  +  (1-P_g(0)) (\Pr(v_i<{\tilde{v}})\E [v_i  |  v_i< {\tilde{v}} ]+\Pr(v_i=\tilde{v} )(1-r)\tilde{v}) }{P_g(0) +(1-P_g(0)) (\Pr(v_i<{\tilde{v}})+\Pr(v_i=\tilde{v})(1-r))};
$$
$$\E^b_{\tilde{v},r}[v_i|i\in \pool\cap\text{blue}] \coloneqq   \frac{ P_b(0) \E[ v_i  ]  +  (1-P_b(0)) (\Pr(v_i<{\tilde{v}})\E [v_i  |  v_i< {\tilde{v}} ]+\Pr(v_i=\tilde{v} )(1-r)\tilde{v}) }{P_b(0) +(1-P_b(0)) (\Pr(v_i<{\tilde{v}})+\Pr(v_i=\tilde{v})(1-r))}.\\
$$
be the productivity in the pool of greens and blues, where $P_g(0)$ (resp.\ $P_b(0)$) is the probability that a green (resp.\ blue) worker gets no referral. Since $P_g(0)\geq P_b(0)$ as the referral distribution of blue workers first-order stochastically dominates that of green workers, it follows that $\E^g_{\tilde{v},r}[v_i|i\in \pool\cap\text{green}]\geq \E^b_{\tilde{v},r}[v_i|i\in \pool\cap\text{blue}]$, proving the claim.

\begin{proof}[Proof of Lemma~\ref{lem:longrun}]
We first prove that there exists a unique steady state of the employment rates. Let $P_b^t(0)$ and $P_g^t(0)$ be the probabilities of not getting a referral for blues and greens if employment levels are given by $e_b^t$ and $e_g^t$, respectively. Then the difference in employment rates $\Delta \coloneqq \frac{e_b^{t+1}}{n_b}-\frac{e_g^{t+1}}{n_g}$, in period $t+1$ is 
$$
\Delta=\left(P_b^t(0)\cdot Q^t+(1-P_b^t(0))\cdot R^t\right)
-
\left(P_g^t(0)\cdot Q^t+(1-P_g^t(0))\cdot
R^t\right)
$$
where $Q^t$ is the probability of being hired in the pool and 
$$
R^t=\left(\Pr(v_i>\tilde{v}^t)+r\Pr(v_i=\tilde{v}^t)+Q^t(1-r)\Pr(v_i=\tilde{v}^t)+Q^t\Pr(v_i<\tilde{v}^t)\right)
$$
is the probability of being hired given that you got a referral.  Simplifying, we see
\begin{equation}\label{eq:proof_longrun}
\Delta
= \left(P_b^t(0)-P_g^t(0)\right)\cdot \left(Q^t-R^t\right).
\end{equation}
First, note that $P_b^t(0)$ and $P_g^t(0)$ are continuous in $e_b^t$ and $e_g^t$. To see this, note that $\hat{P}(k|m)$ is continuous in $m$ for all $k$ as $m$ is the mean of the distribution and since $\hat{P}(\cdot|m')$ first-order stochastically dominates $\hat{P}(\cdot|m)$ for $m'>m$. It then follows that the right-hand side of \eqref{eq:proof_longrun} gives a convex- and compact-valued, upper hemicontinuous correspondence with respect to period $t$ employment masses if one varies $r$ whenever an atom at the referral threshold exists. Furthermore, if $e^t_b=\min\{1,n_b\}$, then mechanically it must be that $e_b^{t+1}\leq e_b^t$ (and thus $\Delta \leq \frac{e_b^{t}}{n_b}-\frac{e_g^{t}}{n_g}$). Similarly, if $e^t_g=\min\{1,n_g\}$, then $e_g^{t+1}\leq e^t_g$ (and thus $\Delta\geq  \frac{e_b^{t}}{n_b}-\frac{e_g^{t}}{n_g}$). Thus, a fixed point of employment rates (or equivalently employment or referral masses) exists. By assumption, the induced expected value in the pool is above the minimum wage so that firms indeed hire from the pool.

Next, we show that the fixed point is unique. If $h_b=1-h_g$, then referrals per capita are always equal across groups and the unique fixed point is characterized by equal employment rates. Thus, assume that $h_b>1-h_g$ for the remainder of the proof. 

We make use of the following lemma which is proved after this proof.
\begin{lemma}\label{lem:extreme}
Let $\overline{e}_b$ and $\overline{e}_g$ with $\overline{e}_g=1-\overline{e}_b$ uniquely solve
$$
\frac{1}{n_b}(h_b\overline{e}_b+(1-h_g)\overline{e}_g)= \frac{1}{n_g}(h_g\overline{e}_g+(1-h_b)\overline{e}_b),
$$
i.e., the employment masses at which the two groups get the same referrals per capita.

Take any two employment levels in period $t$, $e_b^t$ and $\tilde{e}_b^{t}$, and let the next-period employment levels be denoted by $e_b^{t+1}$ and $\tilde{e}_b^{t+1}$ respectively; with $e_g^t=1-e_b^t, \tilde{e}_g^t=1-\tilde{e}_b^t,e_g^{t+1}=1-e_b^{t+1}, \tilde{e}_g^{t+1}=1-\tilde{e}_b^{t+1}$. If $\tilde{e}_b^{t}>e_b^t\geq \overline{e}_b$, then 
\begin{equation*}
    \left(\frac{\tilde{e}_b^{t+1}}{n_b}-\frac{\tilde{e}_g^{t+1}}{n_g}\right) - \left(\frac{e_b^{t+1}}{n_b}-\frac{e_g^{t+1}}{n_g}\right)<\left(\frac{\tilde{e}_b^{t}}{n_b}-\frac{\tilde{e}_g^{t}}{n_g}\right) - \left(\frac{e_b^{t}}{n_b}-\frac{e_g^{t}}{n_g}\right).
\end{equation*}
Similarly, if $\tilde{e}_b^{t}<e_b^t\leq \overline{e}_b$, then
\begin{equation*}
    \left(\frac{\tilde{e}_b^{t+1}}{n_b}-\frac{\tilde{e}_g^{t+1}}{n_g}\right) - \left(\frac{e_b^{t+1}}{n_b}-\frac{e_g^{t+1}}{n_g}\right)>\left(\frac{\tilde{e}_b^{t}}{n_b}-\frac{\tilde{e}_g^{t}}{n_g}\right) - \left(\frac{e_b^{t}}{n_b}-\frac{e_g^{t}}{n_g}\right).
\end{equation*}
\end{lemma}

Let $e_b$ be a fixed point and consider any employment level $\widehat{e}_b^t\neq e_b$; with $\widehat{e}_g^t=1-\widehat{e}_b^t$. If $e_b,\widehat{e}_b^t\geq \overline{e}_b$ or $e_b,\widehat{e}_b^t\leq \overline{e}_b$ , then Lemma~\ref{lem:extreme} directly implies that $\widehat{e}_b^t$ is not a fixed point. To see this, e.g., consider the case $e_b>\widehat{e}_b^t\geq \overline{e}_b$. By Lemma~\ref{lem:extreme}, letting $e_b^t=\widehat{e}_b^t$ and $\tilde{e}_b^t=e_b$, we have 
$$ 
    \left(\frac{e_b}{n_b}-\frac{e_g}{n_g}\right) - \left(\frac{\widehat{e}_b^{t+1}}{n_b}-\frac{\widehat{e}_g^{t+1}}{n_g}\right)<\left(\frac{e_b}{n_b}-\frac{{e}_g}{n_g}\right) - \left(\frac{\widehat{e}_b^{t}}{n_b}-\frac{\widehat{e}_g^{t}}{n_g}\right),
$$
as $e_b$ is a fixed point so that $\widehat{e}_b^t$ is not a fixed point.

Suppose now that $e_b<\overline{e}_b\leq \widehat{e}_b^t$. By Lemma~\ref{lem:extreme} letting $\tilde{e}_b^t=e_b$ and $e_b^t=\overline{e}_b$, we have 
$$
    \left(\frac{{e}_b}{n_b}-\frac{e_g}{n_g}\right) - 0  >\left(\frac{{e}_b}{n_b}-\frac{{e}_g}{n_g}\right) - \left(\frac{\overline{e}_b}{n_b}-\frac{\overline{e}_g}{n_g}\right),
$$
as the two groups get the same referrals per capita when the employment of blue workers is given by $\overline{e}_b$, and again as $e_b$ is a fixed point. Again by Lemma~\ref{lem:extreme}, now letting $e_b^t=\overline{e}_b$ and $\tilde{e}_b^t = \widehat{e}_b^t$, we have
$$
    \left(\frac{\widehat{e}_b^{t+1}}{n_b}-\frac{\widehat{e}_g^{t+1}}{n_g}\right) -0\leq \left(\frac{\widehat{e}_b^{t}}{n_b}-\frac{\widehat{e}_g^{t}}{n_g}\right) - \left(\frac{\overline{e}_b}{n_b}-\frac{\overline{e}_g}{n_g}\right),
$$
where equality holds if $\widehat{e}_b^t=\overline{e}_b$. Combining the two previous inequality yields
\begin{equation*}
    \frac{\widehat{e}_b^{t+1}}{n_b}-\frac{\widehat{e}_g^{t+1}}{n_g}<\frac{\widehat{e}_b^{t}}{n_b}-\frac{\widehat{e}_g^{t}}{n_g},
\end{equation*}
so that $\widehat{e}_b^t$ is not a fixed point.

Lastly, the final case, $\widehat{e}_b^t\leq \overline{e}_b<e_b$, is proven analogously to the previous case by interchanging the subscripts as the role of green and blue workers is symmetric, and is thus omitted.

Let us now show that referral imbalance goes hand in hand with long-run bias in employment rates. As before, let $e_b,e_g$ be be the fix point; further, let $R_b(e_b,e_g)$ denote the associated mass of referrals blues get; with $e_g=1-e_b$ and $R_g(e_b,e_g)$ defined analogously. 
We have
\begin{multline*}
	\frac{e_b}{e_g} = \frac{n_b}{n_g} 
	\iff \frac{R_b(e_b,e_g)}{R_g(e_b,e_g)} = \frac{n_b}{n_g} 
	\iff \frac{e_bh_b + e_g(1-h_g)}{e_gh_g + e_b(1-h_b)}=\frac{n_b}{n_g} 
	\iff  \frac{R_b}{R_g} = \frac{n_b}{n_g}.
\end{multline*}
As employment and referral are positively linked, we can further say that $\frac{e_b}{e_g}\geq \frac{n_b}{n_g} \iff \frac{R_b}{R_g}\geq \frac{n_b}{n_g}$.

Finally, we show that the system of employment rates converges to this fixed point if there is no referral imbalance, i.e., $\frac{R_b}{R_g}=\frac{n_b}{n_g}$. As there is no referral imbalance, the unique fixed point features equal employment rates.

First, by assumption, the expected value in the pool induced by the steady state of equal employment rates is above the minimum wage. As with referral balance, the probability no referrals is minimized with equal employment rates, so is the expected value in the pool. Hence, for any employment levels, firms will be hiring from the pool.

Consider any initial employment, $e_b^t,e_g^t$, and suppose without loss that $\frac{e_b^t}{e_g^t}>\frac{n_b}{n_g}$. Then
\begin{equation*}
    h_be_b^t+(1-h_g)e_g^t>R_b, \quad h_ge_g^t+(1-h_b)e_b^t<R_g 
\end{equation*}
implying
\begin{equation*}
    \frac{h_be_b^t+(1-h_g)e_g^t}{h_ge_g^t+(1-h_b)e_b^t}>\frac{R_b}{R_g}=\frac{n_b}{n_g}.
\end{equation*}
Therefore blue workers get relatively more referrals per capita than green workers. Consequently, blue workers have a lower probability of not getting a referral and thus have a higher employment rate at the end of the period than greens. By Lemma~\ref{lem:extreme}, letting $e_b^t=\overline{e}_b=\frac{n_b}{n}$ and $\tilde{e}_b^t=e_b^t$, we have 
\begin{equation*}
    \frac{e_b^{t+1}}{n_b}-\frac{e_g^{t+1}}{n_g}<\frac{e_b^{t}}{n_b}-\frac{e_g^{t}}{n_g},
\end{equation*}
i.e., the employment of blue workers is decreasing whenever $\frac{e_b^t}{e_g^t}>\frac{e_b}{e_g}$. As the employment of blue workers is bounded below, because per capita it is always greater than that of green workers, it must converge, and therefore converges to the unique fixed point.
\end{proof}

\begin{proof}[Proof of Lemma~\ref{lem:extreme}]
Suppose $\tilde{e}_b^{t}>e_b^t\geq \overline{e}_b$. Consider the first factor of \eqref{eq:proof_longrun}. We will use two facts to eventually bound this factor. First, the difference in the average number of referrals a blue worker gets given the employment levels $\tilde{e}_b^t,\tilde{e}_g^t$ and $e_b^t,e_g^t$ is given by 
\begin{align*}
    \frac{1}{n_b}{h_b\tilde{e}_b^t+h_g\tilde{e}_g^t}-\frac{1}{n_b}{h_b{e}_b^t+h_g{e}_g^t} &= \frac{1}{n_b}{h_b\tilde{e}_b^t+h_g(1-\tilde{e}_b^t)}-\frac{1}{n_b}{h_b{e}_b^t+h_g(1-{e}_b^t)} \\
    &= \frac{1}{n_b}(\tilde{e}_b^t-e_b^t)(h_b-(1-h_g)) \\
    &< \frac{\tilde{e}_b^t}{n_b}-\frac{e_b^t}{n_b},
\end{align*}
where the last step follows as $\tilde{e}_b^t>e_b^t$ and $h_b>1-h_g$. Second, as this difference is positive, $\tilde{P}_b^t$ first-order stochastically dominates $P_b^t$. We use these two facts to bound the changes in the probability of not getting a referral for blue and green workers. As $\tilde{P}_b^t$ first-order stochastically dominates $P_b^t$, for all $k'$
$$
    E_b^t[k|k\geq k',\tilde{e}_b^t,\tilde{e}_g^t]\geq E_b^t[k|k\geq k',{e}_b^t,{e}_g^t].
$$
As 
\begin{align*}
       E_b^t[k|k\geq 1,e_b,{e}_g](1-P_b(0|e_b,e_g))&=E_b[k]
\end{align*}
we can express and bound difference in the probability of not getting a referral for blue workers as 
\begin{align*}
    \tilde{P}_b^t(0)-P_b^t(0)&=-\left(\frac{E_b[k|\tilde{e}_b^t,\tilde{e}_g^t]}{E_b[k|\tilde{e}_b^t,\tilde{e}_g^t,k\geq 1]} - \frac{E_b[k|{e}_b^t,{e}_g^t]}{E_b[k|{e}_b^t,{e}_g^t,k\geq 1]}\right) \\
    &\geq -\left(\frac{E_b[k|\tilde{e}_b^t,\tilde{e}_g^t]}{E_b[k|\tilde{e}_b^t,\tilde{e}_g^t,k\geq 1]} - \frac{E_b[k|{e}_b^t,{e}_g^t]}{E_b[k|\tilde{e}_b^t,\tilde{e}_g^t,k\geq 1]}\right) \\
    &\geq -\left({E_b[k|\tilde{e}_b^t,\tilde{e}_g^t]} - {E_b[k|{e}_b^t,{e}_g^t]}\right) \\
    & > -\left(\frac{\tilde{e}_b^t}{n_b}-\frac{e_b^t}{n_b}\right).
\end{align*}

One can analogously show that 
$$
    \tilde{P}_g^t(0)-P_g^t(0)<\left(\frac{\tilde{e}_b^t}{n_b}-\frac{e_b^t}{n_b}\right).
$$

Consider the second factor of \eqref{eq:proof_longrun}. By Lemma \ref{lem:concentration}, the aggregate probability of not getting a referral is higher given employment levels $\tilde{e}_b^t,\tilde{e}_g^t$ than $e_b^t,e_g^t$.

As a result, fewer workers are vetted, the lemons effect decreases and the equilibrium threshold increases (Proposition~\ref{lemma:inequality_and_productivity}). Both the increase in the probability of not getting a referral and in the equilibrium threshold imply that fewer workers are hired on the referral market given $\tilde{e}_b^t$ than given $e_b^t$; furthermore implying that $\tilde{Q}^t>Q^t$, i.e., the probability of being hired from the pool is larger given $\tilde{e}_b^t$ than given $e_b^t$. 
$$ 
    Q^t-R^t = -\left(1-(1-r)\Pr(v_i=\tilde{v}^t)-\Pr(v_i<\tilde{v}^t)\right)(1-Q^t),
$$
and similarly for $\tilde{Q}^t-\tilde{R}^t$, it must be that the second factor of \eqref{eq:proof_longrun}, when moving from current employment $e_b^t$ to $\tilde{e}_b^t$, increases while still remaining negative. That is, $Q^t-R^t<\tilde{Q}^t-\tilde{R}^t\leq 0$. Let $P_b^t$ and $\tilde{P}_b^t$ denote $P^t_b(0|e_b^t,e_g^t)$ and $P^t_b(0|\tilde{e}_b^t,\tilde{e}_g^t)$; with analogous notation for green workers. Putting this together, we see
\begin{align*}
    \left(\frac{\tilde{e}_b^{t+1}}{n_b}-\frac{\tilde{e}_g^{t+1}}{n_g}\right)-\left(\frac{e_b^t}{n_b}-\frac{e_g^t}{n_g}\right) &= (\tilde{P}_b^t-\tilde{P}_g^t)(\tilde{Q}^t-\tilde{R}^t)-(P_b^t-P_g^t)(Q^t-R^t)\\
    &\leq (\tilde{P}_b^t-\tilde{P}_g^t)(Q^t-R^t)-(P_b^t-P_g^t)(Q^t-R^t)\\
    &=-(\tilde{P}_b^t-P_b^t)(-(Q^t-R^t)) +(\tilde{P}_g^t-P_g^t))(-(Q^t-R^t))\\
    &<\left(\frac{\tilde{e}_b^t}{n_b}-\frac{e_b^t}{n_b}\right)-\left(\frac{\tilde{e}_g^t}{n_g}-\frac{e_g^t}{n_g}\right)\\
    &= \left(\frac{\tilde{e}_b^t}{n_b}-\frac{\tilde{e}_g^t}{n_g}\right)-\left(\frac{e_b^t}{n_b}-\frac{e_g^t}{n_g}\right),
\end{align*}
as required.

Suppose now that $\tilde{e}_b^t<e_b^t \leq \overline{e}_b$. Then $\tilde{e}_g^t>e_g^t \geq \overline{e}_g$. Interchanging the subscripts and applying the previous part of this proof gives 
\begin{equation*}
        \left(\frac{\tilde{e}_g^{t+1}}{n_g}-\frac{\tilde{e}_b^{t+1}}{n_b}\right) - \left(\frac{e_g^{t+1}}{n_g}-\frac{e_b^{t+1}}{n_b}\right)<\left(\frac{\tilde{e}_g^{t}}{n_g}-\frac{\tilde{e}_b^{t}}{n_b}\right) - \left(\frac{e_g^{t}}{n_g}-\frac{e_b^{t}}{n_b}\right),
\end{equation*}
which is a simple rearrangement of the claimed inequality.
\end{proof}

\begin{proof}[Proof of Proposition~\ref{prop:fair}]
We first prove the claimed ranking in total production. Analogous arguments to those in the proof of Lemma~\ref{eqw} show that the value in the pool as a function of the hiring threshold is a bounded upper hemicontinuous correspondence, and the equilibrium hiring threshold is unique. For any given threshold (and associated composition of workers in the pool), a firm's value of hiring from the pool is highest when it can target unreferred workers, followed by when it can target groups, and lowest if it cannot target either. It follows that equilibrium hiring thresholds follow this ranking.

The ranking of hiring thresholds translates into a ranking of total production. Let $\tilde{v}$ be an equilibrium threshold, either with or without targeting. Since $\tilde{v}$ is the expected production of firms hiring from the pool as we assume full hiring from the pool, total production is given by
$$
n(1-P(0))E[\max\{v_i,\tilde{v}\}]+(1-n(1-P(0)))\tilde{v}+(1-n)\ubar{w}.
$$
The above expression increases in $\tilde{v}$, and so the claim follows.

We next prove that green employment increases under either targeting regime. Under either targeting regime, the hiring threshold increases. Thus, fewer blue workers are hired on the referral market. This directly implies that green employment increases if workers are targeted by group (in this case, additional blue workers are only hired if all green workers are hired). This also implies that green employment increases if workers are targeted by referral status because the now rejected referred workers are disproportionately blue (relative to population shares), and more firms now hire from the pool, which favors greens. (Note that hiring from the pool favors greens for two reasons: the pool is disproportionately green, and relatively more greens enter the pool without referral.)
\end{proof}

\begin{proof}[Proof of Proposition~\ref{aamultiplier}]
Recall that $h_g=h_b=1$ and the initial employment in period one, i.e., $e_g^0\coloneqq e_g$, is biased towards blues (i.e., $\frac{e_b^0}{e_g^0}> \frac{n_b}{n_g}$). A small increase in $e_g^0$ will not change the bias in employment. Thus, by Proposition~\ref{proposition:group_outcomes} the initial employment in any period $t$ is biased with and without the increase in $e_g^0$ (i.e., $\frac{e_b^{t-1}}{e_g^{t-1}}<\frac{n_b}{n_g}$ for all finite $t\geq1$).  

Therefore, if green employment strictly increases in all future periods, this implies strict increases in total production as well by Proposition~\ref{lemma:inequality_and_productivity}. The remainder of this proof is thus devoted to showing there is a strict increase in green employment in all future periods.

The proof consists of three steps.  We first show that for all but finitely many levels of initial employment $e_g^{t-1}$ of green workers, final employment $e_g^t$ is strictly increasing and continuous in $e_g^{t-1}$.  We then use this to argue that, over a finite number of periods $1,\ldots,T$, for all but finitely many initial employment levels $e_g^0$, final employment $e_g^t$ is strictly increasing and continuous in $e_g^0$.  Finally, we argue that we can choose $T$ large enough that the convergence in employment levels from period $T$ onward is monotone.

\paragraph{Step 1:} 
We first show that for all but finitely many initial employment levels  $e_g^{t-1}$ of green workers in period $t$, final employment $e_g^t$ is strictly increasing and continuous in $e_g^{t-1}$. Let $\tilde{v}(e_g)$ denote the equilibrium threshold if initial green employment is given by $e_g$. 

We proceed in cases. First, suppose that the employment level is such that firms weakly prefer not to hire from the pool, i.e., employment levels in the set $B\coloneqq \{e_g:\tilde{v}(e_g)=\ubar{w}\}$. If firms do not hire from the pool, then employment $e_g^t$ is strictly increasing and continuous in $e_g^{t-1}$ as, in this case,
$$ 
    e_g^t= (1-P_g(0|1-e_g^{t-1},e_g^{t-1}))n_g(\Pr(v_i>\ubar{w})+r\Pr(v_i=\ubar{w}))
$$
and $P_g(0|1-e_g,e_g)$ is strictly decreasing and continuous in $e_g$; and $r$ is constant ($r=1$) by assumption. Continuity of $P_g(0|1-e_g,e_g)$ is implied by the assumed first-order stochastic dominance ordering. If firms are indifferent about hiring from the pool, then the choice of mixing parameter $r$ for hiring on the referral market does not impact the expected value in the pool.  Furthermore, the expected value in the pool must be $\ubar{w}$. Therefore, we can solve Equation~\eqref{tau} for $P(0)$.  As $P(0)$ is strictly decreasing and continuous in $e_g$, this implies  there is only one employment level $e_g^{t-1}$ that leads to firms being indifferent about hiring from the pool (in which case $e_g^t$ may decrease).

Next suppose that the employment level $e_g^{t-1}$ is such that firms prefer to hire from the pool, i.e.,  $e_g^{t-1}\in C\coloneqq \{e_g:\tilde{v}(e_g)>\ubar{w}\}$. Note that $B$ and $C$ form a partition of all possible employment levels. Let $A\coloneqq \{e_g:\tilde{v}(e_g) \in \supp (F)\}$ and let $D \coloneqq C \cap A$. We use the following fact.
\begin{fact}\label{fact1}
For $e_g\in C$, $\tilde{v}(e_g)$ is strictly decreasing in $e_g$.
\end{fact}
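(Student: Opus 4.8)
The plan is to reduce the monotonicity of the threshold to a monotonicity of the aggregate no-referral probability, and then to read the latter off the convexity structure already established. First note that $e_g \in C$ means firms strictly prefer to hire from the pool, so every firm hires and total employment is $1$; in particular $e_b = 1 - e_g$, and the equilibrium condition collapses to $\tilde{v}(e_g) = \E_{\tilde{v},r}[v_i \mid i \in \pool]$, the expected value in the pool. Now apply Proposition~\ref{lemma:inequality_and_productivity}: comparing two economies that both lie in the hire-from-pool regime, a larger aggregate probability of not getting a referral $P(0)$ yields a (weakly, and strictly when $P(0)$ is strictly larger) larger expected pool value and hence a larger threshold. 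Since, for $e_g \in C$, the threshold equals the pool value, it therefore suffices to show that $P(0)$, viewed as a function of $e_g$, is strictly decreasing throughout the region in which blues are advantaged.

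For the second step I would make the dependence of $P(0)$ on $e_g$ explicit. With pure homophily $h_b = h_g = 1$ the group means are $m_b = (1-e_g)/n_b$ and $m_g = e_g/n_g$, both affine in $e_g$, and $P(0\mid m_b,m_g) = \tfrac{n_b}{n}\hat{P}(0\mid m_b) + \tfrac{n_g}{n}\hat{P}(0\mid m_g)$. Because $\hat{P}(0\mid \cdot)$ is strictly convex (the standing assumption of the section) and each mean is an affine function of $e_g$, the composition is strictly convex, so $P(0)$ is strictly convex in $e_g$; moreover, by Lemma~\ref{lem:concentration} it attains its minimum at the balance point $\overline{e}_g = n_g/n$ where $m_b = m_g$. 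The employment bias $\tfrac{e_b}{e_g} > \tfrac{n_b}{n_g}$ is exactly equivalent to $m_b > m_g$, i.e.\ $e_g < \overline{e}_g$. A strictly convex function is strictly decreasing to the left of its minimizer, so $P(0)$ is strictly decreasing in $e_g$ on $\{e_g < \overline{e}_g\}$, which contains the biased region under consideration. Combining this with the reduction of the first step gives that $\tilde{v}(e_g)$ is strictly decreasing in $e_g$ on $C$.

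I do not expect a serious obstacle here; the two points that require care are both mild. The first is keeping both comparison points inside the regime $C$, so that the threshold genuinely equals the pool value and Proposition~\ref{lemma:inequality_and_productivity} can be invoked with that identification — this is automatic since the Fact is stated for $e_g \in C$. The second is strictness, which flows from the \emph{strict} convexity of $\hat{P}(0\mid\cdot)$ (hence strict monotonicity of $P(0)$ in $e_g$) together with the strict clause of Proposition~\ref{lemma:inequality_and_productivity}. I would explicitly flag that restricting to the biased region is essential: past $\overline{e}_g$ the function $P(0)$ increases again, so $\tilde{v}$ would be increasing, and the conclusion relies on the maintained assumption of this section that blues are the advantaged group.
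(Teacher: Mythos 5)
Your proposal is correct and takes essentially the same route as the paper: on $C$ the threshold equals the expected pool value (with the mixing parameter irrelevant), Proposition~\ref{lemma:inequality_and_productivity} makes that value strictly monotone in $P(0)$, and $P(0)$ is strictly decreasing in $e_g$ on the blue-advantaged side because the strictly convex $\hat{P}(0\mid\cdot)$, composed with means affine in $e_g$, is minimized at the balance point $\overline{e}_g = n_g/n$ (Lemma~\ref{lem:concentration}). The only difference is expository: you spell out the convexity/minimizer step and the restriction to the biased region $e_g < \overline{e}_g$, which the paper compresses into ``as before, $P(0)$ is strictly decreasing and continuous in $e_g$.''
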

\noindent To see this, first note for $e_g\in C$, $\tilde{v}(e_g)=E_{\tilde{v},r}[v_i|i \in \pool]$. Furthermore, for such $e_g$, the choice of mixing parameter $r$ does not impact the expected value in the pool. Thus by Proposition~\ref{lemma:inequality_and_productivity}, $\tilde{v}(e_g)$ is decreasing in $P(0)$. Furthermore, as before, $P(0)$ is strictly decreasing and continuous in $e_g$ (i.e., as employment becomes more equal, and given that workers refer their own group, fewer workers have no referrals), implying the fact. 

Fact~\ref{fact1}, combined with the assumption that $F$ has finite support, implies $|D|<\infty$. $D$ can thus be enumerated as $D=\{x_0,x_1,\dots,x_N\}$ with  $x_i<x_{i+1}$ for $i=0,\dots,N-1$. We claim that employment $e_g^t$ is strictly increasing and continuous in $e_g^{t-1}$ on each open interval $(x_i,x_{i+1})$ for $i=0,1,\dots, N-1$, and on $[0,x_0)$ and $(x_N, \min\{1,n_g\}]$. To see this, note that $\E_{\tilde{v},r}[v_i|i \in \pool]$ is constant in $\tilde{v}$ for $\tilde{v}\not\in \supp (F)$. Furthermore, as an increase in $e_g^{t-1}$ continuously decreases $P(0)$, and $e_g^{t-1} \in (x_i,x_{i+1})$ implies $\tilde{v}(e_g^{t-1})\not\in\supp (F)$, increasing $e_g^{t-1}$ continuously decreases $\E_{\tilde{v},r}[v_i|i \in \pool]$ so that the equilibrium threshold must also changes continuously. As a result, we may ignore the change in the equilibrium threshold when assessing the induced change in $e_g^{t}$. Thus we can write $e_g^t$ as a function of the probabilities of blues and greens being hired on the referral market and the hiring threshold $\tilde{v}(e_g^{t-1})$: 
$$
    e_g^t=n_g\left(M_g^t+(1-M_g^t)Q^t\right),
$$
where $M_g^t=(1-P_g(0))(\Pr(v_i>\tilde{v}(e_g^{t-1})))$ is the probability a green worker is hired on the referral market (resp.\ $M_b^t$) and $Q^t=(1-(M_g^t+M_b^t))/(n-(M_g^t+M_b^t))$ is the probability a worker is hired in the pool. As $P_g(0)$ strictly decreases continuously and $P_b(0)$ strictly increases continuously, and $e_g^t$ is a continuous function of $P_g(0)$ and $P_b(0)$, it must be that $e_g^t$ indeed strictly increases and is continuous. Thus, for any time period $t$, for all but finitely many values of the initial employment, $e_g^{t-1}$, the resulting employment, $e_g^t$, is strictly increasing and continuous in $e_g^{t-1}$. 

\paragraph{Step 2:} 
We next show for all $T<\infty$, there exists a finite set of initial employment levels $E^0$ such that, if $e_g^0\not\in E^0$, then $e_g^T$ is continuous and strictly increasing in $e_g^0$. 

Let $E$ denote the values of $e_g^{t-1}$ for which $e_g^t$ is not strictly increasing and continuous in $e_g^{t-1}$ at $e_g^{t-1}$. By Step 1, $|E|<\infty$. Let $\rho(e_g)$ denote the final green employment when the initial employment is $e_g$. Set $E^T=\emptyset$ and for $t=1,\dots, T$ define
$$  
    E^{t-1}\coloneqq E \cup \rho^{-1}(E^t).
$$
By Step 1, the preimage for any finite set of values of $e_g^{t}$ is finite, as final employment is strictly increasing in initial employment on each interval in $D$ and there are finitely many intervals. Thus, as $E$ is finite, all sets $E^{t}$ for $t=0,\dots,T$ are finite; in particular, $E^0$ is finite. We argue by induction that $e_g^t=\rho^t(e_g)$ is continuous and strictly increasing in $e_g$ at all points $e_g\not\in E^0$ (and hence at all but finitely many points). Suppose $e_g^{t-1}$ is continuous and strictly increasing in $e_g^0$.  Note by construction $e_g^{t-1}\not\in E^{t-1}$.  As $E\subset E^{t-1}$, and $e_g^t=\rho(e_g^{t-1})$, Step 1 implies $e_g^t$ is continuous and increasing at $e_g^{t-1}$ and hence, by the inductive hypothesis, at $e_g^0$.

\paragraph{Step 3:} We now extend the previous claim to the infinite. Note that by our assumption that $h_b=h_g=1$, there is referral balance. Therefore, by Lemma~\ref{lem:longrun}, the employment rates converge. This implies that $P(0)$ converges and as $\tilde{v}$ is monotone in $P(0)$ (by Proposition~\ref{lemma:inequality_and_productivity}) and bounded, so does $\tilde{v}$. Again by Lemma~\ref{lem:longrun}, there is a unique steady state, and for all $e_g^0$, employment rates converge so that the limit must be the unique steady state. With referral balance, this steady state must feature proportional employment masses, i.e., $\frac{e_b}{e_g}=\frac{n_b}{n_g}$. The limit point of the equilibrium threshold is thus given by $\tilde{v}(\frac{n_g}{n})$. As the equilibrium threshold converges, there exists some $T<\infty$, so that $[\tilde{v}(e_g^T),\tilde{v}(\frac{n_g}{n}))\cap \supp (F)= \emptyset$. Then, $e_g^{t}$ is strictly increasing in $e_g^{t-1}$ for all $t\geq T$. Using the previous steps, we know that there exists a small enough increase in $e_g^0$ so that $e_g^t$ increases in all periods up to $T$. Combining these two results, we conclude that there exists a small enough increase in $e_g^0$ so that $e_g^t$ increases in all future periods.\end{proof}

\section{Impact of Referral Distribution}
\label{sec:Pchanges}

Recall $P(0|m)$ and $P(2+|m)\equiv\sum_{k\geq2}P(k|m)$  denotes the fraction of workers (of any type) who get no referrals and multiple referrals, respectively, when the average mass of referrals is $m$. $P(0|m)$ and $P(2+|m)$ can be thought of as two measures of referral inequality. 

In this section, we first show how changes in employment levels impact $P(0|m)$ and $P(2+|m)$. We next demonstrate that changes in $P(0|m)$ impact economic productivity. This, together with the preceding analysis of how employment levels impact referrals, is key to proving Proposition~\ref{proposition:group_outcomes}. In Section~\ref{sec:gini}, we note that other metrics such as population-level wage inequality, as captured by the Gini coefficient, are impacted by $P(2+|m)$ as well as $P(0|m)$.  As an increase in one does not imply an increase in the other, these metrics can move in opposing directions. However, as implied by Lemma~\ref{lem:concentration}, $P(0|m)$ and $P(2+|m)$ move together under natural assumptions on the setting.  In such settings, metrics such as immobility, inequality and inefficiency are all mitigated by equalizing employment levels.

\subsection{Impact of Employment Levels on Referral Distribution}
\label{sec:tie}

As suggested in Section~\ref{sec:homophily}, if people belong to groups, one group has an advantage in terms of its initial employment, and people tend to refer people of their own group, then this leads to an aggregate referral distribution with a higher overall fraction of workers who do not get a referral. This is driven by the fraction of workers in a group who do not get a referral, $\hat{P}(0|m)$, being strictly convex in the average number of referrals the group gets, $m$. Here, we show that unequal employment across groups can similarly lead to an aggregate referral distribution with a overall fraction of workers who get multiple referrals. This holds when the fraction of workers in a group who get multiple referrals, $\hat{P}(2+|m)$, is also strictly convex. For a fixed but arbitrary process (e.g., workers making referral uniformly at random), the second derivative of this probability depends on the probability of getting exactly one referral in that group. Thus, for low levels of referrals, strict convexity of the probability of getting multiple referrals for a group follows.\footnote{When referrals are made uniform at random among a group, then $\hat{P}(0|\cdot)$ is strictly convex and $\hat{P}(2+|\cdot)$ is strictly convex on $[0,1]$. When $h_b=h_g=1$ or $n_b,n_g\geq 1$ it follows that any employment levels result in $m_g,m_b\in[0,1]$ as the maximum average number of referrals in each group is at most $1$ so that $P(2+|\cdot)$ is strictly convex on the relevant domain.}

Fix the assumptions and notation discussed in Section~\ref{sec:homophily}. Let $\overline{e}_b,\overline{e}_g$ be the unique employment rates that equate the average number of referrals across groups, $m_b=m_g$.

\begin{lemma}\label{lem:concentration}
Suppose that $\hat{P}(0|\cdot)$ is strictly convex; and $\hat{P}(2+|\cdot)$ is strictly convex for feasible average numbers of referrals $m$. Then:
    \begin{itemize}
        \item the overall fraction of workers who do not get a referral, $P(0|m_b,m_g)$, and the overall fraction of workers who get multiple referrals, $P(2+|m_b,m_g)$, are strictly convex in employment levels; and
        \item both $P(0|m_b,m_g)$ and $P(2+|m_b,m_g)$ are minimized at $\overline{e}_b,\overline{e}_g$.
    \end{itemize}
\end{lemma}

\begin{proof}[Proof of Lemma~\ref{lem:concentration}]
For the first claim, note the respective aggregate probabilities are (nontrivial) linear combinations of $\hat{P}$ so that strict convexity/ concavity is preserved. 
    
For the second claim, take any employment levels $e_b,e_g$. Note that 
$$
    \frac{n_b}{n}\frac{1}{n_b}(h_b{e}_b+(1-h_g){e}_g)+\frac{n_g}{n}\frac{1}{n_g}(h_g{e}_g+(1-h_b){e}_b)=\frac{1}{n}.
$$
Hence, we have
\begin{align*}
    \frac{n_b}{n}\hat{P}(k|\frac{1}{n_b}(h_b{e}_b+(1-h_g){e}_g))+\frac{n_g}{n}\hat{P}(k|\frac{1}{n_g}(h_g{e}_g+(1-h_b){e}_b))\geq \hat{P}(k|1/n),
\end{align*}
for $k \in \{0,2+\}$.
By definition of $\overline{e}_b,\overline{e}_g$, the aggregate probability of not getting a referral given employment levels $\overline{e}_b,\overline{e}_g$ attains this minimum.
\end{proof}

In addition to differences in employment (with similar referral-bias parameters), it could also be that some groups get relatively more referrals per capita due to biases in referrals, $\frac{R_b}{R_g}>\frac{n_b}{n_g}$ (e.g., in some settings males are more strongly biased towards referring males than females are to referring females, as seen in \cite{lalannes2016}). A variant of Lemma~\ref{lem:concentration}---with no employment bias and varying the extent of the referral bias---implies that, under some conditions, a large referral bias (e.g., favoring men) increases the overall fractions of workers who get no or multiple referrals. 

Lastly, the fraction of workers who get no referral and the fraction of workers who get multiple referrals may also be tied together if workers become more likely to refer more-popular workers than a worker at random, since their more-popular worker might be in a better position to return the favor some day. This causes an increase in $P(0|m)$ as well as $P(2+|m)$. 

\subsection{Concentrating Referrals Decreases Productivity}

In this subsection, we presume that firms hire when indifferent ($r=1$). The mixing cases do not change any of the productivity results, but comparing across distributions requires specifying the mixing for each distribution, which has no real consequence but complicates the proofs.\footnote{It can only be consequential when firms do not hire from the pool, as then it can make a difference in the value in the pool;  but the pool value does not affect productivity in that case.} We first investigate how equilibrium productivity changes with the underlying distribution of referrals. Section~\ref{sec:groups} discussed that productivity increases if employment ratios across groups are closer to being population-balanced. Here, we generalize that claim and state it with respect to the overall fraction of workers who do not get a referral.

\begin{proposition}\label{lemma:inequality_and_productivity}
Consider two referral distributions, $P$ and $P'$, and corresponding equilibrium thresholds ${\tilde{v}}$ and ${\tilde{v}}'$, respectively. If $P'$ increases the fraction of workers who do not get a referral compared to $P$ (i.e., $P'(0) \geq  P(0)$), then 
\begin{itemize}
    \item $\E_{\tilde{v}'}[v_i|i\in \pool]\geq \E_{\tilde{v}}[v_i|i\in \pool]$, and so there is
    a weak decrease in the lemons effect and a weak increase in the expected value of workers in the pool;
\item and the total production in the economy associated with $P'$ is less than or equal to that associated with $P$.
\end{itemize}
All comparisons are strict if there is a strict increase in the fraction of workers who do not get a referral (i.e., $P'(0)>P(0)$).
\end{proposition}

\begin{proof}[Proof of Proposition~\ref{lemma:inequality_and_productivity}]
Let $G(\tilde{v},P(0))\coloneqq \E_{\tilde{v}}[v_i|i\in\pool]$; i.e., $$G(\tilde{v},P(0)) \coloneqq   \frac{ P(0) \E[ v_i  ]  +  (1-P(0)) \Pr(v_i<{\tilde{v}})\E [v_i  |  v_i< {\tilde{v}} ] }{P(0) +(1-P(0)) \Pr(v_i<\tilde{v})}.$$
$$= \E[ v_i  ]  \frac{ P(0)   +  (1-P(0)) \Pr(v_i<{\tilde{v}})\frac{\E [v_i  |  v_i< {\tilde{v}} ]}{\E[ v_i  ]} }{P(0) +(1-P(0)) \Pr(v_i<\tilde{v})}.$$

This expression is nondecreasing in $\tilde{v}$.  For any $\tilde{v}$, as $\E_{\tilde{v},r}[v_i|i \in \pool]< \E[v_i]$ by Lemma~\ref{lemma:LE}, it is strictly increasing in $P(0)$ as well.  Therefore
$$
G(\tilde{v},P(0))
\leq G(\tilde{v},P'(0))
\leq G(\tilde{v}',P'(0)),$$
As we've assumed $r=1$, $G(\tilde{v},P(0))=\E_{\tilde{v}}[v_i|i \in \pool]$ and $G(\tilde{v}',P'(0))=\E_{\tilde{v}'}[v_i|i \in \pool]$, this proves the first claimed comparative static.

To prove the second claimed comparative static, note that by the equilibrium condition, employed workers always have higher average productivity than the productivity of unemployed workers in the broader economy (which is assumed to be equal to $\ubar{w}$).  Therefore, to compare the productivities in the two economies, it suffices to compare the mass and productivity of employed workers.  

We consider three cases based on whether firms hire from the pool in equilibrium. 

First suppose that firms hire from the pool in both economies.  Then the mass of employed workers is the same in both economies, and thus the value of the unemployed is the same in both cases.  Note that, by a simple accounting argument, the productivity of employed workers is $n\E[v]$ minus the value of those not hired from the pool $(n-1)\E_{\tilde{v}}[v_i|i\in\pool]$, or $(n-1)\E_{\tilde{v}'}[v_i|i\in\pool]$.  Those not hired from the pool have value given by $G$ above, and so the productivity moves in the opposite direction as $G$ and the result follows from the first claim.   

Next, suppose that they do not hire from the pool in either case.  Then by the equilibrium condition, the hiring threshold in both economies is $\ubar{w}$.  Therefore, the productivity of employed workers in the two economies is the same (namely, $\E[v_i|v_i\geq \ubar{w}]$, which is higher than the minimum wage).  However, as there are weakly more referrals under $P$, there are weakly more employed workers  and so productivity is weakly higher in that economy.  The inequalities hold strictly if $P'(0)>P(0)$ (as then there are strictly more referrals under $P$).

Finally suppose firms hire from the pool in one economy and not the other. Therefore, by the equilibrium condition and the first comparative static, it must be that 
$$
\E_{\tilde{v}}[v_i|i\in\pool]<\ubar{w}\leq \E_{\tilde{v}'}[v_i|i\in\pool].
$$ 
Thus $G(\tilde{v}',P'(0))=\tilde{v}'$ and $G(\tilde{v},P(0))< \tilde{v}=\ubar{w}$. Consider $\overline{P}$ so that $G(\overline{\tilde{v}},\overline{P}(0))=\ubar{w}$.\footnote{Such $\overline{P}$ exists since $G$ is continuous in $P(0)$.} Then $G(\tilde{v}',P'(0))\geq G(\overline{\tilde{v}},\overline{P}(0))\geq G(\tilde{v},P(0))$ and so, by the converse of the first comparative static, it must be that\footnote{The converse follows immediately from the contrapositive and switching $P(0)$ and $P'(0)$.} 
$$
P(0)\leq \overline{P}(0) \leq P'(0).
$$ 
Supposing firms do not hire workers from the pool given $\overline{P}$, we are in the first case and can conclude that productivity is higher given $P$ than $\overline{P}$ (and strictly so if $P(0)<\overline{P}(0)$). Given that firms hire workers from the pool given $\overline{P}$, we are in the second case and can conclude that productivity is weakly higher given $\overline{P}$ than $P'$ (and strictly so if $\overline{P}(0)<P'(0)$). Combining these two inequalities, it follows that productivity is higher in the economy with $P$ than in the economy with $P'$ (and strictly so if $P'(0)>P(0)$).
\end{proof}

Proposition~\ref{lemma:inequality_and_productivity} states that concentrating referrals among a smaller part of the population, leads to a decrease in the lemons effect and lower productivity. Increasing fraction of workers who do not receive a referral reduces the lemons effect since fewer referred workers who are rejected enter the pool (even accounting for the raised hiring threshold, as that raised threshold is itself a reflection of the extent of the lemons effect\footnote{If the lemons effect had actually gone up, then the threshold would have had to fall.}). In fact, the lemons effect moves exactly with the total production in the economy: The average productivity of the workers in the pool equals the average productivity of the unemployed, and thus counter-weights the average productivity of the employed. Intuitively, production worsens since fewer workers are vetted, and less information leads to worse matching.\footnote{This result relies on the assumption that the outside option of workers equals the minimum wage. If $\ubar{w}$ is above outside option value to unemployed workers, then overall production could be higher under $P'$ under some circumstances. That would require that there was no hiring from the pool under $P$, and that the difference between $P'$ and $P$ be large enough to ease the lemons sufficiently to make hiring under the pool attractive under $P'$, and that the gain in value from hiring those workers (the difference between the minimum wage and their outside values) is sufficiently large to offset the loss from reduced vetting. 
More generally, this means that the size of employment could be changing, and here productivity is including outside options and so increased productivity does not necessarily mean increased employment.}

\section{Additional Results on Market Design and Policies}
\label{sec:policyapp}
\subsection{Optimizing Affirmative Action}
\label{sec:optAA}

We have seen that one-time affirmative action policies can have lasting impacts, but how should should a one-time intervention best be implemented? We examine which of the two basic approaches to affirmative action---encouraging more hiring of lower valued greens via referrals compared to hiring more greens from the pool---is more effective.

Specifically, one can increase green employment by either encouraging the hiring of more greens from referrals (which decreases the number of blues hired from the pool), or discouraging the hiring of blues from referrals and thus hiring relatively more greens from the pool. These policies can be enacted by, for instance, paying a small subsidy for each green worker who is hired or else taxing firms that hire a blue worker.\footnote{We presume that a firm can tell if a referred worker is blue or green.} One can also think of a broader class of interventions that include quotas and caps for hires on the referral market.\footnote{Quotas can affect how people are hired in ways that go beyond our model.  For instance, as shown by \cite{chevrot-bianco2021}, quotas on women increased the number of women hired via referrals from family members.  For more discussion on different types of referrals, see \cite{lesterrt2021}.} To capture the trade-off between promoting green workers and demoting blue workers, we compare the effects of increasing the mass of green workers versus decreasing the mass of blue workers hired on the referral market.

Both types of policies distort the optimal hiring decision so that total production decreases in the period where the affirmative action is put in place (recall that the equilibrium in each period is constrained efficient in that it maximizes the production in that period subject to the information structure; see Section~\ref{sec:constrained_efficiency} of the appendix), but have the longer-term impact of increasing productivity in future periods. In general, one of the two policies will dominate the other in the sense that it achieves the same desired goal at a lower reduction in current production.

We illustrate this trade-off in the simple case in which workers are either of high or low value, $v_H$ or $v_L$. Suppose, in the absence of any intervention, there is hiring from the pool, and so $\tilde{v}$ is the expected value in the pool. Also, let $f_g$ be the fraction of workers in the pool that are green. Consider a small intervention---just changing a few hires so that the expected value in the pool is not affected substantially. If one increases the number of green referrals hired, then that puts in a low-value worker in place of a random draw from the pool, so the lost productivity is $\tilde{v}- v_L$ while the gain in green employment is $1-f_g$. If, instead, one decreases the number of blue referrals hired, then that puts in a random draw from the pool in place of a high-value worker, so the lost productivity is $v_H -\tilde{v}$ and the gain in green employment is $f_g$. Which of these is better depends on whether the value of the pool is closer to the high value or the low value as well as the fraction of workers in the pool that are green. 

\begin{proposition}\label{optAA}
The loss in productivity of increasing current green employment by increasing green referral hires is less than that by decreasing blue referral hires if and only if
\[
\frac{1-f_g}{\tilde{v}-v_L}
> \frac{f_g}{v_H-\tilde{v}}.
\]
Furthermore, as long as both policies can achieve the desired employment rates, using one of them as opposed to a combination of them is optimal.
\end{proposition}
The proof of Proposition~\ref{optAA} appears in Section~\ref{online:proofs} of the appendix.

In addition to the type of affirmative action policy, effectively decreasing the number of blue or increasing the number of green referrals hired, one can also optimize over the size of the policy; i.e., how many hiring decisions should be changed. As one increases the size of the affirmative action policy, two things change: 1) the composition of the pool; and 2) the composition of the referral market (both due to not hired blue or hired green referral workers). 

The first is important as those are the workers with which, for example, high-value blue referral workers are replaced with: a firm does not hire their high-value blue referral worker and instead hires from the pool. The second change matters, as those are the ones potentially forced to be hired or not. Interestingly, as in Proposition~\ref{optAA}, changes in the composition of the pool do not always affect the cost-benefit analysis of an affirmative action. Such changes simply end up undoing some of the policy, e.g., hiring a high-value blue referral worker from the pool who was forced to enter the pool. As Proposition~\ref{optAA} only deals with a two-type value distribution, the composition of workers on the referral market does not matter as long as there are, e.g., low-value green workers. 

However, this is not true more generally. When forcing firms to hire green referral workers that otherwise would not have been hired, the optimal policy ensures hiring of such green workers with the highest values. This value is decreasing the more green referral workers are hired so that the productivity decrease of hiring such workers becomes larger.

\subsection{Firing Workers and Inequality}\label{section:firing_workers}

There is a literature on the impacts of a variety of labor market rigidities (e.g., see the discussion and references in \cite{nickell1997}).  Here we present results on the effects of the ease of firing that provide new comparative statics and hypotheses in that direction, by showing how greater ease of firing lowers the lemons effect and reduces the advantage and impact of referrals.  Thus, beyond the usual argument that ease of firing can improve the efficiency of the labor market, we show that it can also lower inequality and improve economic mobility.  

Referrals allow firms to learn about the type of a worker and are thus valuable to firms as well as the workers who are referred. Beyond affirmative action, there are other policies affecting what firms learn about workers that can also reduce inequality. In this section, we consider how the market changes when firms can fire a worker and replace them with another worker from the pool.

In particular, let us suppose that after some time within a single period has elapsed, firms have learned the value of any worker they have hired (they already know the value of a referred worker, so this applies mainly to a worker hired from the pool), and can then choose whether to fire that worker and hire a new one from the pool for the remainder of the period. This makes hiring from the pool more attractive, as there is less expected loss if the worker turns out to have low productivity.

Let $\lambda$ be the fraction of time that is left in the period for which they would get the replacement worker, and $1-\lambda$ be the fraction of the period that has to elapse before a firm can fire a worker. In equilibrium, as we prove below, a firm would never fire a referred worker that they kept initially, so the decision will only be relevant for workers they hired from the initial pool.

The timing is as follows:
\begin{enumerate}
\setlength\itemsep{-0.5em}    
    \item Firms get referrals from their current workers.
    \item Firms can choose whether to (try to) hire a referred worker and make an offer.
    \item Referred workers with offers can choose to accept any of those offers (in which case they exit with the firm) or to go to the pool; all other workers go to the pool.
    \item Firms that did not hire a referred worker can choose whether to hire from the pool.
    \item After $1-\lambda$ of the period has elapsed, firms can choose to fire their current worker.
    \item All fired workers return to the pool, joining all other workers who are not employed.
    \item Firms that have fired a worker can choose to hire from the new pool.
\end{enumerate}

We refer to these pools as pool 1 and pool 2, respectively. We assume that the distribution of values is high enough so that firms prefer to hire from these pools rather than to have no worker at all in order to simplify the exposition; the full details are worked out in Section~\ref{online:proofs}.

A firm's production is given by $1-\lambda$ of the value of its worker before any firing, and then $\lambda$ of the value of its worker after any firing and rehiring decisions are made.\footnote{There is another closely related variant of the model for which the proposition below also holds, which is one in which some fraction $\lambda$ of firms immediately learn the value of their worker (from the pool, and they already know the value from the referral) and can then immediately fire that worker and draw again from the pool. In terms of expected values, this variant of the model looks exactly the same from a firm's perspective, as they just have $\lambda$ weight on an expectation of firing and rehiring, as opposed to a fraction of time. This second variant does have some differences in terms of values in the pools, as fewer firms have an opportunity to fire workers. We will point out when these differences arise; but the basic structure outlined in Lemma~\ref{lemma:lambda_LE} is exactly the same.} An equilibrium is now characterized by a pair of threshold strategies: Referred workers are hired if their value is above $\tilde{v}_1$ and otherwise firms hire from the first pool. Then at the second decision point firms fire workers, when given the opportunity, if the worker's value is below $\tilde{v}_2$. Again, firms can mix with some probability when indifferent.

Let $\E_{\tilde{v}_1,r_1}[v_i |i \in \text{pool }1]$ be defined as in equation~\eqref{tau} and $\E_{\tilde{v}_1,\tilde{v}_2,r_1,r_2}[v_i |i \in \text{pool }2]$ denote the expected value in pool 2 if the hiring threshold on the referral market is given by $\tilde{v}_1$, the firing threshold by $\tilde{v}_2$, firms hire referral workers at the margin with mixing parameter $r_1$ and firms fire workers at the margin with mixing parameter $1-r_2$. The threshold strategies then imply the following two equilibrium conditions:
\begin{align}
\tilde{v}_1&=(1-\lambda)\E_{\tilde{v}_1,r_1}[v_i |i \in \text{pool }1] + \lambda \E_{\tilde{v}_1,r_1}[\max \{v_i,\tilde{v}_2\} |i \in \text{pool }1]; \quad \text{and}\label{eq:cond1} \\
\tilde{v}_2&=\E_{\tilde{v}_1,\tilde{v}_2,r_1,r_2}[v_i |i \in \text{pool }2] \label{eq:cond2}.
\end{align}

Let $\tilde{v}$ denote the hiring threshold for the model without firing (so that $\lambda=0$). For $\lambda \in (0,1]$, not hiring a referred worker and instead going to pool 1, now features an option value: The firm randomly draws a worker but may get a second draw to replace the worker if the worker's value is too low; a draw from pool 2. This leads to firms adopting a higher threshold on the referral market compared to the base model. The higher initial hiring threshold, in turn, lessens the lemons effect in pool 1 and so the pool 1 value is actually higher than $\tilde{v}$. Pool 2, however, has a worse lemons effect since it then includes all fired workers as well as rejected referral workers. The fact that the second threshold is less than $\tilde{v}$ takes some proof, but is true. 

\begin{lemma}\label{lemma:lambda_LE}
There are unique thresholds $(\tilde{v}_1,\tilde{v}_2)$ and mixing parameters $(r_1,r_2)$ with $r_2$ arbitrary satisfying equations \eqref{eq:cond1} and \eqref{eq:cond2}.
For $\lambda>0$, they satisfy
$
	\tilde{v}_2 < \tilde{v} < \tilde{v}_1.
$
\end{lemma}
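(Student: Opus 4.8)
The plan is to treat the equilibrium conditions \eqref{eq:cond1}--\eqref{eq:cond2} as a nested pair of one-dimensional fixed-point problems: first solve \eqref{eq:cond1} for the referral threshold $\tilde v_1$ as a function of the firing threshold $\tilde v_2$, then substitute into \eqref{eq:cond2} to obtain a single equation in $\tilde v_2$. Each step can then be closed by the same ``slope-below-one'' argument that produced the unique base threshold in Lemma~\ref{eqw}. Throughout, the parameter $r_2$ is immaterial: a worker whose value equals $\tilde v_2$ generates zero net surplus at the firing stage, so whether such a marginal worker is retained changes neither pool value---exactly as $r$ drops out in Lemma~\ref{eqw}---which is why the statement permits $r_2$ to be arbitrary.

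\textbf{The inner map.} Fixing $s:=\tilde v_2$ and a candidate referral cutoff $x$, write
\[
T_s(x):=(1-\lambda)\,\E_{x,r_1}[v_i\mid i\in\text{pool }1]+\lambda\,\E_{x,r_1}[\max\{v_i,s\}\mid i\in\text{pool }1],
\]
where the conditional expectations are the object in \eqref{tau} evaluated at cutoff $x$, and pool~1 at cutoff $x$ is exactly the base-model pool at cutoff $x$ (the no-referral workers together with the referred workers of value below $x$). Since raising $x$ changes both terms only through this one-parameter family of pool distributions and $\max\{\cdot,s\}$ is nondecreasing, the slope bound underlying Lemma~\ref{eqw} applies and yields a unique fixed point $\tilde v_1(s)$. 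As $\max\{v,s'\}\ge\max\{v,s\}$ pointwise for $s'>s$, we get $T_{s'}\ge T_s$, and a pointwise-larger map of slope below one has a (weakly) larger fixed point; hence $\tilde v_1(\cdot)$ is continuous and nondecreasing.

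\textbf{The outer map and the ordering.} Substituting, set $S(s):=\E_{\tilde v_1(s),\,s,\,r_1,r_2}[v_i\mid i\in\text{pool }2]$. Pool~2 is pool~1 with the drawn-and-retained workers (value at least $s$) removed and the fired workers (value below $s$) returned, so its composition depends on $s$ both directly and indirectly through $\tilde v_1(s)$. I would show $S$ is a continuous self-map whose slope at any intersection with the $45^\circ$ line is below one, giving a unique $\tilde v_2=S(\tilde v_2)$ and, with the inner map, a unique pair $(\tilde v_1,\tilde v_2)$; signing the indirect channel via the monotonicity of $\tilde v_1(\cdot)$ is handled by the same marginal accounting as in Lemma~\ref{eqw}. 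For $\tilde v_1>\tilde v$, because pool~1 at cutoff $x$ coincides with the base pool at cutoff $x$, we have $T_s(x)=h(x)+\lambda\bigl(\E_{x,r_1}[\max\{v_i,s\}\mid\text{pool }1]-\E_{x,r_1}[v_i\mid\text{pool }1]\bigr)$, where $h$ is the base map with fixed point $\tilde v$. The bracketed option value is nonnegative and strictly positive whenever pool~1 places mass below $s$ (it does, since the no-referral workers are an unconditioned draw from $F$), so $T_s\ge h$ pointwise with strict slack at $x=\tilde v$, and two maps of slope below one then have $\tilde v_1>\tilde v$.

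For $\tilde v_2<\tilde v$ I would prove $S(\tilde v)<\tilde v$ and invoke the slope property: since $S(s)-s$ is decreasing and negative at $s=\tilde v$, its root $\tilde v_2$ lies strictly below $\tilde v$. The inequality $S(\tilde v)<\tilde v$ should follow because, at firing cutoff $\tilde v$, pool~2 is the base pool stripped of the drawn workers with value at least $\tilde v$ and swollen by the returning low-value fired workers, pulling its mean strictly below $\tilde v$. Making this last composition comparison rigorous---tracking precisely which workers are removed versus returned as both thresholds move, and controlling it against the base pool value---is where essentially all the work sits, and I expect it to be the main obstacle.
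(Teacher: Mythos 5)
There is a genuine gap, and it sits exactly where you flag it: the step $S(\tilde v)<\tilde v$. Your composition sketch mis-describes pool 2. At firing cutoff $s=\tilde v$ the consistent referral threshold is $\tilde v_1(\tilde v)>\tilde v$, so pool 1 is \emph{not} the base pool: it additionally contains referred workers with values in $[\tilde v,\tilde v_1)$, and by Lemma~\ref{lemma:monotonicity} its mean strictly exceeds $\tilde v$. Pool 2 is then the union of the \emph{undrawn} pool-1 workers (whose mean, by uniform sampling, equals the pool-1 mean and thus lies above $\tilde v$) and the fired drawn workers (values below $\tilde v$). That convex combination has no determinate sign relative to $\tilde v$: if referral rejects pile mass just above $\tilde v$ and only a small fraction of pool 1 is drawn, the pool-2 mean stays near the pool-1 mean, i.e., above $\tilde v$. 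So the pointwise ``stripped and swollen'' comparison cannot close the argument; likewise the ``slope-below-one'' property of the outer map $S$ is unsubstantiated, since $S$ moves through both a direct channel and the endogenous $\tilde v_1(s)$, and neither admits the averaging argument of Lemma~\ref{eqw} (which, note, is an averaging/once-crossing argument, not a slope bound — the slope device appears in Proposition~\ref{longrun}, and even for your inner map $T_s$ the averaging step needs care when $s$ exceeds the candidate cutoff, because the increment then involves $\max\{m,s\}>m$). The paper sidesteps the composition problem entirely with production accounting: in equilibrium total production equals $n(1-P(0))\E[\max\{v_i,\tilde v_1\}]+(1-n(1-P(0)))\tilde v_1$ (equation \eqref{eq:prodfir}), which depends only on $\tilde v_1$ and strictly exceeds the $\lambda=0$ benchmark once $\tilde v_1>\tilde v$; yet pre-firing production is \emph{lower} (the base threshold maximizes it, Lemma~\ref{lemma:monotonicity}), and if $\tilde v_2\geq\tilde v$ post-firing production is \emph{also} lower, because production is inversely related to the average value of the unemployed, which is $\tilde v_2$ (the accounting identity behind Proposition~\ref{lemma:inequality_and_productivity}) — a contradiction. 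The same device replaces your nested contraction in the uniqueness proof: two candidates $\tilde v<\tilde v_1<\tilde v_1'$ force, via \eqref{eq:prodfir} and the pre-firing comparison, $\tilde v_2'<\tilde v_2$, after which both terms on the right of \eqref{eq:cond1} move the wrong way. You should import this accounting argument; I do not see a repair of the direct composition route.

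A second, smaller gap concerns the mixing parameters. Your dismissal is correct for $r_2$ (a worker of value exactly $\tilde v_2$ is value-neutral for pool 2, mirroring Lemma~\ref{eqw}) but not for $r_1$: unlike the base model, mixing at the referral margin is \emph{not} payoff-irrelevant here. When $F$ has an atom at $\tilde v_1$, the fraction $r_1$ of marginal referred workers hired changes the composition of pool 1 and hence both the option value $\E_{\tilde v_1,r_1}[\max\{v_i,\tilde v_2\}\mid i\in\text{pool }1]$ in \eqref{eq:cond1} and the pool-2 value in \eqref{eq:cond2} — this is precisely why the paper's proof ends with a dedicated step ruling out two equilibria that share $\tilde v_1$ but have $r_1\neq r_1'$ (and why such atoms make the equilibrium constrained inefficient, Proposition~\ref{proposition:firing_workers_inefficiency}). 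In your nested scheme the inner fixed point would have to be the pair $(\tilde v_1,r_1)(s)$, with uniqueness and monotonicity in $s$ re-established for that pair; as written, your proposal delivers at most uniqueness of the thresholds for atomless $F$.
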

\noindent Note that our base model is nested in this formulation with $\lambda = 0$. The proof of Lemma~\ref{lemma:lambda_LE} appears in Section~\ref{online:proofs} of the appendix.

The equilibrium thresholds, as before, are unique, so that we can perform comparative statics with respect to unique equilibrium quantities. Firms have a higher hiring threshold on the referral market when $\lambda>0$ and rely more on hiring from the pool. The additional weight on hiring from the pool increases the opportunities for disadvantaged groups---e.g., the green workers---and improves their employment prospects. The willingness of firms to engage in further search decreases the production in the first part of the period.

However, as the proposition below shows, overall production---the weighted production in the two parts of the period---increases with $\lambda$. Thus, efficiency and equality are always greater when there are opportunities for firms to fire their worker.

\begin{proposition}\label{proposition:firing_workers_lambda}
For any $\lambda>0$, the employment rate bias is less than, and total production is greater than, what it would have been without the opportunity to fire ($\lambda=0$).  In terms of timing within the period: (i) both the employment bias in favor of blue workers and productivity measured before the firing stage are decreasing in $\lambda$; and (ii) the productivity at the end of the period is increasing in $\lambda$ while the employment bias in favor of blue workers at the end of the period changes ambiguously. 
\end{proposition}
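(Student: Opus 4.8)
The plan is to reduce all four conclusions to a single comparative static: that the referral threshold $\tilde v_1$ is strictly increasing in $\lambda$. Lemma~\ref{lemma:lambda_LE} already supplies existence, uniqueness, and the ordering $\tilde v_2<\tilde v<\tilde v_1$ for $\lambda>0$, which lets me treat all objects as functions of the unique equilibrium thresholds. I write before- and end-of-period production as $P_{before}(\tilde v_1)$, the total value of referral hires plus pool-1 draws, and $P_{after}(\tilde v_1,\tilde v_2)$, the value of referral hires plus each pool-1 firm's expected end value $\E_{\tilde v_1,r_1}[\max\{v_i,\tilde v_2\}\mid i\in\text{pool }1]$; total production is $(1-\lambda)P_{before}+\lambda P_{after}$. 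After first verifying that the constrained-efficiency argument of Appendix~\ref{sec:constrained_efficiency} extends to the firing model, I may assume this objective is maximized over $(\tilde v_1,\tilde v_2)$ at the equilibrium thresholds, which is what makes the envelope steps below available.

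First I would establish $d\tilde v_1/d\lambda>0$. Rewriting \eqref{eq:cond1} as
\[
\tilde v_1=\E_{\tilde v_1,r_1}[v_i\mid i\in\text{pool }1]+\lambda\,\E_{\tilde v_1,r_1}\big[(\tilde v_2-v_i)^+\mid i\in\text{pool }1\big],
\]
the extra option-value term is nonnegative, so raising $\lambda$ pushes $\tilde v_1$ up directly; the lemons feedback reinforces this, since a higher $\tilde v_1$ rejects only higher-value referred workers into pool~1 and thereby raises $\E_{\tilde v_1,r_1}[v_i\mid i\in\text{pool }1]$. The delicate issue is the coupling with \eqref{eq:cond2}: a change in $\tilde v_2$ feeds back both through the option value and through the pool-2 composition, which contains fired workers and rejected referrals together. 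I would handle this by signing the two-equation system as a monotone fixed point, using $\tilde v_2<\tilde v_1$ to bound the cross-effect and the uniqueness from Lemma~\ref{lemma:lambda_LE} to pin down the monotone branch. This is the main obstacle, because the lemons feedback runs through both pools at once and neither threshold can be signed in isolation.

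Given $d\tilde v_1/d\lambda>0$, the productivity claims follow by signing finite (marginal-worker) comparisons at the equilibrium thresholds. A marginal referred worker has value $\tilde v_1$ whereas a pool-1 draw has before-firing value $\E_{\tilde v_1,r_1}[v_i\mid i\in\text{pool }1]<\tilde v_1$, so $\partial P_{before}/\partial\tilde v_1<0$ and $P_{before}$ is decreasing in $\lambda$, giving part~(i). For the end of period, the convex-combination form of \eqref{eq:cond1} yields $\E_{\tilde v_1,r_1}[v_i\mid i\in\text{pool }1]<\tilde v_1<\E_{\tilde v_1,r_1}[\max\{v_i,\tilde v_2\}\mid i\in\text{pool }1]$, so diverting a marginal referral firm to pool~1 raises its end value; combined with an envelope argument that kills the $\tilde v_2$ term (the optimal firing cutoff equates the marginal fired worker's value to the pool-2 value), this gives $\partial P_{after}/\partial\tilde v_1>0$ and hence $P_{after}$ increasing in $\lambda$, which is part~(ii). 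The headline total-production claim then follows from the envelope identity $\tfrac{d}{d\lambda}\big[(1-\lambda)P_{before}+\lambda P_{after}\big]=P_{after}-P_{before}>0$, strict because firing strictly improves the matches of the firms that exercise it.

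For the employment-bias statements I would invoke Proposition~\ref{proposition:differential_wages}: with homophily and blues advantaged, the blue referral distribution first-order stochastically dominates the green one, so hires through the referral channel are blue-biased while pool draws reflect the more balanced, indeed green-favorable, pool composition. As $\tilde v_1$ rises with $\lambda$, mass-one employment shifts out of the blue-biased referral channel and into the pool channel, so the before-firing blue-to-green employment ratio falls; this is part~(i) and also the headline bias comparison against $\lambda=0$. The end-of-period bias is ambiguous because the firing round deletes low-value pool-1 draws, whose group composition mirrors pool~1, and replaces them with pool-2 draws, whose composition reflects the distinct pool-2 lemons effect; these two compositions need not be ordered, so I would isolate the two opposing forces and give a numerical instance realizing each sign, in the spirit of the example in Section~\ref{sec:ineq_period_to_period}.
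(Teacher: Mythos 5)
Your plan has a genuine gap at its foundation: you ``first verify that the constrained-efficiency argument of Appendix~\ref{sec:constrained_efficiency} extends to the firing model'' and then lean on envelope arguments. It does not extend --- the paper proves the opposite (Proposition~\ref{proposition:firing_workers_inefficiency} in Section~\ref{sec:ineff}): when an atom of referred workers sits at $\tilde v_1$, slightly \emph{lowering} the referral threshold strictly raises total production, so the equilibrium pair $(\tilde v_1,\tilde v_2)$ is not a maximizer of $(1-\lambda)P_{before}+\lambda P_{after}$. The intuition is that in the base model the marginal rejected worker has value equal to the pool average, so rejection is externality-free, whereas here $\tilde v_1>\E_{\tilde v_1,r_1}[v_i\mid i\in\text{pool }1]$ (the wedge is the option value), so pushing the marginal worker into pool~1 confers a positive externality the firm ignores. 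Consequently your headline identity $\tfrac{d}{d\lambda}\bigl[(1-\lambda)P_{before}+\lambda P_{after}\bigr]=P_{after}-P_{before}$ is invalid: the indirect terms through $d\tilde v_1/d\lambda$ and $d\tilde v_2/d\lambda$ do not vanish. The valid substitute is a \emph{firm-level} indifference, not a planner envelope: by \eqref{eq:cond1}, every firm that goes to pool~1 produces exactly $\tilde v_1$ in time-weighted expectation, which gives the accounting identity \eqref{eq:prodfir}, total production $=n(1-P(0))\,\E[\max\{v_i,\tilde v_1\}]+(1-n(1-P(0)))\,\tilde v_1$, strictly increasing in $\tilde v_1$; combined with $\tilde v_1$ increasing in $\lambda$ this delivers the headline production claim without any efficiency premise. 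Relatedly, your ``main obstacle'' in step one is self-inflicted: given $\tilde v_1$ (and $r_1$), \eqref{eq:cond2} pins down $\tilde v_2$ with no direct dependence on $\lambda$, so the right-hand side of \eqref{eq:cond1} strictly increases in $\lambda$ at fixed thresholds, and uniqueness from Lemma~\ref{lemma:lambda_LE} immediately yields $d\tilde v_1/d\lambda>0$ --- no monotone-system analysis needed.

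The same flaw infects part (ii): you never establish that $\tilde v_2$ is decreasing in $\lambda$, which is what the paper's proof of end-of-period productivity actually rests on, and your substitute ($\partial P_{after}/\partial\tilde v_1>0$ plus a $\tilde v_2$-envelope) hides exactly the hard content. Raising $\tilde v_1$ diverts firms into pool~1, which changes the compositions of \emph{both} pools (pool~2 inherits rejected referrals and fired pool-1 draws), and signing the net effect is precisely what the paper's partition argument (the sets $A$--$E$ with uniform sampling, Figure~\ref{fig:2}, using the inequality chain \eqref{eq:firing1}--\eqref{eq:firing2}) is built to do; asserting the sign of $\partial P_{after}/\partial\tilde v_1$ as a marginal-worker comparison ignores these composition externalities. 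Once $\tilde v_2'<\tilde v_2$ is proved, after-firing productivity follows because $\tilde v_2$ equals the average value of the end-of-period unemployed. Finally, for the headline bias comparison against $\lambda=0$ you only compare \emph{before-firing} bias, but the claim concerns realized end-of-period employment: you need the paper's closing observation that since $\tilde v_2<\tilde v<\tilde v_1$ every worker with value below $\tilde v_2$ is in pool~1 and pool draws are uniform, so fired masses are proportional to group populations while the pool-2 replacement draws disproportionately favor greens --- hence firing only reinforces the before-firing reduction in bias. Your before-firing bias argument and the ambiguity discussion for two positive values of $\lambda$ are sound and match the paper's.
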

The proof of Proposition~\ref{proposition:firing_workers_lambda} appears in Section~\ref{online:proofs} of the appendix.

Comparing productivity and equality between two different positive values of $\lambda$ is nuanced. While efficiency unambiguously increases the greater the opportunities for firms to fire a worker, the employment rate bias at the end of the period may behave nonmonotonically in $\lambda$. More use of pool 1 can lead to a worse lemons effect in pool 2 as firms engage in more search. This can disadvantage the greens. As a result, the overall implication for the employment rate bias measured at the end of the period is ambiguous.

\subsection{The Impact of Macroeconomic Conditions on Productivity and Inequality}
\label{sec:macro}

Before closing, let us mention one more comparative static which illustrates how the effects of the model change with outside conditions. We examine how macroeconomic conditions, such as the number of jobs available, affect productivity and inequality among those being hired in any given period.\footnote{For a comparative static in how the size of a network affects hiring via referrals (in a different model), see \cite{calvo2005job}.} To answer this, we hold the network of connections fixed, but randomly remove some firms from the job market (so that a connection that resulted in a referral may now be useless). In particular, if a mass $\kappa<1$ of firms are no longer hiring, what is the impact on the job market? Of course, this decreases overall employment.\footnote{This presumes that workers were being hired from the pool initially.  If not, there  can be very particular situations in which the reduced lemons effect makes it worthwhile to hire from the pool, which could reverse this effect.}  The more subtle question is what happens to the average productivity per employed worker and to the inequality in the society.  

First, let us consider the productivity per employed worker. The drop in firms decreases the fraction of workers who get at least one referral, i.e., who are looked at on the referral market, $1-P(0)$, by some positive amount (at most $\kappa/n$). As a result, the lemons effect decreases and the remaining firms now have better outside options and their production, so productivity per worker, increases. 

Second, firms that do remain are less likely to face competition from other firms for their referred workers.  This leads to a decrease in average wages because of the reduced competition and also the lower lemons effect and improved alternative of hiring from the pool. 

These results are summarized in the following result, in which we presume that firms would hire from the pool if $\kappa=0$.  It is proven as Proposition~\ref{prop:macro} in Section~\ref{online:proofs} of the appendix.

Suppose a mass $\kappa<1$ of firms no longer hire on the job market. Then:
\begin{itemize}
\setlength\itemsep{-0.5em}    
    \item total production decreases;
    \item production per employed worker increases; and
    \item the old wage distribution of wages first-order stochastically dominates the new one.
\end{itemize}
All comparisons are strict if $\kappa>0$.

Are different groups differentially affected by such downturns? There are two effects.  One is the reduction in the lemons effect: this unambiguously reduces inequality across groups as it disproportionately lowers the advantage of referrals both in terms of wages and the chances of being employed. The second effect is that workers lose referrals, which actually relatively hurts the disadvantaged group more, since more of them who are getting referrals are only getting one referral and lose that crucial referral. In particular, if $\hat{P}(0|m)$ is convex in $m$ (as is naturally the case for most distributions, as previously discussed), then for purely homophilous groups, the disadvantaged group, rather than the advantaged group, loses a larger fraction of workers looked at on the referral market.

\subsection{The Impact of Concentrating Referrals on Wage Inequality}\label{sec:gini}

In Section~\ref{sec:ineq_and_immobility}, we studied how employment bias from unequal referrals and homophily leads to inequality across groups (Proposition~\ref{proposition:group_outcomes}). However, this analysis does not speak to aggregate measures of inequality in society (ignoring group identities) as a whole and how they respond to changes in the referral distribution. This is the subject of this section: the impact on wage inequality (which in this model is equivalent to income inequality; more comments on this below) from changes in the referral distribution.

The first thing to note is the key factor in determining the wage distribution in our model is the fraction of workers who get more than one referral: Those are the workers who have some competition for their services and earn above the minimum wage. The fraction of workers who get no referral still determines the expected value from hiring from the pool, and thus impacts wages that people with multiple referrals obtain, so it is also involved but with a different (marginal) effect.

To measure inequality, we use the Gini coefficient of wages (a formal definition for our setting follows shortly in equation~\eqref{giniexp}). As comparing distributions is generally an incomplete exercise, using such a coefficient allows one to make comparisons of partially ordered distributions. However, even using this standard one-dimensional measure, and looking at simple settings, we will see that inequality can move in different ways from the same comparative statics, depending on the specifics of the environment.

One might expect that increasing the fraction of workers who get multiple referrals would be sufficient to increase inequality, but things are not so direct. First, even if few workers are high-wage earners, increasing the size of that group can actually increase or decrease inequality, depending on relative wages and the relative size of the group to begin with, as the Gini coefficient makes relative comparisons. In addition, increasing the fraction of workers who get multiple referrals can alleviate the lemons effect if accompanied by a decrease in the fraction of workers with no referral. This tends to decrease the wages of the workers who have multiple referrals, lowering the high wage, which can then decrease inequality. Thus, there can be different sorts of countervailing effects.

To get a characterization of when inequality is raised by increasing the fraction of workers who get multiple referrals, we consider an environment in which workers are either of high or low value, $v_H$ or $v_L$, with fraction $f_H$ of the population having the high value. As the minimum wage equals the value of workers' outside options, this implies that there are just two levels of wages. All of the countervailing effects described above already exist with just two types, and this simplification makes it easier to see the intuition.

Let $w_H \equiv v_H-\tilde{v}+\ubar{w}$ denote the equilibrium wage of a worker with high value (the ``high wage''), let $\pi_H \equiv P(2+)f_H$ denote the fraction of workers who earn the high wage,  and let $\pi_L \equiv (1- P(2+)f_H)$ be the remaining fraction of workers, who all earn the minimum wage, $\ubar{w}$, (the ``low wage''). Finally, let  $W_L \equiv \frac{\ubar{w}}{w_H}$ denote the low wage relative to the high wage. One can write the Gini coefficient, denoted by $Gini$, of an economy as follows:\footnote{The expression is based on the definition of the Gini coefficient as the mean absolute difference in wages, normalized by twice the average wage. This definition is equivalent to the more standard one in terms of the area underneath the Lorenz Curve \citep{SenEconomicInequality1973}.}
\begin{equation}
\label{giniexp}
Gini = \frac{\pi_H\pi_L(w_H-\ubar{w})}{\pi_H w_H+\pi_L \ubar{w}} =
\frac{ \pi_H \pi_L (1-W_L)}{ \pi_H+ \pi_L W_L }= \frac{ \pi_H (1-\pi_H) (1-W_L)}{ \pi_H+ (1-\pi_H) W_L }.
\end{equation}

Let us consider what happens when the fraction of workers who receive the high wage increases ($\pi_H$ increases), which is a consequence of increasing the fraction of workers who get get multiple referrals. Straightforward calculations show that:
\begin{equation*}
\frac{\partial Gini}{\partial \pi_H} = \frac{(1-W_L)(W_L\pi_L^2-\pi_H^2)-\frac{\partial W_L}{\partial \pi_H}\pi_H\pi_L}{(\pi_H+\pi_LW_L)^2}.
\end{equation*}
The associated change in the Gini coefficient consists of two parts: the effect of changing the fraction of workers who receive the high wage, and then the effect of changing the wage via the lemons effect.

The first part  (ignoring the term including $\frac{\partial W_L}{\partial \pi_H} $) is positive for low $\pi_H$ and high $W_L$, but then becomes negative as $\pi_H$ increases and $W_L$ decreases. However, then the overall expression is impacted by the  $\frac{\partial W_L}{\partial \pi_H}$ term, which accounts for the lemons effect. The change in the lemons effect depends on both the probability of not getting a referral as well as multiple referrals.

To fully sign the $\frac{\partial W_L}{\partial \pi_H}$ term, we also need to know what happens to the fraction of workers who do not get a referral. In particular, the change in $W_L$ comes from the change in $w_H$, which is exactly governed by the fraction of workers who do not get a referral. Knowing how $P(2+)$ changes, tells us about $\pi_H$, but we need to know how $P(0)$ changes to determine the change in $w_H$. If the change in the referral distribution is comprised of both an increase in the fraction of workers who do not get a referral and the fraction of workers who get multiple ones, then the increase in $\pi_H$ is accompanied by an increase in $P(0)$ which, using Proposition~\ref{lemma:inequality_and_productivity}, decreases the lemons effect and thus the high wage so that $\frac{\partial W_L}{\partial \pi_H}>0$. This then counteracts the impact of the first term for low $\pi_H$ and can reverse the change in the Gini coefficient.

These results are summarized in the following proposition, which should be clear from the above discussion, so we omit a formal proof. Let $I_H=\pi_H w_H$ and $I_L=\pi_L w_L$ denote the total wage going to the high- and low-wage workers, respectively.

\begin{proposition}
\label{GiniChange}
The Gini coefficient decreases with an increase in the fraction of workers who do not get a referral (an increase in $P(0)$), holding $P(2+)$ constant. And, holding $P(0)$ constant, the Gini coefficient increases with an increase in the fraction of workers who get multiple referrals (an increase in $P(2+)$) if and only if $I_L\pi_L> I_H\pi_H$.
\end{proposition}

The comparison $I_L\pi_L> I_H\pi_H$ captures the relative total wages, population weighted, which is important in determining whether the inequality is due to most of the society's income coming from the low wage, which holds for low values of $\pi_H$ and ${w_H}$, or the reverse, which determines whether things are getting more or less equal.

Proposition~\ref{GiniChange} implies that, e.g., a mean-preserving spread in $P$, such that the probability of not getting a referral gos down and the probability of getting multiple referrals goes up  (presuming $I_L\pi_L> I_H\pi_H$),  pushes up the fraction of high wage earners and also increases their wage due to an increased lemons effect, thus increasing the Gini coefficient. If instead (again presuming that  $I_L\pi_L> I_H\pi_H$), we consider a first-order dominance shift in $P$, then there are countervailing forces: more people earn the higher wage, but that wage goes down due to an improved value of the pool. Either force can dominate depending on the particular parameters.

In Section~\ref{online-appendix:results} of the appendix, we discuss what happens when firms' profits are also included in the calculations of inequality.

\subsection{Proofs of Results in Appendix~\ref{sec:policyapp}}\label{online:proofs}

\begin{proof}[Proof of Proposition~\ref{optAA}]
A combination of a quota and cap policy (referred to as the policy) can be represented by the sets $A_b$ and $A_g$ where $A_b$ is the set of blue workers who are not hired on the referral market because of the policy and $A_g$ is the set of green workers who are hired on the referral market because of the policy. As low-value blue workers would also not be hired on the referral market, $A_b$ consists of high-value blue workers. Similarly, $A_g$ consists of low-value green workers.

We want to understand the differences in total production and employment by group with and without the policy. To this end, we partition the set of workers into several subsets whose aggregate composition of workers stays constant with or without the policy. We then track the extend to which each of these sets of workers is employed. To be clear, the realization of randomization may imply that individual workers belong to different sets with or without the policy. However, workers can always be grouped into such subsets so that their composition in terms of value and group stays constant.

The pool under the policy differs from the pool without it in two ways. First, workers in $A_b$ enter the pool. Second, workers in $A_g$ do not enter pool. Letting $B$ denote the set of workers who enter the pool with and without the policy, the pool is given by the union of $A_g$ and $B$ without the policy and by the union of $A_b$ and $B$ with the policy. We represent these sets graphically in Figure \ref{fig1} where $A_g$ is represented by the dotted rectangle, $A_g$ by the dashed rectangle and $B$ by the solid rectangle.

Due to uniform sampling from the pool, it must be that the same fraction of workers in $A_g$ and $B$ is hired without the policy and of $A_b$ and $B$ with the policy. We graphically represent this fact by placing the aforementioned rectangles side-by-side and horizontally slicing through them (the dashed lines) with the top area representing the set of unemployed workers and the bottom area the set of workers hired from the pool. 

Note that the size of the set of unemployed needs to be constant, it equals $n-1$, regardless of whether the policy is in place or not. Thus, in Figure \ref{fig1}, the size of the dotted area equals size of the gray area in each subfigure respectively. 

Suppose first that more low-value green workers are hired on the referral market than high-value blue workers are not, i.e., $|A_g|\geq |A_b|$. To compare total production and employment by group with and without the policy, we do the following calculation. Due to the policy, workers in the gray region in Figure \ref{fig1a} are displaced by workers in the dotted region. We separately analyze the impact of displacing workers in the gray region inside $B$ and workers in the gray region outside $B$.  For the first calculation, let $\tilde{v}^*$ denote the average value of a worker in $B$ and $f^*_g$ the fraction of workers in $B$ that are green. Replacing a mass of such workers with low-value green workers increases the employment of green workers (as $f^*_g\leq 1$) and decreases total production (as $\tilde{v}\geq v_L$). The rate at which this occurs, the gain in green employment divided by the loss in production is given by 
\begin{equation}\label{eq:1a}
    \frac{1-f_g^*}{\tilde{v}^*-v_L}.
\end{equation}
We can express $\tilde{v}^*$ and $f_g^*$ in terms of $\tilde{v}$ and $f_g$, the average productivity and the fraction of green workers in the pool without the policy in place, i.e., in $B\cup A_g$
\begin{align*}
	\tilde{v} = \frac{|B|\tilde{v}^*+|A_g|v_L}{|B|+|A_g|} &\implies \tilde{v}^* = \frac{(|B|+|A_g|)\tilde{v}-|A_g|v_L}{|B|}; \\
	f_g = \frac{|B|f_g^*+|A_g|\cdot 1}{|B|+|A_g|} &\implies f_g^* = \frac{(|B|+|A_g|)f_g-|A_g|\cdot 1}{|B|}.
\end{align*}
\eqref{eq:1a} then simplifies to 
\begin{equation}\label{eq:1b}
    \frac{1-f_g}{\tilde{v}-v_L}.
\end{equation}
For the second calculation, note the rest of the workers in the dotted area are replacing workers in the intersection of the gray area with $A_b$, the set of high-value blue workers that are not hired because of the policy. The gain in green employment divided by the loss in production for this change in the employment composition is simply
\begin{equation}\label{eq:1c}
    \frac{1}{v_H-v_L}.
\end{equation}
Thus, it must be that the total gain in employment divided by the total loss in production due to the policy is sandwiched by \eqref{eq:1b} and \eqref{eq:1c}.

Let us now suppose that more low-value green workers are hired on the referral market than high-value blue workers are not, i.e., $|A_g|\leq |A_b|$. Now, due to the policy, workers in the gray region in Figure \ref{fig1b} are displaced by workers in the dotted region. To describe the dotted region in terms of primitives ($v_L)$  and characteristics of the pool without the policy ($\tilde{v}$ and $f_g$), we divide it up into two rectangles as shown in Figure \ref{fig1c}. Recall that the union of $A_g$ and $B$ is simply the pool in the absence of the policy. Thus, the average value of a worker in the bottom rectangle in Figure \ref{fig1c} and the fraction of such workers that are green are given by $\tilde{v}$ and $f_g$ respectively. The increase in employment of green workers (as $f_g\geq 0$) divided by the loss in production (as $v_H\geq \tilde{v})$ when replacing workers in $A_b$ with workers in the aforementioned rectangle is thus given by 
\begin{equation}\label{eq:1d}
    \frac{f_g}{v_H-\tilde{v}}.
\end{equation}
The rest of the workers in gray area are replacing workers in the top rectangle in Figure \ref{fig1c}, low-value green workers. As before, the increase in green employment over the loss in total production is given by \eqref{eq:1c}. Thus, it must be that the total gain in employment divided by the total loss in production due to the policy is sandwiched by \eqref{eq:1c} and \eqref{eq:1d}.

It is clear from the above calculations that the increase in green employment over the loss in total production for $|A_g|>|A_b|=0$, $|A_g|=0<|A_b|$ and $|A_g|=|A_b|$ is given by $\frac{1-f_g}{\tilde{v}-v_L}$, $\frac{f_g}{v_H-\tilde{v}}$ and $\frac{1}{v_H-v_L}$ respectively. Furthermore, $\frac{1}{v_H-v_L}$ is sandwiched by $\frac{1-f_g}{\tilde{v}-v_L}$ and $\frac{f_g}{v_H-\tilde{v}}$ so that it suffices to consider a pure quota or cap policy only.

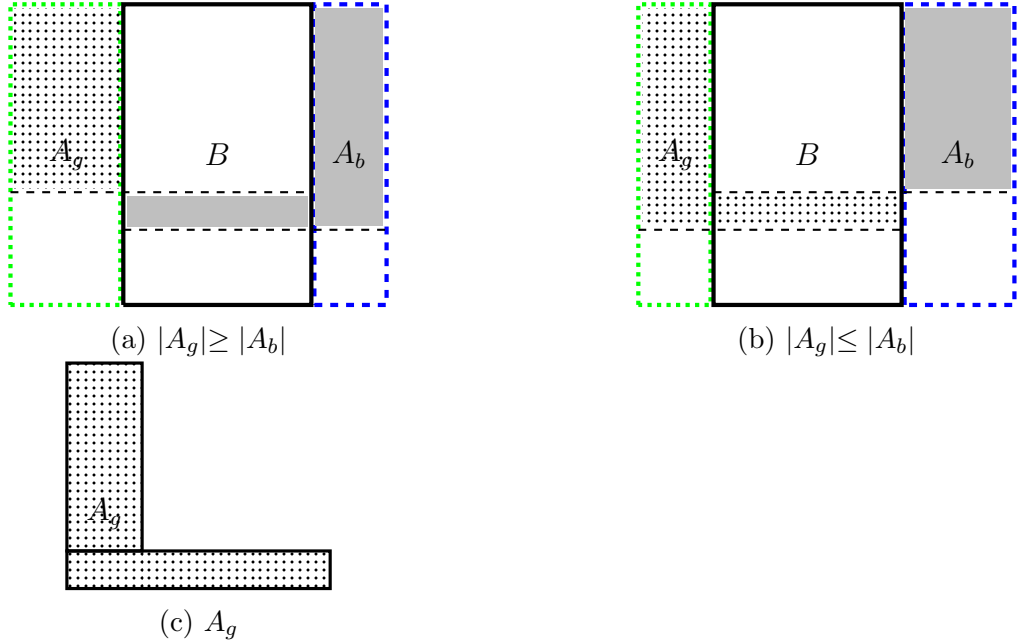
\begin{figure}[ht!]
\begin{subfigure}{.5\textwidth}
  \centering
\begin{tikzpicture}[scale=.5]
\def \shift {0.08}
\draw[ultra thick,dashed] (8+\shift,0) -- (10,0) -- (10,8) -- (8+\shift,8) -- (8+\shift,0);
\draw[ultra thick,dotted] (0,0) -- (3-\shift,0) -- (3-\shift,8) -- (0,8) -- (0,0);
\draw[ultra thick] (3,0) -- (8,0) -- (8,8) -- (3,8) -- (3,0);

\def \shift {0.1}
\fill[pattern = dots] (0+\shift,3+\shift) rectangle (3-\shift,8-\shift);
\fill[lightgray] (3+\shift,3-\shift) rectangle (8-\shift,2+\shift);
\fill[lightgray] (8+\shift,2+\shift) rectangle (10-\shift,8-\shift);

\draw (1.5,4) node {$A_g$};
\draw (5.5,4) node {$B$};
\draw (9,4) node {$A_b$};
\draw[thick,dashed] (0,3) -- (8,3);
\draw[thick,dashed] (3,2) -- (10,2);
\end{tikzpicture}
\caption{$|A_g|\geq |A_b|$}
\label{fig1a}
\end{subfigure}
\begin{subfigure}{.5\textwidth}
  \centering
\begin{tikzpicture}[scale=.5]
\def \shift {0.08}
\draw[ultra thick,dashed] (7+\shift,0) rectangle (10,8);
\draw[ultra thick,dotted] (0,0) rectangle (2-\shift,8);
\draw[ultra thick] (2,0) rectangle (7,8);

\def \shift {0.1}
\fill[pattern = dots] (0+\shift,2+\shift) rectangle (2-\shift,8-\shift);
\fill[pattern = dots] (2+\shift,3-\shift) rectangle (7-\shift,2+\shift);
\fill[lightgray] (7+\shift,3+\shift) rectangle (10-\shift,8-\shift);

\draw (1,4) node {$A_g$};
\draw (4.5,4) node {$B$};
\draw (8.5,4) node {$A_b$};
\draw[thick,dashed] (0,2) -- (7,2);
\draw[thick,dashed] (2,3) -- (10,3);
\end{tikzpicture}
\caption{$|A_g|\leq |A_b|$}
\label{fig1b}
\end{subfigure}
\begin{subfigure}
  {.5\textwidth}
  \centering
\begin{tikzpicture}[scale=.5]
\def \shift {0}
\fill[pattern = dots] (0+\shift,2+\shift) rectangle (2-\shift,8-\shift);
\fill[pattern = dots] (2+\shift,3-\shift) rectangle (7-\shift,2+\shift);
\draw (1,4) node {$A_g$};
\draw[very thick] (0,3) rectangle (2,8);
\draw[very thick] (0,2) rectangle (7,3);
\end{tikzpicture}
\caption{$A_g$}
\label{fig1c}
\end{subfigure}
\caption{A graphical representation of the pool with and without the policy.}
\label{fig1}
\end{figure}
\end{proof}

We make use of the following lemma in some of the remaining proofs.

\begin{lemma}\label{lemma:monotonicity}
Let $\tilde{v}$ denote the unique equilibrium threshold and $v$ a generic hiring threshold. Then $\min_r\E_{v,r}[v_i|i \in \pool]\geq \max_r\E_{v',r}[v_i|i \in \pool]$ if $v<v'<\tilde{v}$ and $\max_r\E_{v,r}[v_i|i \in \pool]\geq \min_r\E_{v',r}[v_i|i \in \pool]$ if $\tilde{v}<v<v'$. In particular, if $F$ is continuous, then \eqref{tau} is decreasing in $v$ for $v<\tilde{v}$ and increasing otherwise.

As total production is inversely related to the average value in the pool, it is first increasing and then decreasing in the hiring threshold with the same caveat at mass points of the distribution.
\end{lemma}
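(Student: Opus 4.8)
The plan is to show that the pool value, viewed as a function of the hiring threshold $v$ (with marginal mixing $r$), is a single-trough function whose minimum is attained at the equilibrium threshold $\tilde v$; the two displayed inequalities and the production corollary then follow immediately. First I would treat continuous $F$, where the mechanism is transparent. Writing $g(v)\coloneqq\E_{v,r}[v_i\mid i\in\pool]$ (the atom term in \eqref{tau} drops out), direct differentiation of \eqref{tau} gives
\[
g'(v)=\frac{(1-P(0))\,f(v)}{D(v)}\,\bigl(v-g(v)\bigr),\qquad D(v)=P(0)+(1-P(0))\Pr(v_i<v)>0,
\]
so that $\operatorname{sign}g'(v)=\operatorname{sign}\bigl(v-g(v)\bigr)$. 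Intuitively, raising the threshold pushes into the pool the marginal referred workers, whose value is about $v$, moving the pool average toward $v$.

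Next I would pin down the turning point. Let $h(v)=g(v)-v$. At any zero of $h$ the display gives $h'(v)=g'(v)-1=-1<0$, so $h$ is strictly decreasing through each of its zeros; hence $h$ has at most one zero, and since $g(v)\to\E[v_i]$ at both ends of the support while $\E[v_i]$ strictly exceeds the minimum of the support, $h$ changes sign and the zero exists. By \eqref{eqw_con}, in the regime where firms hire from the pool this unique zero is exactly the equilibrium threshold $\tilde v$ (there $\tilde v=g(\tilde v)$), consistent with the lemons bound $g<\E[v_i]$ of Lemma~\ref{proposition:LE}. Since $h>0$ below $\tilde v$ and $h<0$ above, the sign formula yields $g'<0$ for $v<\tilde v$ and $g'>0$ for $v>\tilde v$: $g$ decreases below $\tilde v$ and increases above, which is the monotonicity asserted in the ``in particular'' clause.

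For general $F$ with atoms, where the $\min_r$/$\max_r$ formulation is needed, I would replace the derivative by the averaging argument already used for uniqueness in the proof of Lemma~\ref{eqw}. Parametrize the pool by the set $R$ of referred workers who are rejected: a pair $(v,r)$ rejects every referred worker of value below $v$, a fraction $1-r$ of those exactly at $v$, and none above. Enlarging $R$ — by raising $v$, or by lowering $r$ at an atom — adds pooled mass of nondecreasing value $u$, and each such addition moves $g(R)$ in the direction $\operatorname{sign}(u-g(R))$. Because the marginal values are nondecreasing while $g$ is pulled toward them, $g$ has a single trough along any increasing chain of rejection sets, the trough being where the marginal value meets $g$, namely at $\tilde v$. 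The $\min_r$ at a threshold selects the largest admissible rejection set and the $\max_r$ the smallest; for $v<v'<\tilde v$ the former at $v$ (take $r=0$) is still nested inside the latter at $v'$ (take $r=1$), and both sit on the decreasing branch, giving $\min_r\E_{v,r}[v_i\mid i\in\pool]\ge\max_r\E_{v',r}[v_i\mid i\in\pool]$; the region above $\tilde v$ is symmetric, with the pool value increasing in the threshold.

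Finally, for the production corollary I would use the accounting identity from the proof of Proposition~\ref{lemma:inequality_and_productivity}: when firms hire from the pool, the $(n-1)$ unemployed are exactly the unhired pool members, so their average value equals the pool value $g(v)$ and
\[
\text{(total production)}=n\,\E[v_i]-(n-1)\,g(v)+(n-1)\,w_{min}.
\]
This is affine and (for $n>1$) strictly decreasing in $g(v)$, so total production is inversely related to the pool value and hence first increasing and then decreasing in the threshold, peaking at $\tilde v$. The main obstacle is the general-$F$ step: making the ``single trough'' rigorous at mass points and matching the $\min_r$/$\max_r$ selections to the nested rejection sets, since there $g$ is only piecewise smooth and the clean derivative identity is unavailable; the continuous case and the equilibrium accounting identity are routine by comparison.
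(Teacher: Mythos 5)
Your proof is correct and arrives at the same single-trough picture, but by a route that is only half the paper's. The paper's entire argument is the short averaging step: from the uniqueness argument behind Lemma~\ref{eqw} it extracts the equivalence $\E_{v,r}[v_i\mid i\in\pool]>v$ if and only if $v<\tilde{v}$, and then notes that the pool at a larger threshold is the pool at a smaller one augmented by marginal workers whose values lie below the running pool average at the moment they are added, so the average falls below $\tilde{v}$ and rises beyond it, with atoms and the $\min_r/\max_r$ selections absorbed into the same nesting. Your atomic-$F$ step (increasing chains of rejection sets, each addition moving $g$ toward the marginal value) is exactly this argument in different dress. Your continuous-$F$ step, however, is genuinely different: the identity $g'(v)=\frac{(1-P(0))f(v)}{D(v)}\,(v-g(v))$ together with $h'=-1$ at any zero of $h=g-v$ is a self-contained calculus proof that the pool-value map has a unique interior fixed point, independent of Lemma~\ref{eqw}'s uniqueness argument --- a small bonus the paper's proof does not supply. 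The production corollary via $n\,\E[v_i]-(n-1)g(v)+(n-1)w_{min}$ is precisely the accounting already used in the proof of Proposition~\ref{lemma:inequality_and_productivity}, so there you and the paper coincide. (Incidentally, you read the second displayed inequality of the lemma in its intended increasing sense; as literally printed, the roles of $\min_r$ and $\max_r$ there appear transposed relative to the first inequality.)

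One of your hedges deserves emphasis because it is load-bearing, not merely cautious: the trough coincides with $\tilde{v}$ only in the regime where firms hire from the pool, i.e., where $\tilde{v}$ is itself a fixed point of the pool-value map. When instead $\tilde{v}=w_{min}>\E_{\tilde{v},r}[v_i\mid i\in\pool]$, the unique fixed point $v_0$ of $g$ lies strictly below $w_{min}$ and $g$ is increasing on $(v_0,\tilde{v})$, so the paper's ``immediate corollary'' --- and with it the lemma's first inequality --- fails on that interval. For instance, with $F$ uniform on $[0,1]$, $P(0)=\tfrac{1}{2}$, and $w_{min}=0.9$, one computes $g(v)=\frac{1+v^2}{2(1+v)}$, so $v_0=\sqrt{2}-1\approx 0.414$ while $\tilde{v}=w_{min}=0.9$, and $g(0.5)<g(0.8)$ even though $0.5<0.8<\tilde{v}$. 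Nothing downstream is affected, since every invocation of the lemma (Lemma~\ref{lemma:lambda_LE}, the set $C$ in the proof of Proposition~\ref{aamultiplier}) occurs in the hiring regime; but your formulation --- trough at the fixed point, identified with $\tilde{v}$ under pool hiring --- is the correct general statement, and on this corner case your proposal is more careful than the paper's own proof.
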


\begin{proof}[Proof of Lemma~\ref{lemma:monotonicity}]
As an immediate corollary of Lemma~\ref{eqw}, for all threshold $v$, 
\begin{equation*}
    \E_{v,r}[v_i|i \in \pool]>v \iff v<\tilde{v}.
\end{equation*}
Thus, if $v<v'<\tilde{v}$, then the pool with threshold $v'$ consists of the union of marginal workers added to the pool through increases in the threshold each of which with values lower than the expected value in the pool when they were added. Thus, the result follows with \eqref{tau} increasing for values larger than $\tilde{v}$ shown analogously.\end{proof}

\begin{proof}[Proof of Lemma~\ref{lemma:lambda_LE}]
Existence of equilibrium thresholds is proven analogously as in Lemma~\ref{eqw}. Similarly, equations \eqref{eq:cond1} and \eqref{eq:cond2} are simply optimality conditions that need to be satisfied for firms to adhere to the equilibrium thresholds thus showing equilibrium.

Next, to order the thresholds, suppose that $\tilde{v}_2\geq \tilde{v}_1$. Pool 2 consists of workers from pool 1 with some workers with values above $\tilde{v}_2$ removed and (referral) workers with values between $\tilde{v}_1$ and $\tilde{v}_2$ added. The first change implies a decrease in the average value in the pool, whereas the second an increase. However, adding a mass of workers with values between $\tilde{v}_1$ and $\tilde{v}_2$ cannot result in the average value in the pool to increase to $\tilde{v}_2$. The nondegeneracy of $F$ then implies that $\tilde{v}_1$ must in fact be strictly greater than $\tilde{v}_2$.

Suppose that $\tilde{v}_1\leq \tilde{v}$. As when the referral threshold equals $\tilde{v}$, the expected value in pool 1 equals $\tilde{v}$ and is otherwise greater by Lemma~\ref{lemma:monotonicity}, it is clear that \eqref{eq:cond1} cannot be satisfied; again, with strictness coming from the nondegeneracy of $F$.

Suppose that $\tilde{v}_2\geq \tilde{v}$. In equilibrium, total production is given by
\begin{equation}\label{eq:prodfir}
	n(1-P(0))\E[\max\{v_i,\tilde{v}_1\}] + (1-n(1-P(0)))\tilde{v}_1.
\end{equation}
Similarly, total production for $\lambda =0$ is given by
\begin{equation*}
	n(1-P(0))\E[\max\{v_i,\tilde{v}\}] + (1-n(1-P(0)))\tilde{v},
\end{equation*}
and thus clearly total production is greater in the former case $\tilde{v}_1>\tilde{v}$.

However, production before firing must be lower as a referral threshold of $\tilde{v}$ maximizes (immediate) production by Lemma~\ref{lemma:monotonicity}. Furthermore, as $\tilde{v}_2\geq \tilde{v}$, production after firing is also lower when firing is permitted by a simple accounting exercise as in Lemma~\ref{lemma:inequality_and_productivity}; a contradiction. Thus, $\tilde{v}_2< \tilde{v} <\tilde{v}_1$.

To show uniqueness for $\tilde{v}_1$, suppose the contrary and consider two hiring thresholds $\tilde{v}_1$ and $\tilde{v}_1'$ and suppose without loss that $\tilde{v}<\tilde{v}_1<\tilde{v}_1'$. Total production must be higher under $\tilde{v}_1'$ than under $\tilde{v}_1$ as is evident from \eqref{eq:prodfir}. By Lemma~\ref{lemma:monotonicity}, production before firing is lower under $\tilde{v}_1'$ than under $\tilde{v}_1$ as $\tilde{v}<\tilde{v}_1<\tilde{v}_1'$. Hence, it must be that $\tilde{v}_2'<\tilde{v}_2$ for total production to be the same under the two thresholds. But then \eqref{eq:cond1} cannot be satisfied for both $\tilde{v}_1$ and $\tilde{v}_2$ and $\tilde{v}_1'$ and $\tilde{v}_2'$ as the both terms on the right-hand side are larger given $\tilde{v}_1$ and $\tilde{v}_2$ than given $\tilde{v}_1'$ and $\tilde{v}_2'$ whereas $\tilde{v}_1<\tilde{v}_1'$ by assumption.

Lastly, let us show that $\tilde{v}_2$ is unique. By the uniqueness of $\tilde{v}_1$, it must be that there are $r_1,r_1'$ with $r_1> r_1'$ (without loss) so that $\tilde{v}_1$ with $r_1$ and some $\tilde{v}_2$ as well as $\tilde{v}_1$ with $r_1'$ and some $\tilde{v}_2'$ with $\tilde{v}_2\neq \tilde{v}_2'$ satisfy \eqref{eq:cond1} and \eqref{eq:cond2}. As $\tilde{v}_1>\tilde{v}$, it must be that production before the firing is higher given $\tilde{v}_1$ and $r_1$ than $\tilde{v}_1$ and $r_1'$ and 
\begin{equation*}
    \E_{\tilde{v}_1,\tilde{v}_2,r_1,r_2}[v_i| i\in \text{pool }1]<\E_{\tilde{v}_1,\tilde{v}_2',r_1',r_2'}[v_i| i\in \text{pool }1]
\end{equation*}
by an argument similar to that of Lemma~\ref{lemma:monotonicity}. As total production in equilibrium is determined by $\tilde{v}_1$ according to \eqref{eq:prodfir}, it must be the same for both sets of thresholds. Hence, it must be that $\tilde{v}_2<\tilde{v}_2'$, so that the comparison of production reverses after firing. However, then equilibrium condition \eqref{eq:cond1} cannot be satisfied for both sets of thresholds as both terms on the right-hand side of \eqref{eq:cond1} are larger for $\tilde{v}_1$ with $r_1'$ and $\tilde{v}_2'$ than with $r_1$ and $\tilde{v}_2$ whereas the left-hand side, $\tilde{v}_1$, is constant.
\end{proof}

\begin{proof}[Proof of Proposition~\ref{proposition:firing_workers_lambda}]
First, we show that $\tilde{v}_1$ is strictly increasing in $\lambda$. Fixing $\tilde{v}_1$, and thus $\tilde{v}_2$ , the right-hand side of \eqref{eq:cond1} strictly increases in $\lambda$ (as $\tilde{v}_2$ is bounded away from the lowest possible value as $P(0)>0$). Thus, it must be that $\tilde{v}_1$ strictly increases with $\lambda$ as $\tilde{v}_1$ is unique by Lemma~\ref{lemma:lambda_LE}. In equilibrium, total production is given by \eqref{eq:prodfir} and in particular strictly increasing in $\tilde{v}_1$, so that is is also strictly increasing in $\lambda$. 

Next, we show that $\tilde{v}_2$ is strictly decreasing in $\lambda$. Let $\lambda'>\lambda$ and denote the equilibrium thresholds by $(\tilde{v}_1',\tilde{v}_2')$ and $(\tilde{v}_1,\tilde{v}_2)$ respectively. We know that $\tilde{v}_1'> \tilde{v}_1$. To reach a contradiction suppose that $\tilde{v}_2'\geq \tilde{v}_2$.

We study how production after firing differs given $\lambda$ and $\lambda'$ with $\lambda,\lambda'>0$. To this end, we partition the set of workers into several subsets whose aggregate composition of workers stays constant given $\lambda$ and $\lambda'$. We then track the extend to which each of these sets of workers is employed. To be clear, the realization of randomization may imply that individual workers belong to different sets depending on whether we consider $\lambda$ or $\lambda$'. However, workers can always be grouped into such subsets so that their composition in terms of values stays constant.

Under $\lambda'$, as $\tilde{v}_1'>\tilde{v}_1$, an additional set of workers enter the pool. In Figure \ref{fig:2}, the thick solid rectangle represents workers who enter the pool under both $\lambda$ and $\lambda'$ whereas the thick dashed rectangle represents workers who only enter the pool under $\lambda'$, i.e., those with values between $\tilde{v}_1$ and $\tilde{v}_1'$. Due to uniform sampling from pool 1, it must be that the same fraction of workers in both rectangles is hired from the aforementioned sets. We graphically represent this fact by placing the rectangles side-by-side and, given $\lambda'$, horizontally slice through them (the top dashed lines) with the top area representing the set of unemployed workers and the bottom area the set of workers hired from pool 1. The second dashed line (the bottom dashed line) represents the random hiring from pool 1 given $\lambda$. Note that the ordering of the dashed lines is deliberate and follows from the fact that an equal increase in firms hiring from pool 1 and workers entering pool 1 implies a higher fraction of workers from pool 1 hired. Furthermore, the size of the set of unemployed before firing needs to be constant, it equals $n-1$, given $\lambda$ or $\lambda'$.  Thus, in Figure \ref{fig:2}, the size of area $C$ equals that of area $D$ (all labels in Figure \ref{fig:2} correspond to the smallest rectangle containing them). A final set of workers consists of those who are hired on the referral market under both $\lambda$ and $\lambda'$. Their contribution to total production after firing is obviously constant under $\lambda$ and $\lambda'$.

Consider the firms hiring workers in $A$ from pool 1. The distribution of values in set $A$ is constant by construction; and, by assumption, a firm at first hiring such worker can replace the worker with a worker with value $\tilde{v}_2'\geq \tilde{v}_2$ under $\lambda'$ and with a worker with value $\tilde{v}_2$ under $\lambda$. Thus, average production after firing of firms hiring workers in $A$ is larger given $\lambda'$.

Consider the firms hiring workers in $B$ from pool 1. All such workers would have been hired on the referral market under $\lambda$ and are drawn from pool 1 under $\lambda'$. Note that no such worker is fired (see Lemma~\ref{lemma:lambda_LE}). Thus, their contribution to total production after firing is constant under $\lambda$ and $\lambda'$. 

Finally, let us compare how firms fare that under $\lambda$ hire workers in $C$ from pool 1 and under $\lambda'$ workers in $D$ from pool 1. We have 
\begin{equation}\label{eq:firing1}
    \E[v_i|i \in D]\leq \tilde{v}_1' < \E_{\tilde{v}_1',r_1'}[\max\{v_i,\tilde{v}_2'\}|i \in \text{pool 1}],
\end{equation}
where the first inequality follows as workers in $D$ have values between $\tilde{v}_1$ and $\tilde{v}_1'$ and the second from \eqref{eq:cond1}. Note the last expectation in the equation above is taken over workers in pool 1 given $\lambda'$. As pool 1 given $\lambda'$ is the union of pool 1 given $\lambda$ and sets $B$ and $D$, the following equation must hold
\begin{multline}\label{eq:firing2}
    \E_{\tilde{v}_1',r_1'}[\max\{v_i,\tilde{v}_2'\}|i \in \text{pool 1}]= \frac{|B|+|D|}{|A|+|B|+|C|+|D|+|E|}\E[v_i|i\in B \cup D] \\ +  \frac{|A|+|C|+|E|}{|A|+|B|+|C|+|D|+|E|}\E_{\tilde{v}_1,r_1}[v_i|i \in \text{pool 1}],
\end{multline}
where the first expectation is taken over workers in pool 1 given $\lambda'$ and the last over workers in pool 1 given $\lambda$. As $\E[v_i|i \in D]=\E[v_i|i \in B \cup D]$, \eqref{eq:firing1} and \eqref{eq:firing2} imply
\begin{equation*}
    \E[v_i| i \in D]<\E_{\tilde{v}_1,r_1}[\max\{v_i,\tilde{v}_2\}|i \in \text{pool 1}],
\end{equation*}
once more, with the expectation taken over workers in pool 1 given $\lambda$. The left-hand side is the average production after firing of the firms hiring workers in $D$ under $\lambda$ while the right-hand side is the average production after firing of firms hiring workers in $C$ under $\lambda'$. Thus, the production after firing the  of the latter equally sized mass of firms is larger.

We have then considered the set of all firms and production after firing is strictly greater under $\lambda'$ than under $\lambda$. As production is inversely related to the average value of the unemployed workers, it must be that $\tilde{v}_2'<\tilde{v}_2$; contrary to our assumption.
Thus, it must be $\tilde{v}_2'<\tilde{v}_2$.

\begin{figure}
    \centering
    \begin{tikzpicture}[scale=.5]
    \draw[ultra thick] (0,0) rectangle (6,8);
    \draw[ultra thick, dashed] (6,0) -- (10,0) --  (10,8) -- (6,8);
    \draw[thick, dashed] (0,5) -- (6,5);
    \draw[thick, dashed] (0,6) -- (10,6);
    \draw (3,2.5) node {$A$};
    \draw (8,3) node {$B$};
    \draw (3,5.5) node {$C$};
    \draw (8,7) node {$D$};
    \draw (3,7) node {$E$};
    \end{tikzpicture}
    \caption{Pool 1}
    \label{fig:2}
\end{figure}

The employment rate of green workers before the firing stage is given by 
\begin{equation}\label{eq:emplgreen}
    \frac{R_g}{n_g}+\left(1-\frac{R_g}{n_g}\right)\frac{1-(R_b+R_g)}{n-(R_b+R_g)},
\end{equation}
where $R_b$ and $R_g$ are the masses of blue and green workers that are hired on the referral market, respectively. As blue workers have a lower probability of not getting a referral, the increase in $\tilde{v}_1$ associated with an increase in $\lambda$ implies that a larger decrease in $\frac{R_b}{n_b}$ than in $\frac{R_g}{n_g}$. As $\frac{R_b}{n_b}>\frac{R_g}{n_g}$, it follows that the employment rate of green workers before the firing stage, given in \eqref{eq:emplgreen}, is increasing in $\lambda$. 

However, the employment rate of green workers, and thus of blue workers, at the end of the period may increase or decrease in $\lambda$. As noted in the discussing leading up the this proposition, there are two competing effects on the employment rate of green workers at the end of the period: in essence, the reliance on pool 1 increases, while firms are more reluctant to hire from pool 2. As the first effect occurs chronologically before the second, we are able to sign the change in green employment before the firing stage (see above). To sketch examples proving the claimed ambiguity, note that the former effect relies on there being a mass of workers (on the referral market) with values near $\tilde{v}_1$, whereas the latter effect relies on there being a mass of workers (in pool 1) with values near $\tilde{v}_2$. By appropriately placing mass points in the value distribution $F$, either effect can dominate. 

Finally, total production is increasing in $\lambda$ as $\tilde{v}_1$ is and by \eqref{eq:prodfir}; production before firing is decreasing in $\lambda$ as $\tilde{v}_1$ is and by Lemma~\ref{lemma:monotonicity}; and production after firing is increasing in $\lambda$ as $\tilde{v}_2$ is decreasing in $\lambda$ and by the usual accounting exercise.

If $\lambda'>\lambda=0$, it is clear from the proof so far that production is greater given $\lambda'$ than given $\lambda$ (also $\tilde{v}_1'>\tilde{v}_1$). Thus, we know that the employment rate of green workers at before the firing stage is greater given $\lambda'$ than given $\lambda$ which equals the employment rate of green workers at the end of the period as no firing takes place. The firing stage will only further increase the employment rate of green workers. In particular, as $\tilde{v}_1>\tilde{v}>\tilde{v}_2$, the masses of fired workers by group are proportional to the population sizes whereas the additional draw from pool 2 is still more likely to be a green worker.\end{proof}

\begin{proposition}
\label{prop:macro}
Suppose a mass $\kappa<1$ of firms no longer hire on the job market. Then:
\begin{itemize}
    \item total production decreases;
    \item production per employed worker increases; and
    \item the old wage distribution of wages first-order stochastically dominates the new one.
\end{itemize}
All comparisons are strict if $\kappa>0$.
\end{proposition}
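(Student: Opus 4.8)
The plan is to recognize that removing a mass $\kappa$ of firms is exactly a \emph{thinning} of the referral distribution, and then to combine Proposition~\ref{lemma:inequality_and_productivity} with two elementary accounting identities. Since the network is held fixed and a removed firm's referral simply disappears, each of a worker's referrals survives independently with probability $1-\kappa$; hence the new distribution $P'$ is the $\kappa$-thinning of $P$, so that $P'(0)=\sum_k P(k)\kappa^k>P(0)$ and $P'(2+)\le P(2+)$ for $\kappa>0$ (using $P(0)<1$). Crucially, the fixed point \eqref{tau} that pins down the equilibrium threshold depends on the referral distribution only through $P(0)$ and not on the mass of firms, so the new threshold $\tilde{v}'$ is exactly the one associated with $P'(0)$ in the unchanged-mass model. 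Proposition~\ref{lemma:inequality_and_productivity} then gives $\E_{\tilde{v}'}[v_i\mid i\in\pool]\ge\E_{\tilde{v}}[v_i\mid i\in\pool]$, and since firms were hiring from the pool at $\kappa=0$ and the pool value only rises, $\tilde{v}'>\tilde{v}>w_{min}$. Because every remaining firm still fills its vacancy, employment equals the firm mass $1-\kappa$, so unemployment rises from $n-1$ to $U=n-1+\kappa$.

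Two of the three conclusions then fall out quickly. For total production I would use that a firm hiring from the pool draws uniformly, so the unemployed are a random subsample of the pool and their average value equals $\tilde{v}$; together with $\sum_i v_i=n\,\E[v_i]$ this yields the identity $\Pi=n\,\E[v_i]-U(\tilde{v}-w_{min})$. Both $U$ and $\tilde{v}$ strictly increase and $\tilde{v}-w_{min}>0$, so $\Pi$ strictly falls. For the wage comparison, Lemma~\ref{eqw} implies that the only workers earning above $w_{min}$ are those with at least two referrals and value above the threshold, each paid $v_i-\tilde{v}+w_{min}$; hence for $w>w_{min}$ the population fraction earning at least $w$ is $P(2+)\Pr(v_i\ge w+\tilde{v}-w_{min})$. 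Since $P'(2+)\le P(2+)$ and $\tilde{v}'>\tilde{v}$ both shrink this quantity, the old wage distribution first-order stochastically dominates the new one (strictly, provided $P(2+)>0$ so that there is wage dispersion to begin with).

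The delicate claim is that production per employed worker rises, and this is where I expect the real work. Writing $E=1-\kappa$, the same accounting gives labor productivity $\mathrm{LP}=(n\,\E[v_i]-U\tilde{v})/E=\E[v_i]+(\E[v_i]-\tilde{v})\,U/E$, where $\E[v_i]-\tilde{v}>0$ by the lemons effect (Lemma~\ref{proposition:LE}). Here the two effects fight: the lemons gap $\E[v_i]-\tilde{v}$ shrinks as $\tilde{v}$ rises, while the ratio $U/E$ grows, so qualitative monotonicity of $\tilde{v}$ is \emph{not} enough and one must control the rate at which $\tilde{v}$ moves. I would do this by differentiating the pool-value fixed point implicitly; after the cancellation of the marginal-value terms this leaves $d\tilde{v}/dq=\E[(v_i-\tilde{v})^+]/\Pr(i\in\pool)$ with $q=P'(0)$. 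Combining this with the defining identity for the pool value, $\tilde{v}\cdot(\text{pool mass})=n\,\E[v_i]-M_R\,\E[v_i\mid v_i>\tilde{v}]$ (with $M_R$ the mass of referral hires), collapses the numerator to $\E[(v_i-\tilde{v})^+]=(\E[v_i]-\tilde{v})/(1-q)$. Substituting $dq/d\kappa=P'(1)/(1-\kappa)$, the inequality $d\,\mathrm{LP}/d\kappa>0$ reduces to $P'(1)\,U<(1-q)\cdot(\text{pool mass})$, which follows at once from $P'(1)\le 1-q$ and the fact that the unemployed are a subset of the pool, so $U\le\text{pool mass}$; it is strict because some firms hire from the pool, making $U$ strictly smaller than the pool mass. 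Integrating the strictly positive derivative over $[0,\kappa]$ gives the result. The main obstacle is precisely this step: the naive monotonicity argument fails, and the proof closes only because the implicit-function computation plus the two identities reduce the claim to the transparent inclusion ``unemployed $\subseteq$ pool.''

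For completeness I would dispatch the edge cases I have been suppressing. When $F$ has atoms the threshold can sit at a mass point over a range of $\kappa$, but there $\tilde{v}$ is constant while $U/E$ strictly increases, so $\mathrm{LP}$ still rises; across such ranges $\tilde{v}$ is continuous and the within-regime derivative argument applies, so $\mathrm{LP}$ is increasing overall. Finally, the mixing parameter $r$ may be fixed as in Lemma~\ref{eqw}, since it does not affect the pool value or any of the masses above, and strictness of every comparison for $\kappa>0$ follows from $P'(0)>P(0)$ (hence $\tilde{v}'>\tilde{v}$), with the wage statement additionally invoking $P(2+)>0$.
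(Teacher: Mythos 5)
Your proof is correct, and on the two substantive bullets it takes a genuinely different route from the paper's. For total production, the paper uses a coupling: it re-activates the shut-down firms and lets them hire from the pool, which can only raise production, and then compares two unit-mass economies whose referral distributions differ only through $P(0)$, invoking Proposition~\ref{lemma:inequality_and_productivity}; your identity $\Pi = n\,\E[v_i] - U(\tilde{v}-w_{min})$, with $U$ and $\tilde{v}$ both strictly increasing, reaches the same conclusion directly and is essentially the accounting hidden inside that proposition's proof. The real divergence is on production per employed worker. The paper argues by decomposition: a mass $(1-P'(0))n$ of the remaining firms earns $\E[\max\{v_i,\tilde{v}'\}]$ on average (one firm per worker looked at on the referral market), the complementary mass earns $\tilde{v}'$, and the claim follows because the threshold rises and $(1-P(0))n-(1-P'(0))n\leq\kappa$. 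Your marginal computation --- $d\tilde{v}/dq = \E[(v_i-\tilde{v})^+]/\Pr(i\in\pool) = (\E[v_i]-\tilde{v})/\bigl((1-q)\Pr(i\in\pool)\bigr)$ together with $dq/d\kappa = P'(1)/(1-\kappa)$, reducing $d\,\mathrm{LP}/d\kappa>0$ to $P'(1)\,U<(1-q)\cdot(\text{pool mass})$ and hence to the inclusion of the unemployed in the pool --- is quantitatively sharper, and it buys something real: it makes explicit that monotonicity of $\tilde{v}$ alone cannot close the argument and that the \emph{random} (thinning) structure of the removal is doing essential work. Indeed, the inequality $(1-P(0))n-(1-P'(0))n\leq\kappa$ only bounds the per-remaining-firm share of referral-viewed workers below by $\bigl((1-P(0))n-\kappa\bigr)/(1-\kappa)$, which is \emph{smaller} than the original share $(1-P(0))n$ whenever $(1-P(0))n<1$ (as it typically is, since distinct referred workers cannot exceed the unit mass of referrals); your thinning bound $1-P'(0)=\sum_k P(k)(1-\kappa^k)\geq(1-\kappa)(1-P(0))$ is precisely what makes that share non-decreasing per firm, and so it also tightens the paper's own decomposition. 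On wages your argument coincides with the paper's (threshold up, $P(2+)$ down), and your caveat that strict dominance requires $P(2+)>0$ is a degenerate case the paper's blanket strictness claim glosses over.
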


\begin{proof}[Proof of Proposition~\ref{prop:macro}]
Let $P(0)$ and $P'(0)$ denote the probabilities of not getting a referral before and after a mass $\kappa$ of firms shut down.

We first show that total production decreases. Consider the new equilibrium, after some firms shut down. Now allow the firms that shut down to hire from the pool; this can only increase production. But then we are comparing two economies with a unit mass of firms but with different probabilities of not getting a referral. As $P(0)\leq P'(0)$, with $P(0)>P'(0)$ when $\kappa>0$, the conclusion follows from Proposition~\ref{lemma:inequality_and_productivity}.

Next, consider the average production of the remaining firms. The discussion  preceding the lemma shows that at the initial hiring threshold, the value in the pool exceeds the initial value in the pool; and strictly so if $\kappa>0$. Due to the uniqueness of the equilibrium threshold (Lemma~\ref{eqw}), it follows that the equilibrium threshold increased. After some firms shut down, a mass $(1-P'(0))n$ of firms will have on average the max of a random worker's value and the equilibrium threshold as their production; a complimentary mass, $1-\kappa -(1-P'(0))$, will have on average the equilibrium threshold as their production. The claim follows as $(1-P(0))n-(1-P'(0))n\leq \kappa$ and since the equilibrium threshold increases.

Lastly, recall that by Lemma~\ref{eqw}, wage of worker $i$ is $v_i-\tilde{v}+\ubar{w}$ if $i$ has more than one referral and $\ubar{w}$ otherwise, where $\tilde{v}$ is the equilibrium threshold. As the equilibrium threshold increases and since there are fewer workers with multiple referrals (both strictly so if $\kappa>0$), the wage distribution after firms shut down is first-order stochastically dominated by the initial wage distribution.
\end{proof}

\subsection{Omitted Examples}
\label{online:examples}

\begin{example}
\label{ex:cycle_pool}
Consider $h_b=h_g\geq 1/2$ and $n_b=n_g>1/2$ and suppose firms do not hire from the pool when initial employment is $e_b=e_g=1/2$. Then, only workers from the referral market are hired and the employment masses, while equal, are less than $1/2$. This reduces the number of referrals and thus increases the probability of not getting a referral which drives up the expected value in the pool. As a result, in the next period, firms may hire from the pool resulting in a unit-mass of employed workers equal by group; i.e., a cycle.
\end{example}

\begin{example}
\label{ex:cycle_bias}
Let $h_b=1;h_g=.5$ and $n_b=n_g>1/2$ so that referrals lean towards blues. Let $F$ be a three-value distribution with $v_H>v_M>v_L$ and small masses at the extremes. Start with equal employment rates and pick values $v_H,v_L$ such that $v_M$ is slightly above the hiring threshold. Blue employment increases as blues get more referrals than greens. This leads to more concentration, an increase in $P(0)$ and hence an increase in the hiring threshold. so that the hiring threshold is now above $v_M$. Then only most hiring comes from the pool and employment rates are mostly equal, and then the cycle repeats.
\end{example}

\begin{example}
\label{ex:algfairness}
Let $F$ be a three-value distribution with  $v_H>v_M>v_L$ and probabilities $(\frac{1-\epsilon}{2},\frac{1-\epsilon}{2},\epsilon)$, respectively, where $0<\epsilon<1$. The productivity values are high enough so that there will always be hiring from the pool. Let $n_G=n_B=1-\delta$, where $0<\delta<\frac{1}{2}$, and assume all blues and $\delta$ greens receive a referral. 

We begin with the regime where firms observe the group identity of workers in the pool. For every $\epsilon$, $v_H,v_M$, and $v_L$ can be chosen such that the hiring threshold on the referral market is exactly at  $v_M$. Then $(1-\epsilon)(1-\delta)$ blues are hired on the referral market. Green employment is thus at most $1-(1-\epsilon)(1-\delta)$  (which is arbitrarily close to zero for small $\epsilon$ and $\delta$). 

Let us consider the regime where the referral status is observable. Then the hiring threshold strictly increases: given the initial threshold, hiring from the pool becomes strictly more attractive because unreferred workers are targeted instead of the (slightly) tainted green workers. Hence, workers with value $v_M$ will not be hired on the referral market, implying that only $\frac{1-\epsilon}{2}(1-\delta)$ blues are hired via referrals. Since green workers are still hired more from the pool than blues, at most $(1-\frac{1-\epsilon}{2})/2$ of blue workers are hired from the pool. We can thus bound blue employment from above by summing the last two expressions; i.e., by $\frac{1-\epsilon}{2}(1-\delta)+ (1-\frac{1-\epsilon}{2})/2$ (which is arbitrarily close to $\frac{3}{4}$ for small $\epsilon$ and $\delta$). Finally, note that for small enough $\epsilon$ and $\delta$, green employment must have increased.
\end{example}

\subsection{Correlated Values}\label{corrvalues}
Homophily may not only occur along group dimensions, so that blues tend to refer blues and greens tend to refer greens, but homophily could also occur along productivity dimensions; i.e., so that workers tend to refer workers who have a similar value to their own.

We study the effects of such homophily by means of an example. We consider a two-value value distribution with equal amounts of high- and low-value workers: Let $F$ be given by
\begin{equation}
v_i = \begin{cases}
v_H \quad &\text{wp } 1/2 \\
v_L \quad &\text{wp } 1/2,
\end{cases}
\end{equation}
with $v_H>v_L$.

Let $\alpha\geq  1/2$ denote the ``inbreeding-bias'': the probability that the connection of some worker is of the same productivity type as that worker. For $\alpha=1/2$, then values are uncorrelated and so there is no imbreeding-bias.

For simplicity, let $n=2$; i.e., $n_H=n_L=1$, where $n_H,n_L$ are the masses of high- and low-value workers, respectively; and,  we work with a referral distribution that has weight only on $P(0)$ and $P(1)$.

On the referral market, firms will always hire a high-value worker and reject a low-value worker. Denoting the employment of high-value workers by $e_H$, the probability that a high-value worker gets a referral is 
$$
    P_H(1|e_H,1-e_H)\coloneqq e_H\alpha +(1-e_H)(1-\alpha).
$$
The steady state employment rate (or mass) of high-value workers, $e_H$, solves
\begin{equation*}
    P_H(1|e_H,1-e_H)+(1-P_H(1|e_H,1-e_H))\frac{1-P_H(1|e_H,1-e_H)}{n-P_H(1|e_H,1-e_H)} = e_H.
\end{equation*}
Solving the above expression for $e_H$ yields 
\begin{equation*}
e_H = \frac{1+\alpha-\sqrt{(1-\alpha)(5-\alpha)}}{2(2\alpha-1)}.
\end{equation*}
This is increasing in $\alpha$ as
\begin{equation*}
\frac{\partial }{\partial \alpha}e_H = \frac{7-5a-3\sqrt{(1-a)(5-a)}}{2(2a-1)^2 \sqrt{(1-a)(5-a)}},
\end{equation*}
which is positive since $5\alpha + 3 \sqrt{(1-\alpha)(5-\alpha)}$ is maximized for $\alpha=1/2$, for which it equals $7$. For $\alpha>1/2$, the above expression is thus strictly positive.

We depict the mass of employed high-value workers in steady state as a function of the degree of value-homophily in Figure \ref{fig:correlated_values}.
\begin{figure}[!ht]
\centering
\includegraphics[]{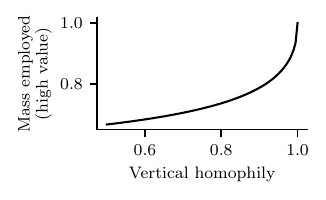}
\caption{The parameter values are: $n_H=n_L=1$ so that $n=2$.
We plot the mass of high-value workers employed in steady state as a function of the degree of value-homophily, $\alpha$.}
\label{fig:correlated_values}
\end{figure}
We remark that value-homophily increases productivity, since in this example the mass of high valued workers translates directly into the overall productivity.

\subsection{Constrained Efficiency}\label{sec:constrained_efficiency}
It is clear that it is without loss to constrain the social planner to use a threshold hiring strategy to maximize total production, the sum of the values of employed workers and the outside options of unemployed workers. Consider a social planner who freely chooses the threshold $\tilde{v}$, instructs a fraction $r$ of firms with a referred worker with value exactly equal to $\tilde{v}$ to hire that worker, and decides whether firms hire from the pool.

\begin{lemma}\label{lemma:constrained_efficiency}
    Suppose the outside option of workers is equal to $\ubar{w}$. Then the social planner maximizes total production by choosing $\tilde{v}$ equal to the unique equilibrium threshold given in Lemma \ref{eqw} and $r$ arbitrarily. 
\end{lemma}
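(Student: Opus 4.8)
The plan is to reduce the planner's problem to a one-dimensional optimization over the threshold by means of an accounting identity, and then to feed in the single-peakedness of production recorded in Lemma~\ref{lemma:monotonicity}. Since the statement grants that a threshold strategy is without loss, I first isolate the key identity. Whether or not firms hire from the pool, the unemployed are exactly the pool workers who go undrawn, and—because draws from the pool are uniform—their average value equals $\E_{\tilde{v},r}[v_i|i\in\pool]$. Writing $U$ for the mass of unemployed and $H\le 1$ for the mass hired on the referral market, the total value of all workers splits as (value of the employed) $+\,U\,\E_{\tilde{v},r}[v_i|i\in\pool]$, while total production is (value of the employed) $+\,U\,w_{min}$. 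Subtracting gives total production $= n\E[v_i] + U\bigl(w_{min}-\E_{\tilde{v},r}[v_i|i\in\pool]\bigr)$, with $U=n-1$ if firms hire from the pool and $U=n-H$ otherwise.

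Given a fixed threshold, the optimal pool decision is then immediate: since $n-H\ge n-1\ge 0$, one wants $U$ large when $w_{min}-\E_{\tilde{v},r}[v_i|i\in\pool]>0$ and $U$ small when it is negative, so the planner hires from the pool if and only if $\E_{\tilde{v},r}[v_i|i\in\pool]\ge w_{min}$—precisely the equilibrium rule. It then remains to optimize over $\tilde{v}$, and here I split on whether the equilibrium threshold $\tilde{v}^\ast$ has firms hiring from the pool. In the first case $\tilde{v}^\ast=\E_{\tilde{v}^\ast,r}[v_i|i\in\pool]\ge w_{min}$. By Lemma~\ref{lemma:monotonicity} the pool value first falls and then rises in the threshold, attaining its global minimum $\tilde{v}^\ast$, so $\E_{\tilde{v},r}[v_i|i\in\pool]\ge \tilde{v}^\ast\ge w_{min}$ at every threshold; hence the planner always hires from the pool, $U=n-1$, and production $n\E[v_i]+(n-1)\bigl(w_{min}-\E_{\tilde{v},r}[v_i|i\in\pool]\bigr)$ is maximized by minimizing the pool value, i.e.\ at $\tilde{v}^\ast$.

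In the second case $\tilde{v}^\ast=w_{min}>\E_{\tilde{v}^\ast,r}[v_i|i\in\pool]$. Any threshold with pool value exceeding $w_{min}$ yields production at most $n\E[v_i]$, whereas $\tilde{v}^\ast$ yields strictly more, since $w_{min}-\E_{\tilde{v}^\ast,r}[v_i|i\in\pool]>0$ and $n-H>0$ (the pool is nonempty as $P(0)>0$). For the remaining thresholds the planner does not hire from the pool, and a marginal computation shows no-pool production has derivative proportional to $(w_{min}-\tilde{v})$, hence is decreasing for $\tilde{v}\ge w_{min}$—raising the threshold fires a referral worker whose value exceeds $w_{min}$ and replaces her output by her outside option—while thresholds below $w_{min}$ are dominated for the same reason (for discrete $F$ the same conclusion follows from a telescoping comparison). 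Thus the no-pool optimum is again $\tilde{v}^\ast=w_{min}$. Finally, mixing is irrelevant: at $\tilde{v}^\ast$ a worker of value exactly $\tilde{v}^\ast$ contributes identically to production whether counted as a referral hire or left in the pool—her value equals $\E_{\tilde{v}^\ast,r}[v_i|i\in\pool]=\tilde{v}^\ast$ in the first case and $w_{min}$ in the second—so varying $r$ leaves the pool value, and hence production, unchanged.

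The main obstacle is the regime-switching in the pool-hiring decision: the clean ``minimize the pool value'' logic applies only when firms hire from the pool (the first case), and in the second case one must separately dispose of both the alternative pool-hiring regime and the entire family of no-pool thresholds. This is exactly where the shape result of Lemma~\ref{lemma:monotonicity} and the marginal monotonicity of no-pool production do the work; the rest is bookkeeping built on the accounting identity above.
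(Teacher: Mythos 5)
Your proof is correct and follows essentially the same route as the paper's: the same accounting identity reducing total production to $n\E[v_i]$ plus the unemployed mass times $\bigl(w_{min}-\E_{\tilde{v},r}[v_i\,|\,i\in\pool]\bigr)$, the same case split on whether firms hire from the pool, and the same key fact that the unique equilibrium threshold minimizes the expected value in the pool (which the paper re-derives verbally where you cite Lemma~\ref{lemma:monotonicity}). Your marginal argument for the no-pool optimum, the explicit cross-regime comparisons, and the $r$-irrelevance check are more carefully spelled out than in the paper's terser proof, but they constitute a more detailed execution of the same argument rather than a different one.
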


\begin{proof}[Proof of Lemma~\ref{lemma:constrained_efficiency}]
    Note that the choice of firms to hire from the pool or not is efficient in equilibrium. Suppose that it is productivity maximizing for the social planner to choose a hiring threshold so that firms optimally do not hire from the pool. Clearly, this threshold to maximize productivity is $\ubar{w}$; and further this is only productivity maximizing if $E_{\tilde{v},r}E_{v_i}|i \in \pool] \leq \ubar{w}$. But then $\tilde{v},r$ would solve \eqref{eqw_con} and so constitute the unique equilibrium threshold.
    
    Now suppose that it is productivity maximizing for the social planner to choose a hiring threshold so that firms optimally hire from the pool. Clearly, the social planner chooses $\tilde{v}\geq \ubar{w}$ and hiring from the pool is only optimal if further $E_{\tilde{v},r}E_{v_i}|i \in \pool] \geq \ubar{w}$. In this case, as 
    $$
        E[v_i]=\frac{1}{n}E_{\tilde{v},r}[v_i|i \text{ is employed}]+\frac{n-1}{n}E_{\tilde{v},r}[v_i|i\text{ is unememployed}]
    $$
    and $E_{\tilde{v},r}[v_i|i\text{ is unememployed}]=E_{\tilde{v},r}[v_i|i\in \pool]$, total production is maximized when the expected value in the pool is minimized. But when the  this is exactly the equilibrium condition in \eqref{eqw_con} when in equilibrium firms hire from the pool: Referred workers with values below the expected value in the pool are entering the pool, while those with values above are hired, thus minimizing the expected value in the pool. 
\end{proof}

\subsection{Costly skill investment}\label{sec:online_skill}
\begin{proposition}\label{prop:inv}
For every $v_L,v_H,\rho$, with $\rho<1$, 
\begin{itemize}
    \item if $c$ is small enough, then, in the unique steady state, every worker invests;
    \item if $c$ is large enough, then, in the unique steady state, no worker invests; and
    \item there exists $\ubar{c},\bar{c}$, with $\ubar{c}<\bar{c}$, such that for all $c \in (\ubar{c},\bar{c})$, there is a steady state with employment bias and in which only blue workers invest.
\end{itemize}
\end{proposition}

\subsection{Allocating Firms' Profits in the Gini Comparative Statics}\label{online-appendix:results}

Our analysis of the Gini coefficient  focused on wages. Firms in this economy are earning profits since they can hire workers at $\ubar{w}$ and earn a higher expected value. Accounting for who gets those profits as income can affect the inequality calculations.

Firms earn the same profit from workers they compete (those with high values and $P(2+)$) as hiring a worker from the pool. This means that expected profits take a simple form (here presuming that the expected value from the pool exceeds the minimum wage): Profits are:
\begin{equation}
\label{profits}
	\Pi =   nP(1)f_H v_H + (1- P(1)f_H){\tilde{v}} -\ubar{w}.
\end{equation}
So, firms' profits depend entirely on how many people get just one referral, as well as what the threshold is and hence the lemons effect. Thus, firms' profits can be fully characterized if we know $P(0)$ and $P(2+)$ (and hence $P(1)$).

To provide some simple intuition, let us consider a distribution for which workers get either $0,1$ or $k$ referrals. Together with the assumption that the total number of referrals is constant equaling $1$, this implies that $P$ is completely characterized by one parameter: $P(k)$. In that case, simple but tedious calculations show that $\frac {\partial \Pi}{\partial P(k)}=- \frac{f_Lf_H(v_H-v_L)(f_L+(n-1)k)}{(P(0)+(1-P(0))f_L)^2}<0$. This is not obvious, since once again the lemons effect has a counteracting force, but in this situation the derivative can be unambiguously signed.\footnote{The weaker assumption that an increase in the probability of getting multiple referrals implies an increase in the probability of not getting a referral is not sufficient to sign the effect on profit.}

This means that if profits are distributed uniformly across all workers, then concentrating referrals (an increase in $P(k)$ which now corresponds to both an increase in the probability of not getting a referral and multiple referrals) decreases profits, and so decreases income uniformly across all workers which increases the Gini coefficient and thus inequality.\footnote{An easy way to see this is to note that the Gini decreases in $W_L$ (see equation (\ref{giniexp})) and so lowering the level of all incomes decreases $W_L$, thereby increasing the Gini.} Thus, compared to Proposition~\ref{GiniChange}, this is another pressure increasing inequality. If profits instead go to some special class of citizens who are owners of the firms, then it depends on who they are, and so then the further effects are ambiguous.

\end{document}